\newcommand{\p}{\partial}
\newcommand{\const}{\mathop{\rm const}\nolimits}
\newcommand{\Equiv}{\mathop{ \sim}}
\newcommand{\sign}{\mathop{\rm sign}\nolimits}
\newcommand{\thetbn}{\arabic{nomer}}
\newcommand{\prho}{p}
\newcommand{\todo}[1][\null]{\ensuremath{\clubsuit}}
\newcommand{\checkit}[2][\null]{\ensuremath{\clubsuit}\marginpar{\raggedright\mathversion{bold}\footnotesize\bf #1}#2\ensuremath{\clubsuit}}
\newcommand{\noprint}[1]{}
\newcounter{tbn}
\newcounter{casetran}
\newcounter{mcasenum}
\newtheorem{theorem}{Theorem}
\newtheorem{lemma}{Lemma}
\theoremstyle{definition}
\newtheorem{note}{Note}
\begin{document}

\par\noindent {\LARGE\bf
Group Analysis of
Variable Coefficient\\ Diffusion--Convection Equations.\\ I. Enhanced Group Classification
\par}
{\vspace{4mm}\par\noindent {\bf N.M. Ivanova~$^\dag$, R.O. Popovych~$^\ddag$ and C. Sophocleous~$^\S$
} \par\vspace{2mm}\par}
{\vspace{2mm}\par\noindent {\it
$^\dag{}^\ddag$~Institute of Mathematics of NAS of Ukraine, 
3 Tereshchenkivska Str., 01601 Kyiv, Ukraine\\
}}
{\noindent \vspace{2mm}{\it
$\phantom{^\dag{}^\ddag}$~e-mail: ivanova@imath.kiev.ua, rop@imath.kiev.ua
}\par}

{\par\noindent\vspace{2mm} {\it
$^\ddag$~Fakult\"at f\"ur Mathematik, Universit\"at Wien, Nordbergstra{\ss}e 15, A-1090 Wien, Austria
} \par}

{\vspace{2mm}\par\noindent {\it
$^\dag{}^\S$~Department of Mathematics and Statistics, University of Cyprus,
CY 1678 Nicosia, Cyprus\\
}}
{\noindent {\it
$\phantom{^\dag{}^\S}$~e-mail: christod@ucy.ac.cy
} \par}

{\vspace{7mm}\par\noindent\hspace*{8mm}\parbox{140mm}{\small
We discuss the classical statement of group classification problem
and some its extensions in the general case. After that,
we carry out the complete extended group classification for a class of
(1+1)-dimensional nonlinear diffusion--convection equations
with coefficients depending on the space variable.
At first, we construct the usual equivalence group and the extended one including
transformations which are nonlocal with respect to arbitrary elements.
The extended equivalence group has interesting structure since it contains a non-trivial subgroup of
non-local gauge equivalence transformations.
The complete group classification of the class under consideration is carried out
with respect to the extended equivalence group
and with respect to the set of all point transformations.
Usage of extended equivalence and correct choice of gauges of arbitrary elements
play the major role for simple and clear formulation of the final results.
The set of admissible transformations of this class is preliminary investigated.
}\par\vspace{7mm}}

\section{Introduction}

In the present series of papers we perform extended symmetry analysis of a class of variable coefficient
nonlinear diffusion--convection equations of the general form
\begin{equation} \label{eqDKfgh}
f(x)u_t=(g(x)A(u)u_x)_x+h(x)B(u)u_x,
\end{equation}
where $f=f(x),$ $g=g(x),$ $h=h(x),$ $A=A(u)$ and $B=B(u)$ are arbitrary smooth functions of their variables,
$fghA\ne0$ and $(A_u,B_u)\ne(0,0)$.
If $B=0$, it is convenient to assume $h=1$ for determinacy.
Different kinds of properties and objects related to the framework of symmetry approach such as
Lie, nonclassical and potential symmetries, equivalence groups, admissible transformations, exact solutions and
conservation laws are investigated for these equations.

Equations from class~(\ref{eqDKfgh}) can be used to model a wide variety of phenomena in physics,
chemistry, mathematical biology etc.
Constant coefficient equations of the form~(\ref{eqDKfgh}), often called the Richard's equations, describe
vertical one-dimensional transport of water in a homogeneous non-deformable porous media.
This equation arises naturally in certain physical applications.
Thus, for example, superdiffusivities of this type have been proposed in~\cite{Gennes1983}
as a model for long-range Van der Waals interactions in thin
films spreading on solid surfaces. Equation~(\ref{eqDKfgh}) also appears in the study of
cellular automata and interacting particle systems with self-organized criticality
(see \cite{Chayes&Osher&Ralston1993} and references therein).
It is a model of water flow in unsaturated soil~\cite{Richard's1931}.
When $B(u)=0$ equation~(\ref{eqDKfgh})
describes stationary motion of a boundary layer of a fluid over a flat
plate, vortex of an incompressible fluid in porous medium for
polytropic relations of gas density and pressure~\cite{Barenblatt1952}.
The outstanding representative of the class of equations (\ref{eqDKfgh}) is the Burgers
equation that is the mathematical model of a large number of physical phenomena.

The Lie symmetry transformations of linear equations of form~(\ref{eqDKfgh}) ($A,B=\const$)
were determined by Lie~\cite{Lie1881}
in his classification of linear second-order PDEs with two independent variables.
(See also a modern treatment of this subject in~\cite{Ovsiannikov1982}.)
Investigation of the nonlinear equations from class~(\ref{eqDKfgh})
by means of symmetry methods were started in 1959 with the paper~\cite{Ovsiannikov1959} by Ovsiannikov
where he studied Lie symmetries of the nonlinear diffusion equations $u_t=(A(u)u_x)_x$.
The next considered class~\cite{Katkov1965} covers the generalized Burgers equations $u_t=u_{xx}+B(u)u_x$.
Dorodnitsyn~\cite{Dorodnitsyn1982} carried out the group classification of the class $u_t=(A(u)u_x)_x+C(u)$.
Akhatov, Gazizov and Ibragimov~\cite{Akhatov&Gazizov&Ibragimov1987} classified the equations $u_t=A(u_x)u_{xx}$.
This latter classification gave quasi-local symmetries of the nonlinear diffusion equations $u_t=(A(u)u_x)_x$.
A number of authors~\cite{Oron&Rosenau1986,Edwards1994,Yung&Verburg&Baveye1994}
investigated Lie symmetries of the class of the constant-coefficient diffusion--convection equations
\[
u_t=(A(u)u_x)_x+B(u)u_x.
\]
However, its complete and strong group classification was obtained only recently~\cite{Popovych&Ivanova2004NVCDCEs}.
M.L.~Gandarias~\cite{Gandarias1996} supplemented this list considering Lie symmetries of
the variable-coefficient equation
$
u_t=(u^n)_{xx}+g(x)u^m+f(x)u^su_x.
$
These  results 
were generalized in~\cite{Cherniga&Serov1998}, where Lie symmetries of nonlinear reaction--diffusion equations
with convection term $u_t=(A(u)u_x)_x+B(u)u_x+C(u)$ were studied.
The complete group classification of the variable-coefficient diffusion--con\-vec\-tion equations
\[
f(x)u_t=(g(x)A(u)u_x)_x+B(u)u_x,
\]
was carried out in~\cite{Popovych&Ivanova2004NVCDCEs}.
In the recent papers~\cite{Vaneeva&Johnpillai&Popovych&Sophocleous2006,Vaneeva&Popovych&Sophocleous2009}
the complete group classification of the class of reaction--diffusion equations $f(x)u_t=(g(x)u^nu_x)_x+h(x)u^m$
were presented.
In~\cite{Ivanova&Sophocleous2006} we classified Lie symmetries of equations from class~\eqref{eqDKfgh}
with respect to its usual equivalence group.
It should be noted that all the above mentioned equations are particular
cases of the more general class of equations
\[
u_t=F(t,x,u,u_x)u_{xx}+G(t,x,u,u_x),
\]
that was classified in~\cite{Basarab-Horwath&Lahno&Zhdanov2001}.
However, since the equivalence group of the latter class
is essentially wider than those for the considered subclasses,
the results of \cite{Basarab-Horwath&Lahno&Zhdanov2001}
cannot be directly used to symmetry classification of the above mentioned equations.
Nevertheless, these results are useful to find additional equivalence transformations for the above classes.

Equations of form~(\ref{eqDKfgh}) have been also investigated from points of view of other kinds of symmetries.
For instance, potential symmetries of subclasses of~(\ref{eqDKfgh}), where e.g.\ either $f=g=h=1$ or $B=0$,
were studied in \cite{Bluman&Kumei1989,Bluman&Reid&Kumei1988,Ivanova&Popovych&Sophocleous&Vaneeva2009,
Popovych&Ivanova2003PETs,Sophocleous1996,Sophocleous2000,Sophocleous2003}.
Conditional symmetries (usual and potential ones) for some subclasses of~(\ref{eqDKfgh}), in particular for the fast diffusion equation,
were constructed in~\cite{Bluman&Yan2005,Gandarias2001,Popovych&Vaneeva&Ivanova2005}.

Study of group classification problems is interesting not only from the purely
mathematical point of view, but is also important for applications~\cite{Ovsiannikov1982}.\
Physical models are often constrained with a priori requirements to symmetry properties
following from physical laws, for example, from the Galilean or special relativity principals.\
Moreover, modelling differential equations could contain parameters or functions
which have been found experimentally and so are not strictly fixed.\
(These parameters and functions are called arbitrary elements.)\
At the same time, mathematical models should be enough simple to analyze and solve them.
Symmetry approach allows us to take the following relevancy criterion for choosing parameter values.
Modelling differential equations have to admit a symmetry group with certain properties
or the most extensive symmetry group from the possible ones.\
This directly leads to necessity of solving group classification problems.

The group classification in a class of differential equations is reduced to integration of a  complicated overdetermined system
of partial differential equations with respect to both coefficients of infinitesimal symmetry operators and arbitrary elements.
This is why it is a much more complicated problem than finding the Lie symmetry group of a single system of differential equation.
Whereas programs for solving the latter problem had been created for the most of existing symbolic calculations packages, the significant progress
in computer realization of the group classification algorithm was achieved only recently~\cite{Wittkopf2004PhD}.
There were a lot of attempts in computation of Lie symmetries of differential equations using different systems of computer algebras,
such as MATHEMATICA, MAPLE, MACSYMA, REDUCE, AXIOM, MuPAD etc.
Different symbolic calculation packages~\cite{Cheviakov2006,Dimas&Tsubelis2004,Head1993,Wolf1992}
(see also detailed review in~\cite{Hereman1994,Hereman1997})
created within the above mentioned systems are able to construct determining equations for symmetry operators
and integrate them in simple cases.
However, the existing packages have a number of essential disadvantages
(e.g., restrictions on nonlinearities) and cannot perform the correct and exhaustive group classification
in quite complicated classes of differential equations.

In general, problems of group classification, except for really trivial cases, are very difficult.
This can be illustrated by a multitude of papers
where classification problems are solved incorrectly or incompletely.
There are also many papers on ``preliminary group classification'' where authors list some cases
of Lie symmetry extension but do not claim that classification problems are solved completely.
For this reason finding an effective approach to simplification of a group classification problem
is essentially equivalent to feasibility of solving the problem at all.

The ultimate goal of the first part of this paper series is to present a modern treatment of group classification
and to give an example of exhaustive solution of a group classification problem in quite a difficult case.
After discussing this problem and some its possible extensions in the general setting,
we carry out the enhanced group classification for class~\eqref{eqDKfgh}.
Different original tools are applied that allows us to solve the problem completely.

Thus, we construct both the usual equivalence group~$G^{\sim}$ and the \emph{extended} one~$\hat G^{\sim}$
including transformations which are nonlocal with respect to arbitrary elements.
The structure of the extended equivalence group~$\hat G^{\sim}$ is investigated.
It has a non-trivial subgroup of (nonlocal) \emph{gauge equivalence transformations}.
Acting by these transformations is equivalent to rewriting equations in another form.
In spite of real equivalence transformations, a role of gauge ones in group classification
comes not to choice of representatives in sets of equivalent equations but to choice of form of these representatives.
Application of the nonlocal gauge transformations is important under solving the group classification problem
in class~\eqref{eqDKfgh}.
Moreover, it seems practically impossible to obtain closed classification results for the class under consideration
without using `unusual' transformations from the extended equivalence group.
This conclusion was made after the preliminary group classification of class~\eqref{eqDKfgh}
with respect to the usual equivalence group~$G^{\sim}$~\cite{Ivanova&Popovych&Sophocleous2004}.
The corresponding list of equations is not optimal since for some equations
it includes a number of their different representatives.

Another substantial improvement proposed in this paper to group classification methods is
the \emph{technique of variable gauges} of arbitrary elements by equivalence transformations.
Two gauges $g=1$ and $g=h$ are used simultaneously.
Both of them are natural generalizations of the gauge $g=1$
effectively applied in~\cite{Popovych&Ivanova2004NVCDCEs} for the subclass $h=1$ of~\eqref{eqDKfgh}
although the gauge $g=h$ seems more complicated than the gauge $g=1$.
Indeed, we present complete group classifications in both the gauges to compare the obtained results.
The comparison shows that the best way for gauging arbitrary elements of class~\eqref{eqDKfgh} is
to combine the gauges depending on classification cases.
For optimality of presentation, the classification list enhanced with the variable gauge will be
given in the second part of the series~\cite{Ivanova&Popovych&Sophocleous2006Part2}
where it will be utilized for the construction of exact solutions of equations from class~\eqref{eqDKfgh}
via Lie reductions.

The method of \emph{furcate split} is also used under the classification.
It is a simple and effective tool based on a specific way of working with classifying conditions without
their complete integration.
First the method was proposed in~\cite{Nikitin&Popovych2001} and then
applied in different modifications to solve a number of group classification problems
\cite{Boyko&Popovych2001,Ivanova&Sophocleous2006,Popovych&Cherniha2001,Popovych&Ivanova2004NVCDCEs,Vasilenko&Yehorchenko2001}.
Its application is not crucial for the classification problem of class~\eqref{eqDKfgh} but
the achieved simplifications of the proof are considerable.

The last but not least tool is the regular usage of \emph{additional equivalences} between cases
of Lie symmetry extension, which are not generated by transformations from the (extended) equivalence group.
Nontrivial additional equivalence transformations were first constructed in~\cite{Kurdyumov&Posashkov&Sinilo1989}
for special cases of (constant coefficient) nonlinear reaction--diffusion equations.
(The~group classification of these equations was carried out in~\cite{Dorodnitsyn1982}.
See also~\cite[Chapter~10]{Ibragimov1994V1}.)
In this paper all the additional equivalences in class~\eqref{eqDKfgh} are constructed and explicitly presented.
The usage of these equivalences gives an auxiliary test for the classification results and
simplifies further application of them, e.g., to finding exact solutions.
Additional equivalences are naturally embedded in the framework of \emph{admissible transformations}
that gives a constructive way for the exhaustive description of them.
The structure of the set of admissible transformations of class~\eqref{eqDKfgh} is too complicated.
In this paper only preliminary results about this set, sufficient for the description of additional equivalences,
are presented.

The rest of the paper is organized as follows.
First of all (Section~\ref{SectionOnGrClMethod})
we describe the general formulation of group classification problems and an algorithm of their solution.
In Section~\ref{SectionAdmisTransfTheory} we adduce some basic notions on admissible transformations
and different generalizations of equivalence groups of classes of differential equations.
Then (Section~\ref{SectionOnEquivTransOfEqDKfgh}) the complete group of usual equivalence transformations
for class~\eqref{eqDKfgh} and the extended one including transformations which are nonlocal
with respect to arbitrary elements are found.
Taking the non-trivial subgroup of gauge equivalence transformations into account strongly
simplifies the solution of the group classification problem.
The results of group classification for class~\eqref{eqDKfgh} in two different gauges
($g=1$ and $g=h$) are presented and analyzed in Section~\ref{SectionOnGrClOfDCEfgh}.
We note that for both gauges two essentially different classifications are presented:
the classification with respect to the (extended) equivalence group and
the classification with respect to all possible point transformations.
The sketch of the proof of the obtained results is given in Section~\ref{SectionProofOfGrCl}.
Section~\ref{SectionAdmisTransf} is devoted to the preliminary investigation of admissible transformations
for equations from class~\eqref{eqDKfgh}.

The derived results form a basis of the further consideration of class~\eqref{eqDKfgh} from the symmetry point of view.
An important notion of contractions of differential equations will be introduced and developed
in the second part~\cite{Ivanova&Popovych&Sophocleous2006Part2} of the series.
Contractions between cases of Lie symmetry extensions in class~\eqref{eqDKfgh} will be found.
Lie and non-Lie reductions and exact solutions of equations from these class also will be constructed.
In the third part~\cite{Ivanova&Popovych&Sophocleous2006Part3} we study the local and potential conservation
laws of equations from class~\eqref{eqDKfgh} and find all possible inequivalent potential systems of equations from class~\eqref{eqDKfgh}.
The ``contraction concept" will be further generalized via introducing the notion of contractions of conservation laws.
In the last, fourth, part~\cite{Ivanova&Popovych&Sophocleous2006Part4} of the series we investigate symmetry properties of potential systems
associated with equations~\eqref{eqDKfgh}
and describe connections between the potential and point symmetries of diffusion--convection equations.

\section{Classical algorithm of group classification and\\ additional equivalences}\label{SectionOnGrClMethod}

Group classification is one of symmetry tools used to choose physically relevant models
from parametric classes of systems of (ordinary or partial) differential equations.
The parameters can be constants or functions of independent variables, unknown functions and their derivatives.

Since in the literature there exist different points of view on the framework of group classification,
below we give a modern exposition of the classical Lie--Ovsiannikov approach~\cite{Ovsiannikov1982} in detail and then
briefly formulate its possible modifications with extensions of equivalences under consideration or
with investigation of wider classes of symmetries.

Consider the class~$\mathcal L|_{\mathcal S}$ of systems~$\mathcal L^\theta$: $L(x,u_{(\prho)},\theta(x,u_{(\prho)}))=0$
of $l$~differential equations
for $m$~unknown functions $u=(u^1,\ldots,u^m)$
of $n$~independent variables $x=(x_1,\ldots,x_n)$,
which is parameterized with the functions~$\theta=\theta(x,u_{(\prho)})$.
Here $u_{(\prho)}$ denotes the set of all the derivatives of~$u$ with respect to $x$
of order not greater than~$\prho$, including $u$ as the derivatives of the zero order.
$L=(L^1,\ldots,L^l)$ is a tuple of $l$ fixed functions depending on $x,$ $u_{(\prho)}$ and $\theta$.
$\theta$ denotes the tuple of arbitrary (parametric) functions
$\theta(x,u_{(n)})=(\theta^1(x,u_{(\prho)}),\ldots,\theta^k(x,u_{(\prho)}))$
running through the set~${\cal S}$ of solutions of the system
$S(x,u_{(\prho)},\theta_{(q)}(x,u_{(\prho)}))=0.$
This system consists of differential equations with respect to $\theta$,
where $x$ and $u_{(\prho)}$ play the role of independent variables
and $\theta_{(q)}$ stands for the set of all the partial derivatives of $\theta$ of order not greater than $q$.
Usually the set $\mathcal S$ is additionally constrained by the non-vanish condition
$\Sigma(x,u_{( p)},\theta_{(q)}(x,u_{( p)}))\ne0$ with a differential function~$\Sigma$.
In what follows we call the functions $\theta$ {\it arbitrary elements} of the class~$\mathcal L|_{\mathcal S}$.

Let \smash{$\mathcal L^\theta_{(k)}$}
denote the set of all algebraically independent differential consequences
of the system~$\mathcal L^\theta$ that have, as differential equations, orders not greater than $k$.
Under the local approach, the system~\smash{$\mathcal L^\theta_{(k)}$} is identified with the manifold
determined by~\smash{$\mathcal L^\theta_{(k)}$} in the jet space~$J^{(k)}$.

Each one-parametric group of point transformations
that leaves the system~$\mathcal L^\theta$ invariant corresponds to an
infinitesimal symmetry operator of the form
\[
Q=\xi^i(x,u)\p_{x_i}+\eta^a(x,u)\p_{u^a}.
\]
Here and below the summation over the repeated indices is assumed.
The indices~$i$ and~$a$ run from~1 to~$n$ and from~1 to~$m$, respectively.

The complete set of such groups generates the \emph{maximal Lie invariance} (or \emph{principal}) \emph{group}
$G^\theta=G^{\max}(\mathcal L^\theta)$ of the system~$\mathcal L^\theta$.
The group $G^\theta$ is the connected component of the unity in the group $\bar G^\theta$ of all point symmetry
transformations of~$\mathcal L^\theta$.
A subgroup $G^\theta_\mathrm{d}$ of $\bar G^\theta$, which can be canonically identified with
$\bar G^\theta/G^\theta$, is called the group of \emph{discrete} symmetries in contrast to the group $G^\theta$ of
\emph{continuous} symmetries.
The \emph{maximal Lie invariance} (or \emph{principal}) \emph{algebra} $A^\theta=A^{\max}(\mathcal L^\theta)$
formed by the infinitesimal symmetry operators of~$\mathcal L^\theta$ is the Lie algebra of the principal group $G^\theta$.

The \emph{infinitesimal invariance criterion} of the system~$\mathcal L^\theta$
with respect to the Lie symmetry operator~$Q$ has the form~\cite{Ovsiannikov1982,Olver1986},
\[
Q_{(p)} L(x,u_{(\prho)},\theta(x,u_{(\prho)}))\big|_{\mathcal L^\theta_{(p)}}=0, \qquad
Q_{(p)}:=Q+\sum_{0<|\alpha|{}\leqslant  p} \eta^{a\alpha}\p_{u^a_\alpha},
\]
i.e., the result of acting by $Q_{(p)}$ on $L$ vanishes on the manifold~$\mathcal L^\theta_{(p)}$.
Here $Q_{(p)}$ denotes the standard $p$-th prolongation of the operator~$Q$,
$\eta^{a\alpha}=D_1^{\alpha_1}\ldots D_n^{\alpha_n}Q[u^a]+\xi^iu^a_{\alpha,i}$,
$ D_i=\p_i+u^a_{\alpha,i}\p_{u^a_\alpha}$
is the operator of total differentiation with respect to the variable~$x_i$,
$Q[u^a]=\eta^a(x,u)-\xi^i(x,u)u^a_i$ is the characteristic of operator $Q$, associated with~$u^a$.
The tuple $\alpha=(\alpha_1,\ldots,\alpha_n)$ is a multi-index,
$\alpha_i\in\mathbb{N}\cup\{0\}$, $|\alpha|\mbox{:}=\alpha_1+\cdots+\alpha_n$.
The variables $u^a_\alpha$ and $u^a_{\alpha,i}$ of the jet space $J^{(r)}$ correspond to the derivatives
\[
\frac{\p^{|\alpha|}u^a}{\p x_1^{\alpha_1}\ldots\p x_n^{\alpha_n}}
\quad\mbox{and}\quad
\frac{\p^{|\alpha|+1}u^a}
{\p x_1^{\alpha_1}\ldots\p x_{i-1}^{\alpha_{i-1}}\p x_i^{\alpha_i+1}\p x_{i+1}^{\alpha_{i+1}}\ldots\p x_n^{\alpha_n}}.
\]

The \emph{kernel} of principal groups of the class~$\mathcal L|_{\mathcal S}$ is the group
$G^{\cap}=G^{\cap}(\mathcal L|_{\mathcal S})=\bigcap_{\theta\in{\cal S}}G^\theta$.
Its~Lie algebra is the \emph{kernel of principal algebras} (or, the kernel algebra)
$A^\cap=A^\cap(\mathcal L|_{\mathcal S})=\bigcap_{\theta\in{\cal S}}A^\theta$.

The group of point transformations in the space of the variables $(x,u,\theta)$,
which preserve the form of the systems from~$\mathcal L|_{\mathcal S}$
(i.e., transforming any system from~$\mathcal L|_{\mathcal S}$ to the system from the same class)
is called the \emph{equivalence group} of the class~$\mathcal L|_{\mathcal S}$ and is denoted by
$G^{\Equiv}=G^{\Equiv}(\mathcal L|_{\mathcal S}).$
The conventional direct method is the most effective for calculating the entire equivalence group.
Sometimes a subgroup of~$G^{\Equiv}$ is considered instead of the complete (point) equivalence group.
For example, it can be the group of continuous equivalence transformations
which can be calculated by the simpler infinitesimal method adopted to the case of equivalence transformations.

Different extensions of the standard notion of equivalence group are also possible~\cite{Popovych2006}.
Usually one considers only transformations
being projectible on the space of the variables~$x$ and $u$.
Let us remind that
the point transformation~$\varphi$: $\tilde z=\varphi(z)$ in the space of the variables~$z=(z_1,\ldots,z_k)$
is called {\em projectible} on the space of variables~$z'=(z_{i_1},\ldots,z_{i_{k'}})$, where $1\le i_1<\cdots<i_{k'}\le k$,
if the expressions for~$\tilde z'$ depend only on~$z'$.
We denote the restriction of~$\varphi$ on the space of~$z'$ as $\varphi|_{z'}$: $\tilde z'=\varphi|_{z'}(z')$.
If the arbitrary elements~$\theta$ explicitly depend on $x$ and $u$ only (one always can do it,
assuming derivatives as new dependent variables), we may consider
the generalized equivalence group~$G^{\Equiv}_{\rm gen}(\mathcal L|_{\mathcal S})$~\cite{Meleshko1994},
admitting dependence of transformations of~$(x,u)$ on~$\theta$.

Sometimes it is possible to consider other generalizations of equivalence groups, e.g.,
groups of transformations involving nonlocality with respect to arbitrary elements
(see~\cite{Ivanova&Popovych&Sophocleous2004} and Section~\ref{SectionOnEquivTransOfEqDKfgh} of the present paper).

The \emph{problem of group classification} is to find all possible
inequivalent cases of extensions of $A^{\max}$, i.e.,
to list all $G^{\Equiv}$-inequivalent  values of the arbitrary parameters~$\theta$
satisfying the condition $A^\theta\ne A^\cap$.
The exhaustive investigation of this problem in the classical statement involves
finding the kernel $A^\cap$ of the principal algebras, the construction of the equivalence group $G^{\Equiv}$
and the description of all possible $G^{\Equiv}$-inequivalent values of parameters~$\theta$ that admit
the principal algebras wider than $A^\cap$.
The corresponding procedure is presented in the form
of the following algorithm~\cite{Akhatov&Gazizov&Ibragimov1989,Ovsiannikov1982}.

\smallskip

{\it Step 1. Determining equations.}
The infinitesimal invariance criterion
implies a linear system of differential equations for coefficients of an arbitrary Lie symmetry operator~$Q$
of the system~$\mathcal L^\theta$.

This system is obviously parameterized with the arbitrary elements~$\theta$.
After splitting it with respect to the unconstrained variables (i.e., unconstrained derivatives of~$u$),
we may obtain that some of the determining equations do not contain arbitrary elements
and therefore can be integrated immediately. The others
(i.e., the equations containing arbitrary elements explicitly)
are called the {\em classifying equations}.
The main difficulty of group classification is the need to solve the classifying equations with
respect to the coefficients of the operator $Q$ and arbitrary elements simultaneously.

\smallskip

{\it Step 2. Kernel algebra.}
The decomposition of the determining equations
with respect to all the unconstrained derivatives of arbitrary elements results in
a linear system of partial differential equations for coefficients of the
infinitesimal operator $Q$ only. Solving this system yields the algebra $A^\cap$.

\smallskip

{\it Step 3. Equivalence group.}
In order to construct the equivalence group $G^{\Equiv}$ of the class~$\mathcal L|_S$,
we have to investigate the point symmetry transformations
of the united system~$\mathcal L^\theta$ and~$\mathcal S$, considering it as a system
of partial differential equations with respect to $\theta$ with the
independent variables $x$ and $u_{(p)}$.
The transformation components for $k$th order derivatives of~$u$ are obtained by $k$th order prolongation
of the transformation components for~$u$.
After restricting ourselves in studying the connected component of unity in $G^{\Equiv},$
we can use the Lie infinitesimal method. To find the complete  equivalence group
(including discrete transformations), we have to use the more complicated direct method.

\smallskip

{\it Step 4. Symmetry extensions.}
If $A^{\theta}$ is an extension of $A^\cap$
(i.e., $A^\theta\ne A^\cap$) then
the classifying equations define a system of nontrivial equations for~$\theta$.
Depending on their form and number, we obtain different cases of extensions of~$A^\cap$.
To integrate completely the determining equations, we have to investigate a large number of such cases.
In order to avoid a cumbersome enumeration of possibilities in solving the determining equations,
we can use, for instance, algebraic methods
\cite{Basarab-Horwath&Lahno&Zhdanov2001,Gagnon1989c,Lahno&Spichak&Stognii2002,Zhdanov&Lahno1999},
a method which involves the investigation of compatibility of the classifying
equations \cite{Boyko&Popovych2001,Nikitin&Popovych2001,Popovych&Cherniha2001,Popovych&Ivanova2004NVCDCEs,Vasilenko&Yehorchenko2001}
or combined methods, in particular, methods based on the (strong) normalization property of
a class of systems \cite{Ivanova&Popovych&Eshraghi2005GammaNormalizedSerbia,Popovych2006,
Popovych&Eshraghi2004Norm,Popovych&Ivanova&Eshraghi2004Cubic,Popovych&Ivanova&Eshraghi2004Gamma}.

\smallskip

The latter step can be completed with calculating explicit conditions (namely, systems of differential equations)
on arbitrary elements, providing extensions of Lie symmetry.
(See, e.g., \cite{Borovskikh2004,Borovskikh2006} for examples of such calculations.)
In other words, the subsets of values of arbitrary elements, each of which is formed by the equations
equivalent to a listed inequivalent extension case, should be constructively described.

\medskip

The result of application of the above algorithm  is a list of equations with their
Lie invariance algebras.
The problem of group classification is assumed to be completely solved if
\begin{enumerate}[\it i\rm)]\itemsep=0ex
\item
the list contains all the possible inequivalent cases of extensions;
\item
all the equations from the list are mutually inequivalent with respect to
the transformations from $G^{\Equiv}$ (or the considered generalization of $G^{\Equiv}$);
\item
the obtained algebras are the maximal invariance algebras of the corresponding equations.
\end{enumerate}
The list may include equations being mutually equivalent with respect to
point transformations which do not belong to $G^{\Equiv}.$
Knowing such \emph{additional} equivalences allows ones to essentially simplify
further investigation of~$\mathcal L|_S$.
Constructing them can be considered as the fifth step of the algorithm of group classification.
Then, the above enumeration of requirements to the resulting list of classification can be completed
by the following item:
\begin{enumerate}[\it i\rm)]\setcounter{enumi}{3}
\item
all the possible additional equivalences between the listed equations are constructed in
an explicit form.
\end{enumerate}
A way for finding additional equivalences is based on the fact that similar equations
have similar maximal invariance algebras.
Another way is systematical study of all possible \emph{admissible} transformations.
The precise definition of such transformations is given in the next section.

Besides the above mentioned generalizations of the problem of group classification to classification of Lie symmetries
with respect to different generalized groups of equivalence transformations, one can consider classifications
of different kinds of symmetries (potential, contact, conditional etc.) with respect to different sets of transformations.
For example, in~\cite{Gazizov1987} Gazizov carried out the group classification of contact symmetries
of nonlinear potential filtration equations
with respect to the group of contact equivalence transformations.

Different approaches can be used for classifications of potential symmetries. The most natural one is the classification with
respect to the group of transformations which is (in some sense) a simple extension of the usual (generalized) point equivalence group
to the potential variables~\cite{Bluman&Kumei1989,Bluman&Reid&Kumei1988,Popovych&Ivanova2003PETs,Sophocleous1996,Sophocleous2000,Sophocleous2003}.
Another possibility is to additionally consider purely potential equivalence transformations~\cite{Popovych&Ivanova2003PETs}.

The most common classification problems for the nonclassical (conditional) symmetries~\cite{Bluman&Cole1969} is the one of the classification
of (systems of) differential equations admitting conditional symmetries and classification of conditional symmetries
of a single system with respect to its Lie symmetry group.
(See, e.g.,~\cite{Fushchych&Tsyfra1987,Popovych&Vaneeva&Ivanova2005,Zhdanov&Tsyfra&Popovych1999}
for precise definitions and examples of solving classification problems.)

\section{Admissible transformations, conditional equivalence groups\\ and other generalizations}\label{SectionAdmisTransfTheory}

For $\theta,\tilde\theta\in\mathcal S$ we call the set of point transformations that
map the system~$\mathcal L^\theta$ into the system~$\smash{\mathcal L^{\tilde\theta}}$
the \emph{set of admissible transformations from~$\mathcal L^\theta$ into~$\smash{\mathcal L^{\tilde\theta}}$}
and denote it by $\mathrm{T}(\theta,\tilde\theta)$.
If the systems~$\mathcal L^\theta$ and $\smash{\mathcal L^{\tilde\theta}}$ are equivalent with respect to
point transformations then
$\mathrm{T}(\theta,\tilde\theta)=G^{\theta}\circ\varphi^0=\varphi^0\circ G^{\smash{\tilde\theta}}$,
where $\varphi^0$ is a fixed transformation from~$\mathrm{T}(\theta,\tilde\theta)$.
Otherwise $\mathrm{T}(\theta,\tilde\theta)=\varnothing$.
The set
$\mathrm{T}(\theta,\mathcal L|_{\mathcal S})=\{\,(\tilde\theta,\varphi)\ |\
\tilde\theta\in\mathcal S,\ \mathrm{T}(\theta,\tilde\theta)\not=\varnothing,\
\varphi\in\mathrm{T}(\theta,\tilde\theta)\, \}$
is called the {\em set of admissible transformations of the equation~$\mathcal L^\theta$
in the class~$\mathcal L|_{\mathcal S}$}.
The~set
$\mathrm{T}(\mathcal L|_{\mathcal S})=\{\,(\theta,\tilde\theta,\varphi)\ |\
\theta,\tilde\theta\in\mathcal S,\ \mathrm{T}(\theta,\tilde\theta)\not=\varnothing,\
\varphi\in\mathrm{T}(\theta,\tilde\theta)\, \}$
is called the {\em set of admissible transformations in~$\mathcal L|_{\mathcal S}$}.

\begin{note}
The first set of admissible transformations was described by Kingston and Sophocleous
in~\cite{Kingston&Sophocleous1991} for a class of generalized Burgers equations.
These authors call transformations of such type as
{\em form-preserving ones}~\cite{Kingston&Sophocleous1998,Kingston&Sophocleous2001}.
The admissible transformations of a class of variable-coefficient reaction--diffusion equations
were studied in~\cite{Vaneeva&Johnpillai&Popovych&Sophocleous2006}.
The admissible transformations of different classes of nonlinear Schr\"odinger equations
are exhaustively described in the series of papers
\cite{Ivanova&Popovych&Eshraghi2005GammaNormalizedSerbia,Popovych&Eshraghi2004Norm,Popovych&Ivanova&Eshraghi2004Cubic,
Popovych&Ivanova&Eshraghi2004Gamma,Popovych&Kunzinger&Eshraghi2006}
in terms of normalized classes of differential equations.
Admissible transformations within the infinitesimal approach were studied in~\cite{Borovskikh2004}.
\end{note}

\begin{note}
In the case of one dependent variable ($m=1$) all the above and below notions can be extended to
contact transformations.
\end{note}

The notions introduced in Section~\ref{SectionOnGrClMethod} can be re-defined in terms of admissible transformations.
Thus, e.g., the point symmetry group~$G^\theta$ of the system~$\mathcal L^\theta$
coincides with~$\mathrm{T}(\theta,\theta)$.
Any element~$\Phi$ from the \emph{usual equivalence group}~$G^{\Equiv}(\mathcal L|_{\mathcal S})$
is a point transformation in the space of $(x,u_{(\prho)},\theta)$,
which is projectible on the space of $(x,u_{(\prho')})$ for any $0\le\prho'\le\prho$,
and $\Phi|_{(x,u_{(\prho')})}$ being the $\prho'$-th order prolongation of $\Phi|_{(x,u)}$,
and $\forall\theta\in\mathcal S$: $\Phi\theta\in\mathcal S$
and $\Phi|_{(x,u)}\in\rm{T}(\theta,\Phi\theta)$.

Similarly, any element~$\Phi$ from the \emph{generalized equivalence group}~$G^{\sim}_{\rm gen}$
is a point transformation in $(x,u,\theta)$-space
such that $\forall\theta\in\mathcal S$: $\Phi\theta\in\mathcal S$
and $\Phi(\cdot,\cdot,\theta(\cdot,\cdot))|_{(x,u)}\in\mathrm{T}(\theta,\Phi\theta)$.
Roughly speaking, $G^{\Equiv}(\mathcal L|_{\mathcal S})$ is the set of admissible transformations
which can be applied to any~$\theta\in\mathcal S$
and~$G^{\Equiv}_{\rm gen}(\mathcal L|_{\mathcal S})$ is formed by the admissible transformations
which can be separated to classes parameterized with $\theta$ running the whole~$\mathcal S$.

The \emph{extended equivalence group} $\bar G^{\sim}=\bar G^{\sim}(\mathcal L|_{\mathcal S})$
of the class~$\mathcal L|_{\mathcal S}$
is formed by the transformations each of which is represented by the pair $\Phi=(\check\Phi,\hat\Phi)$.
Here $\check\Phi$ is a one-to-one mapping in the set of arbitrary elements assumed
as functions of $(x,u_{(p)})$ and
$\hat\Phi=\Phi|_{(x,u)}$ is a point transformation of $(x,u)$ belonging to $\mathrm T(\theta,\check\Phi\theta)$
for any $\theta$ from $\mathcal S$.

The \emph{generalized extended equivalence group} $\bar G^{\sim}_{\rm gen}=\bar G^{\sim}_{\rm gen}(\mathcal L|_{\mathcal S})$
of the class~$\mathcal L|_{\mathcal S}$ consists of transformations each of which is represented by the tuple
$\smash{\Phi=(\check\Phi,\{\hat\Phi^\theta,\theta\!\in\!\mathcal S\})}$.
Here $\smash{\check\Phi}$ is a one-to-one mapping in the set of arbitrary elements assumed
as functions of $(x,u_{(p)})$ and
for any $\theta$ from $\mathcal S$ the element $\smash{\hat\Phi^\theta=\Phi|_{(x,u)}^\theta}$
is a point transformation of $(x,u)$ belonging to $\mathrm T(\theta,\check\Phi\theta)$.

Imposing additional constraints on arbitrary elements, we may single out a subclass in the class under consideration
whose equivalence group is not contained in the equivalence group of the whole class.
Let $\mathcal L|_{\mathcal S'\!}$ be the subclass of the class
$\mathcal L|_{\mathcal S}$, which is constrained by the additional
system of equations $S'(x,u_{( p)},\theta_{(q')})=0$ and inequalities $\Sigma'(x,u_{( p)},\theta_{(q')})\ne0$
with respect to the arbitrary elements $\theta=\theta(x,u_{( p)})$.
($\Sigma'$ can be the 0-tuple.)
Here $\mathcal S'\subset\mathcal S$ is the set of solutions of the united system $S=0$, $S'=0$, $\Sigma\,\Sigma'\ne0$.
We assume that the united system is compatible for the subclass $\mathcal L|_{\mathcal S'\!}$ to be nonempty.
The equivalence group $G^{\sim}(\mathcal L|_{\mathcal S'\!})$ of the subclass $\mathcal L|_{\mathcal S'\!}$
is called a \emph{conditional equivalence group} of the whole class $\mathcal L|_{\mathcal S}$ under the conditions $S'=0$, $\Sigma'\ne0$.
The conditional equivalence group $G^{\sim}(\mathcal L|_{\mathcal S'\!})$ of the class $\mathcal L|_{\mathcal S}$
under the additional conditions $S'=0$, $\Sigma'\ne0$ is called \emph{maximal} if for any subclass $\mathcal L|_{\mathcal S''\!}$
of the class $\mathcal L|_{\mathcal S}$ containing the subclass $\mathcal L|_{\mathcal S'\!}$
we have $G^{\sim}(\mathcal L|_{\mathcal S'\!})\nsubseteq G^{\sim}(\mathcal L|_{\mathcal S''\!})$.
Only maximal conditional equivalence groups are interesting.

The equivalence group $G^{\sim}(\mathcal L|_{\mathcal S})$ generates an equivalence relation on the set of
pairs of additional auxiliary conditions and the corresponding conditional equivalence groups.
Namely, if a transformation from $G^{\sim}(\mathcal L|_{\mathcal S})$ transforms the system
$S'=0$, $\Sigma'\ne0$ to the system $S''=0$, $\Sigma''\ne0$
then the conditional equivalence groups
$G^{\sim}(\mathcal L|_{\mathcal S'\!})$ and $G^{\sim}(\mathcal L|_{\mathcal S''\!})$
are similar with respect to this transformation and will be called \emph{$G^{\sim}$-equivalent}.
If a conditional equivalence group is maximal then any conditional equivalence group $G^{\sim}$-equivalent to it
is also maximal.

Building on the concept of conditional equivalence,
we can formulate the problem of describing $\mathrm{T}(\mathcal L|_{\mathcal S})$
analogously to the usual group classification problem.
Nontrivial additional auxiliary conditions for arbitrary elements naturally arise when
studying~$\mathrm{T}(\mathcal L|_{\mathcal S})$.
Typically, the following steps have to be carried out:
\begin{enumerate}\itemsep=0ex
\item
Construction of~$G^{\sim}(\mathcal L|_{\mathcal S})$ (or $G^{\sim}_{\rm gen}(\mathcal L|_{\mathcal S})$, etc.).
\item
Classification of conditional equivalence groups in the class~$\mathcal L|_{\mathcal S}$, i.e.,
searching for a complete family of $G^{\sim}$-inequivalent additional auxiliary conditions
$S_\gamma=0$,  $\Sigma_\gamma\ne0$, $\gamma\in\Gamma$, such that
$G^{\sim}(\mathcal L|_{\mathcal S_\gamma})$ is a maximal conditional equivalence group of the class $\mathcal L|_{\mathcal S}$
for any $\gamma\in\Gamma$.
Here $\mathcal S_\gamma\subset\mathcal S$ is the set of solutions of
the united system $S=0$, $S_\gamma=0$, $\Sigma\,\Sigma_\gamma\ne0$.

\item
Classification of admissible transformations in the class~$\mathcal L|_{\mathcal S}$,
which do not belong to any conditional equivalence groups (purely {\em partial} equivalence transformations),
i.e.,
searching for a complete family of $G^{\sim}$-inequivalent pairs $(\tilde{\mathcal S}_\lambda,\Phi_\lambda)$, $\lambda\in\Lambda$,
where for every $\lambda\in\Lambda$
the subset $\tilde{\mathcal S}_\lambda$ of~$\mathcal S$ is associated with a well-defined subclass
$\mathcal L|_{\tilde{\mathcal S}_\lambda}$ of $\mathcal L|_{\mathcal S}$,
the point transformation~$\Phi_\lambda$ does not belong to the equivalence group of any subclass of $\mathcal L|_{\mathcal S}$
containing the subclass $\mathcal L|_{\tilde{\mathcal S}_\lambda}$
and
$\Phi_\lambda(\mathcal L|_{\tilde{\mathcal S}_\lambda})\subset\mathcal L|_{\mathcal S}$.
\end{enumerate}

It is obvious that any conditional equivalence is a partial one under the same additional constraint
and any point symmetry transformation for a fixed value $\theta=\theta^0(x,u_{(p)})$
is a partial equivalence transformation under the constraint $\theta=\theta^0.$

Actually, the proposed procedure is far from optimal.
More elaborate techniques are based on the notion of normalized classes
\cite{Popovych&Kunzinger&Eshraghi2006}.

\section{Equivalence transformations and choice \\of investigated class}\label{SectionOnEquivTransOfEqDKfgh}

We start the investigation of Lie symmetry properties of equations from class~\eqref{eqDKfgh}
by finding equivalence groups of this class.

The usual equivalence group~$G^{\sim}$ of class~\eqref{eqDKfgh} is formed by the nondegenerate point transformations
in the space of~$(t,x,u,f,g,h,A,B)$, which are projectible on the space of~$(t,x,u)$,
i.e., they have the form
\begin{gather}
(\tilde t,\tilde x,\tilde u)=(T^t,T^x,T^u)(t,x,u), \nonumber\\[0.5ex]
(\tilde f,\tilde g,\tilde h,\tilde A,\tilde B)=(T^f,T^g,T^h,T^A,T^B)(t,x,u,f,g,h,A,B)\label{EquivTransformations}
\end{gather}
and transform any equation from class~\eqref{eqDKfgh} for the function $u=u(t,x)$
with the arbitrary elements $(f,g,h,A,B)$
to an equation from the same class for the function $\tilde u=\tilde u(\tilde t,\tilde x)$
with the new arbitrary elements~$(\tilde f,\tilde g,\tilde h,\tilde A,\tilde B)$.

\begin{theorem}\label{TheoremOnUsualEquivGroupOfEqDKfgh}
The group $G^{\sim}$ consists of the transformations
\begin{gather*}
\tilde t=\delta_1 t+\delta_2,\quad
\tilde x=X(x), \quad
\tilde u=\delta_3 u+\delta_4, \\
\tilde f=\dfrac{\varepsilon_1\delta_1}{X_x} f, \quad
\tilde g=\varepsilon_1\varepsilon_2^{-1}X_x\, g, \quad
\tilde h=\varepsilon_1\varepsilon_3^{-1}h, \quad
\tilde A=\varepsilon_2A, \quad
\tilde B=\varepsilon_3B,
\end{gather*}
where $\delta_j$ $(j=1,\dots,4)$ and $\varepsilon_i$ $(i=1,\dots,3)$ are arbitrary constants,
$\delta_1\delta_3\varepsilon_1\varepsilon_2\varepsilon_3\not=0$, $X$ is an arbitrary smooth function of~$x$, $X_x\not=0$.
\end{theorem}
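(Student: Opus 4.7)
The plan is to apply the direct method. Since the statement describes the full (point) equivalence group, including its discrete component and the arbitrary function $X(x)$, the infinitesimal approach is not enough, and one has to solve the defining conditions on general projectible point transformations directly.

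I would start from the ansatz
\[
\tilde t = T^t(t,x,u), \qquad \tilde x = T^x(t,x,u), \qquad \tilde u = T^u(t,x,u),
\]
with nondegenerate Jacobian, together with the accompanying transformation of the arbitrary elements as in~\eqref{EquivTransformations}. Using the chain rule I would express $\tilde u_{\tilde t}$, $\tilde u_{\tilde x}$ and $\tilde u_{\tilde x\tilde x}$ as rational combinations of $u_t,u_x,u_{xx}$ whose coefficients involve $T^t,T^x,T^u$ and their first and second partials. Substituting into
\[
\tilde f(\tilde x)\,\tilde u_{\tilde t}=\bigl(\tilde g(\tilde x)\,\tilde A(\tilde u)\,\tilde u_{\tilde x}\bigr)_{\tilde x}+\tilde h(\tilde x)\,\tilde B(\tilde u)\,\tilde u_{\tilde x},
\]
clearing denominators and using~\eqref{eqDKfgh} to eliminate $u_{xx}$ in favour of $u_t$ and $u_x$, I obtain a polynomial identity in the jet coordinates $u_t,u_x$ (and monomials $u_x^2,u_x^3$) whose coefficients must vanish identically in $(t,x,u)$.

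The second stage is to split this identity and pin down the shape of the transformation. The standard chain of deductions is: the coefficient of $u_t$ in the chain-rule expression for $\tilde u_{\tilde x\tilde x}$ forces $T^t_x=T^t_u=0$, so $T^t=T^t(t)$; symmetrically, vanishing of mixed lower-order terms forces $T^x_t=T^x_u=0$, so $T^x=X(x)$. Using the hypothesis $(A_u,B_u)\ne(0,0)$, separation with respect to $u_x^2$ (which carries $A_u$) and to the $A_u$-free part of the $u_x$-coefficient (which carries $B_u$) then forces $T^u_{uu}=0$, $T^u_{xu}=0$ and $T^u_t=0$, so that $T^u=\delta_3 u+\delta_4$ with $\delta_3\ne0$ constant. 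Finally, the requirement that the $u_t$-coefficient of the reduced equation have the form $\tilde f(\tilde x)$ times a function of $\tilde x$ alone forces $T^t_{tt}=0$, yielding $T^t=\delta_1 t+\delta_2$.

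Once this triangular form is fixed, matching the surviving coefficients of $u_{xx}$, $u_t$, $u_x^2$ and $u_x$ with those in~\eqref{eqDKfgh} produces a proportionality with an $x$-dependent factor $\lambda(x)$ that turns out to equal $\varepsilon_1 X_x$ for some nonzero constant $\varepsilon_1$ (the residual freedom to rescale the whole equation). The $u_t$-coefficient then pins down $\tilde f=\varepsilon_1\delta_1 f/X_x$, while the $u_{xx}$-coefficient fixes the product $\tilde g\tilde A$; since $\tilde g$ and $\tilde A$ depend on the disjoint variables $\tilde x$ and $\tilde u$, their product can be split only up to an additional multiplicative constant $\varepsilon_2$. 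The analogous argument applied to the remaining $u_x$-coefficient yields $\tilde h\tilde B$ up to a constant $\varepsilon_3$, and assembling these relations gives precisely the formulas of the theorem. The main obstacle is the step forcing $T^u$ to be affine: the hypothesis only provides the disjunction $(A_u,B_u)\ne(0,0)$, so one must carry out the splitting separately for $A_u\ne0$ and for $A_u\equiv0,\ B_u\ne0$, carefully isolating in the second subcase the $B_u$-contribution to the $u_x$-coefficient from the $A$- and $h$-dependent ones before splitting. Checking that both subcases force the same affine, $x$- and $t$-independent form of $T^u$ is the only genuinely delicate piece of bookkeeping in the direct-method computation.
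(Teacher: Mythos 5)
Your proposal is correct and follows exactly the route the paper takes: the authors state that $G^{\sim}$ (and $\hat G^{\sim}$) is obtained by the direct method, and although they do not print the computation for Theorem~\ref{TheoremOnUsualEquivGroupOfEqDKfgh}, the key steps you describe --- $T^t=T^t(t)$ from the evolution structure, $T^x_u=0$ from the $u_{xx}$-coefficient, affineness of $T^u$ from the $u_x^2$-coefficient, and the splitting of the products $\tilde g\tilde A$ and $\tilde h\tilde B$ into $\tilde x$- and $\tilde u$-dependent factors up to the constants $\varepsilon_2,\varepsilon_3$ --- are precisely the ones carried out in the lemmas of Section~\ref{SectionAdmisTransf} via relation~\eqref{(i)}--\eqref{(iv)}. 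The only remark is that your case split on $(A_u,B_u)\ne(0,0)$ is not really needed here: an equivalence transformation must preserve the whole class, including equations with $A_u\ne0$, so those suffice to pin down the affine form of $T^u$.
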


It appears that class~\eqref{eqDKfgh} admits other equivalence transformations which do not belong to~$G^{\sim}$
and form, together with usual equivalence transformations, an {\it extended equivalence group}.
We demand for these transformations to be point with respect to $(t,x,u)$.
The explicit form of the new arbitrary elements~$(\tilde f,\tilde g,\tilde h,\tilde A,\tilde B)$ is determined
via $(t,x,u,f,g,h,A,B)$ in some non-fixed (possibly, nonlocal) way.
We construct the complete (in this sense) extended equivalence group~$\hat G^{\sim}$ of
class~\eqref{eqDKfgh}, using the direct method.

\begin{theorem}
The equivalence group $\hat G^{\sim}$ is formed by the transformations
\begin{gather*}
\tilde t=\delta_1 t+\delta_2,\quad
\tilde x=X(x), \quad
\tilde u=\delta_3 u+\delta_4, \\
\tilde f=\dfrac{\varepsilon_1\delta_1\varphi}{X_x}f, \quad
\tilde g=\varepsilon_1\varepsilon_2^{-1}X_x\varphi\,g, \quad
\tilde h=\varepsilon_1\varepsilon_3^{-1}\varphi\,h, \quad
\tilde A=\varepsilon_2 A, \quad
\tilde B=\varepsilon_3 (B+\varepsilon_4 A),
\end{gather*}
where $\delta_j$ $(j=1,\dots,4)$ and $\varepsilon_i$ $(i=1,\dots,4)$ are arbitrary constants,
$\delta_1\delta_3\varepsilon_1\varepsilon_2\varepsilon_3\not=0$,
$X$ is an arbitrary smooth function of~$x$, $X_x\not=0$,
$\varphi=e^{-\varepsilon_4\int \frac{h(x)}{g(x)}dx}$.
\end{theorem}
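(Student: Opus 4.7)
The plan is to apply the direct method in its extended form, allowing the transformation laws of the arbitrary elements to involve $x$-integrals of $f,g,h$ (hence nonlocal in the arbitrary elements) while keeping the change of variables $(t,x,u)\mapsto(\tilde t,\tilde x,\tilde u)$ genuinely pointwise and projectible on $(t,x)$. Concretely, I would start from
$$\tilde t=T^t(t,x,u),\quad \tilde x=T^x(t,x,u),\quad \tilde u=T^u(t,x,u),$$
with nondegenerate Jacobian, compute $\tilde u_{\tilde t}$, $\tilde u_{\tilde x}$, $\tilde u_{\tilde x\tilde x}$ by the chain rule, substitute into
$$\tilde f(\tilde x)\,\tilde u_{\tilde t}=(\tilde g(\tilde x)\tilde A(\tilde u)\tilde u_{\tilde x})_{\tilde x}+\tilde h(\tilde x)\tilde B(\tilde u)\tilde u_{\tilde x},$$
and eliminate $u_t$ via the original equation~\eqref{eqDKfgh}. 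The resulting identity must hold on the jet manifold; splitting with respect to the unconstrained derivatives $u_x,u_{xx}$ and with respect to $A$, $A_u$, $B$ then yields the determining system.

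The first, kinematic, stage is to show that the change of variables is extremely rigid. Since $\tilde f,\tilde g,\tilde h$ depend only on $\tilde x$ while $\tilde A,\tilde B$ depend only on $\tilde u$ (and likewise for the untilded elements), an elementary splitting forces $T^t=T^t(t)$, $T^x=T^x(x)$, and $T^u$ to be affine in $u$ with coefficients a priori depending on $x$. Matching the coefficient of $A_{uu}u_x^2$ (and, separately, of $A_u u_x^2$) then kills the $x$-dependence of the affine coefficients and the $t$-dependence of $T^t$ beyond an affine ansatz, giving $\tilde u=\delta_3 u+\delta_4$ and $\tilde t=\delta_1 t+\delta_2$. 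At this point the coefficients of $Au_{xx}$ and $A_u u_x^2$ determine $\tilde f$, $\tilde g$ and $\tilde A$ in the expected $G^\sim$-shape, but multiplied by a common, not-yet-specified function $\varphi(x)$, so that $\tilde f=\varepsilon_1\delta_1\varphi f/X_x$, $\tilde g=\varepsilon_1\varepsilon_2^{-1}X_x\varphi g$, $\tilde A=\varepsilon_2 A$.

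The decisive new step — the one that produces the genuine extension $\hat G^\sim\supsetneq G^\sim$ — is the joint matching of the coefficients of $Au_x$ and $Bu_x$. These coefficients receive contributions from $(\tilde g\tilde A)_{\tilde x}$, from $\tilde h\tilde B$, and from the $x$-derivative of $\varphi$. If one prematurely sets $\tilde B=\varepsilon_3 B$, the system forces $\varphi$ to be constant and one recovers only~$G^\sim$. Enlarging the ansatz to the minimal $A/B$ mixing compatible with the requirement that $\tilde A$ and $\tilde B$ remain functions of $\tilde u$ alone, namely $\tilde B=\varepsilon_3(B+\varepsilon_4 A)$, the splitting collapses to the single first-order ODE
$$\frac{\varphi_x}{\varphi}=-\varepsilon_4\,\frac{h(x)}{g(x)},$$
with the unique (up to a constant absorbed into $\varepsilon_1$) solution $\varphi(x)=\exp\!\bigl(-\varepsilon_4\int h(x)/g(x)\,dx\bigr)$, while $\tilde h=\varepsilon_1\varepsilon_3^{-1}\varphi h$. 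The arbitrariness of $X$ and the free constants $\delta_j,\varepsilon_i$ are all consistent with the Jacobian conditions, so the non-degeneracy $\delta_1\delta_3\varepsilon_1\varepsilon_2\varepsilon_3\neq0$ and $X_x\neq0$ suffice.

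The principal obstacle I anticipate is bookkeeping rather than conceptual: the direct method produces a polynomial identity in $u_x,u_{xx}$ whose coefficients are differential expressions in $T^t,T^x,T^u,f,g,h,A,B$, and one must split carefully enough to expose the $\varepsilon_4$-family instead of suppressing it. Once the correct ansatz for $\tilde B$ is in place, the nonlocal factor $\varphi$ appears naturally, and the final verification that the resulting family is indeed a group — closure under composition and existence of inverses — is a direct check using the transformation law of $h/g$ under the $x$-reparameterization $X$, together with the additive structure in $\varepsilon_4$ provided by the exponential.
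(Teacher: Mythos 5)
Your proposal is correct and follows essentially the same route as the paper: the authors state that $\hat G^{\sim}$ is obtained ``using the direct method,'' and your outline reproduces the decisive points of that computation --- the rigidity of $(\tilde t,\tilde x,\tilde u)$ obtained by splitting with respect to the arbitrary elements (cf.\ the lemmas of Section~\ref{SectionAdmisTransf}), and the $A$/$B$ mixing $\tilde B=\varepsilon_3(B+\varepsilon_4A)$ whose compatibility condition is the ODE $\varphi_x/\varphi=-\varepsilon_4 h/g$, which is exactly the gauge subgroup~\eqref{GaugeEquivTransformationsDKfgh}. (The only slip is the reference to a coefficient of $A_{uu}u_x^2$ --- no second derivative of $A$ occurs in the determining identity; the relevant splitting is with respect to $A$ and $A_u$ --- but this does not affect the argument.)
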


\noprint{
\checkit[$d\tilde x=X_xdx$, $v=\int \frac{h(x)}{g(x)}dx$,
$\tilde v=\int \frac{\tilde h}{\tilde g}d\tilde x=\frac{\varepsilon_2}{\varepsilon_3}\int \frac hg\frac{d\tilde x}{X_x}$]{}
}

\looseness=-1
Existence of such equivalence transformations can be explained in many respects by features
of representation of equations in the form~\eqref{eqDKfgh}.
This form leads to an ambiguity since the same equation has an infinite series of
different representations. More exactly, two representations~\eqref{eqDKfgh}
with the arbitrary element tuples $(f,g,h,A,B)$ and $(\tilde f,\tilde g,\tilde h,\tilde A,\tilde B)$
determine the same equation if and only if \begin{gather}
\tilde f=\varepsilon_1\varphi\, f, \quad
\tilde g=\varepsilon_1\varepsilon_2^{-1}\varphi\, g, \quad
\tilde h=\varepsilon_1\varepsilon_3^{-1}\varphi\, h,\quad
\tilde A=\varepsilon_2 A, \quad
\tilde B=\varepsilon_3 (B+\varepsilon_4 A),
\label{GaugeEquivTransformationsDKfgh}
\end{gather}
where $\varphi=e^{-\varepsilon_4\int \frac{h(x)}{g(x)}dx}$,
$\varepsilon_i$ $(i=1,\dots,4)$ are arbitrary constants, $\varepsilon_1\varepsilon_2\varepsilon_3\not=0$
(the variables $t$, $x$ and $u$ are not transformed!).

Transformations~\eqref{GaugeEquivTransformationsDKfgh} act only on arbitrary elements, but not on the variables $t,x,u$
and therefore, do not really change the equations.
In general, transformations of such type can be considered as trivial~\cite{LisleDissertation}
(``gauge'') equivalence transformations
and form the ``gauge'' (normal) subgroup~$\hat G^{\sim g}$ of the extended equivalence group~$\hat G^{\sim}$.

We note that transformations~\eqref{GaugeEquivTransformationsDKfgh} with $\varepsilon_4\not=0$ are
nonlocal with respect to arbitrary elements, otherwise they belong to~$G^{\sim}$
and form the ``gauge'' (normal) subgroup~$G^{\sim g}$ of the equivalence group~$G^{\sim}$.

The factor-group $\hat G^{\sim}/\hat G^{\sim g}$ coincides for class~\eqref{eqDKfgh} with~$G^{\sim}/G^{\sim g}$
and can be assumed to consist of the transformations
\begin{equation} \label{RealEquivTransformationsDKfgh}\arraycolsep=0em
\begin{array}{l}
\tilde t=\delta_1 t+\delta_2,\quad
\tilde x=X(x), \quad
\tilde u=\delta_3 u+\delta_4,\\[1ex]
\tilde f=\dfrac{\delta_1}{X_x} f, \quad
\tilde g=X_x\,g, \quad
\tilde h=h, \quad
\tilde A=A, \quad
\tilde B=B,
\end{array}
\end{equation}
where $\delta_i$ ($i=1,\dots,4$) are arbitrary constants, $\delta_1\delta_3\not=0$,
$X$ is an arbitrary smooth function of~$x$, $X_x\not=0$.

\begin{note}
After extending the set of arbitrary elements of class~\eqref{eqDKfgh} with one more arbitrary element $l=l(x)$
constrained by the equation $l_x=h/g$, we can consider
the extended equivalence group $\hat G^{\sim}$ of class~\eqref{eqDKfgh}, up to gauge transformations of translations with respect to~$l$,
as the usual equivalence group of the modified class.
The transformation components corresponding to the arbitrary elements depending on~$x$ take the form
\[
\tilde f=\dfrac{\varepsilon_1\delta_1e^{-\varepsilon_4l}}{X_x}f, \quad
\tilde g=\varepsilon_1\varepsilon_2^{-1}X_xe^{-\varepsilon_4l}\,g, \quad
\tilde h=\varepsilon_1\varepsilon_3^{-1}e^{-\varepsilon_4l}\,h, \quad
\tilde l=\varepsilon_2\varepsilon_3^{-1}l+\varepsilon_5.
\]
\end{note}

It was convenient during investigation of the subclass of the class~(\ref{eqDKfgh}) with
the additional condition $h=1$ to gauge the parameter-function~$g$ to 1~\cite{Popovych&Ivanova2004NVCDCEs}.
In the same way, using the transformation $\tilde t=t$, $\tilde x=\int \frac{dx}{g(x)}$, $\tilde u=u$
from $G^{\sim}/G^{\sim g}$, we can reduce equation~(\ref{eqDKfgh}) to
\[
\tilde f(\tilde x)\tilde u_{\tilde t}= (A(\tilde u)
\tilde u_{\tilde x})_{\tilde x} + \tilde h(\tilde x)B(\tilde u)\tilde u_{\tilde x},
\]
where $\tilde f(\tilde x)=g(x)f(x)$, $\tilde g(\tilde x)=1$ and $\tilde h(\tilde x)=h(x)$.
(Likewise any equation of form~\eqref{eqDKfgh} can be reduced to the same form with $\tilde f(\tilde x)=1.$)
That is why, without loss of generality we can restrict ourselves to investigation of the equation
\begin{equation} \label{eqDKfh}
f(x)u_t=\left(A(u)u_x \right)_x + h(x)B(u)u_x.
\end{equation}

Any transformation from~$\hat G^{\sim}$, which preserves the condition $g=1$, has the form
\[
\arraycolsep=0em
\begin{array}{l}
\tilde t=\delta_1 t+\delta_2,\quad
\tilde x=\delta_5 \int e^{\delta_8\int\! h}dx+\delta_6, \quad
\tilde u=\delta_3 u+\delta_4,\\[1ex]
\tilde f=\delta_1\delta_5^{-1}\delta_9 fe^{-2\delta_8\int\! h}, \quad
\tilde h=\delta_9\delta_7^{-1} he^{-\delta_8\int\! h}, \quad
\tilde g=g, \\[1ex]
\tilde A=\delta_5\delta_9A, \quad
\tilde B=\delta_7(B+\delta_8A),
\end{array}
\]
where $\delta_i$ ($i=1,\dots,9$) are arbitrary constants, $\delta_1\delta_3\delta_5\delta_7\delta_9\not=0$
and $\int\! h=\int\! h(x)\,dx$.
The set of such transformations is a subgroup of~$\hat G^{\sim}$.
Its projection~$\hat G^{\sim}_1$ to the condition $g=1$ can be considered as the generalized extended equivalence group of class~\eqref{eqDKfh} after
admitting dependence of transformations of variables on arbitrary elements~\cite{Meleshko1994}
and additional supposition that such dependence can be nonlocal~\cite{Popovych&Kunzinger&Eshraghi2006}.
The group~$G^{\sim}_1$ of usual (local) equivalence transformations of class~\eqref{eqDKfh}
coincides with the subgroup singled out from~$\hat G^{\sim}_1$ via the condition $\delta_8 = 0$.
The transformations from~$\hat G^{\sim}_1$ with non-vanishing values of the parameter $\delta_8$
are nonlocal in the arbitrary element~$h$ and are projections of compositions of usual equivalence
and nonlocal gauge transformations from~$\hat G^{\sim}$.

There exists a way to avoid operations with nonlocal equivalence transformations.
More exactly, we can assume that the parameter-function $B$ is determined up to an additive
term proportional to $A$ and subtract such term from $B$ before applying equivalence
transformations~\eqref{RealEquivTransformationsDKfgh}.

At the same time, there is another possible generalization of the gauge $g=1$ from
the case $h=1$ to the general case of $h$, namely the gauge $g=h$.
Any equation of the form~\eqref{eqDKfgh} can be reduced to the equation
\begin{equation} \label{eqDKfhh}
f(x)u_t=\left(h(x)A(u)u_x \right)_x + h(x)B(u)u_x
\end{equation}
by the transformation
$\tilde t=t$, $\tilde x=\int \frac{h(x)}{g(x)}dx$, $\tilde u=u$ from $G^{\sim}/G^{\sim g}$.
The usual equivalence group~$G^{\sim}_h$ of the subclass~\eqref{eqDKfhh} consists of the transformations
\[
\arraycolsep=0em
\begin{array}{l}
\tilde t=\delta_1 t+\delta_2,\quad
\tilde x=\delta_5 x+\delta_6, \quad
\tilde u=\delta_3 u+\delta_4,\\[1ex]
\tilde f=\delta_1\delta_5^{-1}\delta_9 fe^{\delta_8x}, \quad
\tilde h=\delta_9\delta_7^{-1} he^{\delta_8x}, \quad
\tilde A=\delta_5A, \quad
\tilde B=\delta_7(B-\delta_8A),
\end{array}
\]
where $\delta_i$ ($i=1,\dots,9$) are arbitrary constants, $\delta_1\delta_3\delta_5\delta_7\delta_9\not=0$.
It is the projection of the subgroup of~$\hat G^{\sim}$ preserving the constraint $g=h$ to the subclass~\eqref{eqDKfhh}.
One of the advantages of the gauge~$g=h$ over the gauge~$g=1$ is that
the generalized extended equivalence group of class~\eqref{eqDKfhh} coincides with $G^{\sim}_h$.

To simplify our consideration, we will use the gauges $g=1$ and $g=h$ simultaneously.
Any equation with $g=1$ can be reduced to that with $\tilde g=\tilde h$ by means of
the simple equivalence transformation
\begin{equation}\label{EqTransFromDKf1hToDKfhh}
\tilde t=t, \quad {\textstyle\tilde x=\int h\, dx,} \quad \tilde u=u, \quad \tilde f=\frac{f}{h}, \quad \tilde h=h,
 \quad \tilde A=A, \quad \tilde B=B.
\end{equation}


\section{Group classification of diffusion--convection equations}\label{SectionOnGrClOfDCEfgh}
We consider a one-parameter Lie group of point transformations in $(t,x,u)$
with an infinitesimal operator of the form
$
Q=\tau(t,x,u)\p_t+\xi(t,x,u)\p_x+\eta(t,x,u)\p_u,
$
which leaves equation~\eqref{eqDKfgh} invariant.
The Lie criterion of infinitesimal invariance yields the following system of determining equations for
$\tau,\ \xi$ and $\eta$:
\begin{gather}
\tau_x=\tau_u=\xi_u=0,\quad \eta_{uu}=0,\label{deteq1}\\
\xi\frac{f_x}{f}-\xi\frac{g_x}{g}-\tau_t+2\xi_x=\eta\frac{A_u}{A},\label{deteq2}\\
(g\eta_x)_xA+h\eta_xB=\eta_tf,\label{deteq3}\\
(g_x\eta+2g\eta_x)A_u+\left((2\eta_{xu}-\xi_{xx})g+\left(\tau_t-\xi_x-\xi\frac{f_x}{f}\right)g_x +\xi g_{xx}\right)A
\nonumber\\
{}+h\eta B_u+\left(\tau_t-\xi_x-\xi\frac{f_x}{f}+\xi \frac{h_x}{h} \right)hB+\xi_tf=0.\label{deteq4}
\end{gather}

Equations~\eqref{deteq1} do not contain arbitrary elements. Integration of them yields
\begin{equation}\label{dse_modified}
\tau=\tau(t),\quad \xi=\xi(t,x),\quad \eta=\eta^1(t,x)u+\eta^0(t,x).
\end{equation}

Thus, group classification of~\eqref{eqDKfgh} reduces to solution
of classifying conditions~\eqref{deteq2}--\eqref{deteq4}.

Splitting system~\eqref{deteq2}--\eqref{deteq4}
with respect to the arbitrary elements and their non-vanishing derivatives gives
the equations $\tau_t=0,$ $\xi=0,$ $\eta=0$
on the coefficients of operators from the Lie algebra $A^\cap$ of the kernel of principal groups of~\eqref{eqDKfgh}.
As a result, the following theorem is true.

\begin{theorem}\label{th1}
The Lie algebra of the kernel of principal groups of~\eqref{eqDKfgh} is $A^\cap=\langle \p_t \rangle$.
\end{theorem}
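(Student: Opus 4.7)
The plan is to find the kernel algebra by imposing that the determining equations~\eqref{deteq2}--\eqref{deteq4} hold simultaneously for \emph{every} admissible tuple of arbitrary elements $(f,g,h,A,B)$. Starting from the already-integrated form~\eqref{dse_modified}, one substitutes $\tau=\tau(t)$, $\xi=\xi(t,x)$, $\eta=\eta^{1}(t,x)u+\eta^{0}(t,x)$ into the three classifying conditions and then splits with respect to the arbitrary elements and their derivatives, regarded as independent (since their values at any fixed point can be chosen arbitrarily by choosing $f$, $g$, $h$, $A$, $B$ appropriately).

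The main extraction is from equation~\eqref{deteq2}, namely
\[
\xi\frac{f_x}{f}-\xi\frac{g_x}{g}-\tau_t+2\xi_x=\eta\frac{A_u}{A}.
\]
The left-hand side does not depend on $u$, while on the right-hand side $A_u/A$ is a completely free function of $u$ (as $A$ varies over the arbitrary elements). This forces $\eta\equiv 0$, hence $\eta^{1}=\eta^{0}=0$. Next, the left-hand side must vanish, and since $f_x/f$ and $g_x/g$ can be assigned arbitrary values pointwise as $f$ and $g$ vary, splitting with respect to them yields $\xi=0$; what remains reduces to $\tau_t=0$.

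Finally I would verify that equations~\eqref{deteq3} and~\eqref{deteq4} impose no further constraints: with $\eta=0$, equation~\eqref{deteq3} collapses to $0=0$, and with $\xi=0$, $\eta=0$, $\tau_t=0$ every term in~\eqref{deteq4} vanishes identically. Therefore the only solutions are $\tau=\mathrm{const}$, $\xi=0$, $\eta=0$, which gives $A^{\cap}=\langle\p_t\rangle$.

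There is no real obstacle here; the calculation is a routine splitting once the form~\eqref{dse_modified} is in hand. The only point that deserves care is the justification of splitting against functions of $u$ (from $A$, $B$) versus functions of $x$ (from $f$, $g$, $h$) versus pure numbers — this is legitimate because the coefficients $\tau$, $\xi$, $\eta$ cannot depend on the arbitrary elements, so at each point the chosen values of an arbitrary element and its derivatives may be treated as independent free parameters within the identity.
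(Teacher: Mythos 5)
Your proposal is correct and follows essentially the same route as the paper, which simply splits the classifying system~\eqref{deteq2}--\eqref{deteq4} with respect to the arbitrary elements and their unconstrained derivatives to obtain $\eta=0$, $\xi=0$, $\tau_t=0$. Your extra care in justifying the splitting (treating $A_u/A$ as a freely assignable function of $u$ and $f_x/f$, $g_x/g$ as freely assignable in $x$, while the operator coefficients are fixed across all $\theta$) is exactly the implicit content of the paper's one-line argument.
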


Studying all possible cases of integration of equations~\eqref{deteq2}--\eqref{deteq4}
under condition~\eqref{dse_modified} up to the extended equivalence group~$\hat G^{\Equiv}$
in the both gauges~$g=1$ and $g=h$ leads to the following theorem.

\begin{theorem}\label{TheoremOnClassificationOfeqDKfgh}
A complete set of $\hat G^{\Equiv}$-inequivalent equations~\eqref{eqDKfgh} which have
the wider Lie invariance algebras than $A^\cap$ is exhausted by cases given in tables~1--3
or tables~1$'$--3$'$.
\end{theorem}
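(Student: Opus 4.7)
The plan is to fix one of the two gauges (say $g=1$, giving equation~\eqref{eqDKfh}) and perform the case analysis there, then translate to the $g=h$ gauge via transformation~\eqref{EqTransFromDKf1hToDKfhh}. Substituting the form~\eqref{dse_modified} into the classifying equations~\eqref{deteq2}--\eqref{deteq4} reduces them to polynomial expressions in $u$ modulated by $A(u)$, $A_u(u)$, $B(u)$, $B_u(u)$, whose coefficients are functions of $t$ and~$x$. The aim of Step~1 is to extract, via differentiation with respect to~$u$ and splitting off the dependence on $t$, a system of the form $\xi\alpha^i(x)+\beta^i(x)\tau_t+\gamma^i(x)=\psi^i(u)$, where $\alpha^i,\beta^i,\gamma^i$ come from the unknown profiles $f,g,h$ and $\psi^i$ from $A,B$. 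At this stage the Lie kernel $A^\cap=\langle\p_t\rangle$ (Theorem~\ref{th1}) has already been separated out, so one only needs to identify when genuine extensions arise.

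Next I would implement the \emph{furcate split} of~\cite{Nikitin&Popovych2001} on these relations: the left-hand sides are at most linear in $\tau_t$ and $\xi$ with $x$-dependent coefficients, while the right-hand sides depend only on~$u$. Differentiating in $x$ (respectively~$u$) yields a hierarchy that forces either the $\psi^i$ to satisfy finitely many linear ODEs in~$u$ (producing the canonical profiles for $A,B$: power, exponential, logarithmic, linear, constant), or the coefficients in~$x$ to be linearly dependent, producing ODEs for $f$ and~$h$. I would organize the case tree by how many linearly independent relations between $1,A_u/A,B_u,\ldots$ arise, handling the generic branch first (where the number of constraints forces $A_u=0$ or $B=0$, which is excluded by hypothesis or falls into a degenerate corner), then the one- and two-parameter branches producing the standard families $A=u^n$, $A=e^u$, $B=u^m$, $B=e^u$, etc. In each branch the remaining equations on $\tau,\xi,\eta^0,\eta^1$ become linear ODEs whose integration is routine.

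The third step is the reduction modulo $\hat G^\sim$. Each case comes with free constants stemming from the integration of $\xi$ and~$\eta$, plus arbitrary constants sitting in $A,B,f,h$; one uses the transformations from Theorem~2, in particular the translations $\delta_2,\delta_4,\delta_6$, the scalings $\delta_1,\delta_3,\delta_5,\varepsilon_1,\varepsilon_2,\varepsilon_3$, the arbitrary reparametrization $X(x)$, and critically the gauge parameter~$\varepsilon_4$ in $B\mapsto B+\varepsilon_4 A$, to normalize these constants to the canonical values listed in tables~1--3. Between the two gauges one converts using~\eqref{EqTransFromDKf1hToDKfhh}; this is what produces the alternative tables~1$'$--3$'$. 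To finish, I would verify maximality of each invariance algebra (no further coincidental extension) by checking that the Lie algebra obtained is exactly the one dictated by the determining equations in that branch, and that distinct rows of the tables are pairwise $\hat G^\sim$-inequivalent by comparing invariants of the algebras.

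The main obstacle I anticipate is the combinatorial explosion in the case analysis: the classifying conditions intertwine conditions on $(f,h)$ with conditions on $(A,B)$, and naive splitting leads to a great many overlapping subcases. The furcate-split discipline plus the gauge freedom $\varepsilon_4$ (which collapses the apparent distinction between $B$ and $B+cA$, and is nonlocal in $h$) are essential to keep the tree finite and the final tables irredundant; getting these normalizations right, and ensuring that no case sneaks through under the guise of a different representative of the same $\hat G^\sim$-orbit, is where the bookkeeping really bites.
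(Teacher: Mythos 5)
Your overall skeleton --- determining equations, a furcate split on the number of independent classifying constraints, then normalization by $\hat G^{\sim}$ with essential use of the nonlocal parameter $\varepsilon_4$ --- matches the paper's. But two specific choices in your plan would break down. First, you commit to carrying out the whole case analysis in the single gauge $g=1$ and only afterwards translating to $g=h$ via~\eqref{EqTransFromDKf1hToDKfhh}. The paper's computation shows this is not workable: in the branch where $A$ satisfies a nontrivial condition and $B\notin\langle 1,A\rangle$, the reduced determining system in the gauge $g=1$ cannot be integrated explicitly (the profiles $f,h$ are pinned down only by implicit ODEs such as $(h^{-1})''=-2ph$, cf.\ case~2.\ref{AeuBueufh}), whereas the same branch integrates in closed form in the gauge $g=h$ (giving $h=e^{px^2}$, etc.). This is exactly why the paper switches gauges \emph{during} the classification rather than translating a finished $g=1$ list: the change of variable $\tilde x=\int h\,dx$ requires the explicit form of~$h$, so one cannot mechanically recover the explicit $g=h$ tables from an implicit $g=1$ answer. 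Second, your ``generic branch'' dismisses the alternative $A_u=0$ as excluded by hypothesis. It is not: the class only excludes $(A_u,B_u)=(0,0)$, and the branch in which two independent conditions force $A=\const$ with $B_u\ne0$ is a substantive part of the classification, producing several rows of tables~3 and~3$'$ (the cases with $\mu=0$). Dropping it would leave the list incomplete.

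A smaller but still material point: the paper's top-level split is sharper than yours. It is taken on the dimension $k\in\{0,1,2\}$ of the linear space of triples $(\alpha,\beta,\gamma)$ occurring in the conditions $(\alpha u+\beta)A_u=\gamma A$ extracted from~\eqref{deteq2} alone, and the observation that legitimizes the split is that $k$ is an invariant of $\hat G^{\sim}$, so the three branches are automatically mutually inequivalent and can be treated separately. Your version, which counts relations among $1$, $A_u/A$, $B_u$, and so on jointly, does not obviously enjoy this invariance; without it you would still owe an argument that cases produced in different branches cannot be $\hat G^{\sim}$-equivalent to one another, which is part of what the theorem asserts.
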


The proof of theorem~\ref{TheoremOnClassificationOfeqDKfgh} is briefly sketched in the next section.

In tables~1--3 and in tables~1$'$--3$'$ we list all possible $\hat G^{\Equiv}$-inequivalent
sets of functions $f(x)$, $h(x)$, $A(u)$ and $B(u)$ with different Lie symmetry properties and the corresponding invariance algebras
under the gauges~$g=1$ and $g=h$, respectively.
Detailed explanatory notes to classification results are presented only for the first gauge.
We give the same numbers for the corresponding ($\hat G^{\Equiv}$-equivalent) cases in the gauges $g=1$ and $g=h$.
The asterisked cases from tables~2 and~3 are equivalent to the cases from tables~2$'$ and~3$'$ with
the same numbers, where the parameter-function~$h$ takes the value $h=x$.
The similar non-asterisked cases correspond to the same cases from tables~2$'$ and~3$'$,
where
\[
p'=\dfrac{p-q}{q+1}, \quad q'=\dfrac{q}{q+1} \qquad \mbox{or} \qquad
p'=-\dfrac{1}{p+2} \quad \mbox{if} \quad q=p+1.
\]
For convenience we use double numeration T.N of classification cases and point equivalence
transformations, where T denotes the number of table and N does the number of case (or transformation) in table~T.
The notion ``equation~T.N'' is used for the equation of form~(\ref{eqDKfh}) where
the parameter-functions take values from the corresponding case.

The operators from tables~1--3 or tables~1$'$--3$'$ form bases of the maximal invariance algebras
if and only if the associated sets of the arbitrary elements $f$, $h$, $A$ and $B$ are
$\hat G^{\Equiv}$-inequivalent to ones with more extensive invariance algebras.
For example, the operators from case~3.1 have the above property if and only if $f\not=f^3$,
where the expression for~$f^3$ is given after table~3.
We indicate only some of constraints on parameters which arise in such way.

In spite of the fact that for the classification we used the generalized extended equivalence group,
the classification lists include similar equations that are equivalent only with respect to additional equivalence transformations.
Class~\eqref{eqDKfgh} is very reach from this point of view.
As one can see from the tables, there exist a lot of nontrivial additional transformations between the classified equations.
In fact, for both gauges $g=1$ and $g=h$ we carried out two essentially different classification:
the classification with respect to the generalized equivalence group~$\hat G^{\Equiv}$ and the classification with respect to
the set of all possible point transformations.

Numbers with the same Arabic numerals and different Roman letters
correspond to cases that are equivalent with respect to additional equivalence transformations.
Explicit formulas for these transformations are presented after the corresponding tables~1--3.
Any additional equivalence transformation between cases from tables~1$'$--3$'$ is obtained via the composition of
the inverse of transformation~\eqref{EqTransFromDKf1hToDKfhh},
an additional equivalence transformation between the corresponding cases from tables~1--3
and transformation~\eqref{EqTransFromDKf1hToDKfhh}.
The cases which are contained in tables with different Arabic numbers
or are numbered with different Arabic numerals are reciprocally inequivalent with respect to point transformations.
The exclusion is case 3.\ref{A1Bffhh} (resp.\ 3$'$.\ref{a1bafexhghhgex})
which is reduced to a subcase of case 1.\ref{gcAaBafexh1}a (resp.\ 1$'$.\ref{AaBafepxghh1}a).

\begin{theorem}\label{TheoremOnClassificationOfDCEfghWRTPointTrans}
Up to point transformations, a complete list of extensions of the maximal Lie invariance group of equations from class~\eqref{eqDKfgh}
is exhausted by the cases from tables~1--3 (resp.\ tables~1$'$--3$'$) numbered with Arabic numbers without Roman letters and
subcases ``a'' of each multi-case, excluding case 3.\ref{A1Bffhh} (resp.\ 3$'$.\ref{a1bafexhghhgex}).
\end{theorem}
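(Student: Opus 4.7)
The plan is to build on Theorem~\ref{TheoremOnClassificationOfeqDKfgh}, which already lists a complete set of $\hat G^{\sim}$-inequivalent cases of Lie symmetry extensions in class~\eqref{eqDKfgh}. What remains is to identify which of those $\hat G^{\sim}$-inequivalent cases are actually related by point transformations that lie outside $\hat G^{\sim}$ (i.e.\ by \emph{additional} equivalence transformations in the sense of Section~\ref{SectionOnGrClMethod}), and then to prove that no further identifications are possible. Concretely, the classification list modulo point transformations is obtained from the $\hat G^{\sim}$-list by collapsing the Roman-lettered subcases of each multi-case into its ``a'' representative and by absorbing the exceptional case $3.\ref{A1Bffhh}$ into a subcase of $1.\ref{gcAaBafexh1}\mathrm a$.

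The first step is the constructive part: for every multi-case in tables~1--3 (and analogously tables~1$'$--3$'$), exhibit an explicit point transformation that sends the ``a'' representative to each of the alternatives ``b'', ``c'', \ldots. These transformations are precisely those written out after each table, so verification consists of direct substitution into~\eqref{eqDKfgh}, tracking how the tuple $(f,g,h,A,B)$ transforms and confirming that each such transformation is not contained in~$\hat G^{\sim}$ (typically because it mixes $t$, $x$ and $u$ in a way not allowed by Theorem~2). The reduction of $3.\ref{A1Bffhh}$ to a subcase of $1.\ref{gcAaBafexh1}\mathrm a$ is handled by a single explicit transformation of the same type.

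The harder step is the converse: proving that the remaining representatives (the ``a'' subcases, the unlettered cases, and what survives of case~3.\ref{A1Bffhh}) are pairwise inequivalent with respect to arbitrary point transformations. The guiding principle is that point-equivalent equations must have similar maximal Lie invariance algebras, so the first filter is to compare the algebras listed in the tables as abstract Lie algebras (dimension, derived and lower central series, solvability/nilpotency type, structure of the radical, \emph{etc.}); pairs with non-isomorphic algebras are immediately separated. Cases that share the same abstract Lie algebra require a finer analysis: compare the realizations of these algebras as vector fields on $(t,x,u)$, using invariants of the realization such as orbit dimensions of the group action, existence and form of differential invariants, and the rank of the distribution spanned by the operators. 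This already rules out most remaining pairs.

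For the few residual pairs, the cleanest approach is to use the technique of admissible transformations from Section~\ref{SectionAdmisTransfTheory}: assume a hypothetical point transformation $\varphi\in\mathrm T(\theta,\tilde\theta)$ with unknown components $(T^t,T^x,T^u)$, substitute into~\eqref{eqDKfgh} and split with respect to the derivatives of~$u$; the resulting determining system on $\varphi$ is of the same shape as~\eqref{deteq1}--\eqref{deteq4}, and its compatibility analysis forces either $\theta=\tilde\theta$ up to a transformation already included in~$\hat G^{\sim}$, or an explicit contradiction with the forms of $f,h,A,B$ appearing in the two cases. The main obstacle is the combinatorial size of this last task: the tables contain many entries, several of which carry low-dimensional and structurally similar algebras (e.g.\ the $\langle\p_t,D\rangle$-type realizations arising from power-law nonlinearities), so each such pair must be examined individually. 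The preliminary structural results on $\mathrm T(\mathcal L|_{\mathcal S})$ announced in Section~\ref{SectionAdmisTransf} are precisely what make this residual bookkeeping tractable.
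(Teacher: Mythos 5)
Your proposal follows essentially the same route the paper indicates: the explicit additional equivalence transformations listed after the tables supply the constructive identifications (including the reduction of case~3.\ref{A1Bffhh} to a subcase of~1.\ref{gcAaBafexh1}a), while the pairwise inequivalence of the surviving representatives is argued via differences in the structure of the maximal Lie invariance algebras combined with the constraints on admissible transformations from Section~\ref{SectionAdmisTransf} (in particular the lemmas fixing $A$ and restricting $B$ modulo $G^{\Equiv}$). Note that the paper itself only sketches this argument and defers the detailed case-by-case verification to the second part of the series, so your plan is consistent with, and no less complete than, what is given here.
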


The proof of theorem~\ref{TheoremOnClassificationOfDCEfghWRTPointTrans} involves arguments on differences
in structure of maximal Lie invariance algebras of listed cases and
preliminary description of admissible transformations given in Section~\ref{SectionAdmisTransf}.
We plan to present it in the new version of the second part of the series.
The multifarious structure of additional equivalence transformations of class~\eqref{eqDKfgh}
displays a  structure complexity of the entire set of admissible transformations.

Analyzing the classification results in a way similar to~\cite{Popovych&Ivanova2004NVCDCEs}
leads to the following theorem.

\begin{theorem}\label{TheoremOnReducibilityOfEqsWith4DimAlgs}
If an equation of form~\eqref{eqDKfgh} is invariant with respect to a Lie algebra
of dimension not less than 4 then it can be reduced by a point transformation to a one with $f=g=h=1$.
\end{theorem}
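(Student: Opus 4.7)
The plan is to combine the two classification theorems (Theorems~\ref{TheoremOnClassificationOfeqDKfgh} and~\ref{TheoremOnClassificationOfDCEfghWRTPointTrans}) with a case-by-case inspection of the dimensions listed in the classification tables. Since Theorem~\ref{TheoremOnClassificationOfDCEfghWRTPointTrans} already gives a complete list of $\hat G^{\sim}$-inequivalent cases of Lie symmetry extension \emph{up to the full action of point transformations}, any equation of form~\eqref{eqDKfgh} admitting a Lie invariance algebra of dimension $\geqslant 4$ must be point-equivalent to one of the representatives enumerated in Theorem~\ref{TheoremOnClassificationOfDCEfghWRTPointTrans}. The goal is then reduced to verifying, by direct inspection of the $\dim A^{\max}$ entries in tables~1--3 (or equivalently~1$'$--3$'$), that all the representatives with $\dim A^{\max}\geqslant 4$ have $f=g=h=1$.

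First I would enumerate the candidate cases: run through the bases of invariance operators displayed in tables~1--3 and collect every row for which the number of listed generators (in addition to $\partial_t$) is at least three. From the structure of the classification (the kernel is $\langle\partial_t\rangle$ by Theorem~\ref{th1}, and extensions typically add a scaling of $t$, a scaling/translation in $x$, and possibly a scaling/translation in $u$), one expects genuine $4$-dimensional extensions to appear only for very degenerate arbitrary elements: power or exponential $A(u)$ and $B(u)$, together with $f,h$ of matching power or exponential form. Once the list is compiled, for each such entry I would check the subcolumn listing $f$ and $h$: whenever $f$ or $h$ is not already $1$, it must coincide with the representative of a subcase that is related by one of the additional equivalences recorded after tables~1--3 to an entry having $f=g=h=1$. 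These additional equivalences were explicitly listed in the classification phase and amount to specialisations of the transformations from $G^{\sim}/G^{\sim g}$ combined with the nonlocal gauge~\eqref{GaugeEquivTransformationsDKfgh}.

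For the cases where $\dim A^{\max}=\infty$ (the linear situations: $A,B=\const$, giving the heat or linear convection-diffusion equation), the reduction to $f=g=h=1$ is immediate from formulas~\eqref{RealEquivTransformationsDKfgh} combined with~\eqref{EqTransFromDKf1hToDKfhh}: in the gauge $g=1$ one first absorbs $f$ and $h$ by a suitable $X(x)$ and a rescaling of $t$, which is possible precisely because of the absence of classifying obstructions from $A_u,B_u$. For the genuinely nonlinear cases (e.g.\ Burgers-type equations with $A=\const$, $B=2u$, or the potential Burgers / fast-diffusion representatives with $\dim A^{\max}=4$ or~$5$), one verifies by matching structure constants against the constant-coefficient classification~\cite{Popovych&Ivanova2004NVCDCEs} that the variable-coefficient variants carrying these extended algebras are exactly the ones that can be mapped onto the corresponding constant-coefficient equation via an explicit member of $G^{\sim}$ (possibly composed with an additional equivalence). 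A contrapositive formulation is then clean: any equation~\eqref{eqDKfgh} that is \emph{not} reducible to $f=g=h=1$ sits in a classification row whose algebra is of dimension at most~$3$.

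The main obstacle will be making the case enumeration exhaustive without appealing to the tables implicitly: one has to be sure that no ``accidental'' $4$-dimensional extension has been buried inside a constraint such as $f=f^3(x)$ (the exceptional functions excluded after table~3) or inside the $G^{\sim}$-equivalence classes merged by Roman-letter labels. I would handle this by reading off the dimension directly from the basis of operators in each row and cross-checking against the constant-coefficient classification of~\cite{Popovych&Ivanova2004NVCDCEs}, which provides the closed list of nonlinear equations $u_t=(A(u)u_x)_x+B(u)u_x$ with $\dim A^{\max}\geqslant 4$. Matching the two lists, together with the explicit additional equivalence transformations reproduced after tables~1--3, yields the claim.
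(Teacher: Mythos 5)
Your proposal is correct and takes essentially the same route as the paper, which justifies this theorem only by the remark that ``analyzing the classification results in a way similar to \cite{Popovych&Ivanova2004NVCDCEs} leads to the following theorem'' --- i.e.\ exactly the inspection of the $\dim A^{\max}\geqslant 4$ rows of tables~1--3 together with the explicitly listed additional equivalence transformations that you describe. One minor remark: the genuinely linear case $A_u=B_u=0$ is excluded from class~\eqref{eqDKfgh} by the assumption $(A_u,B_u)\ne(0,0)$, so your discussion of infinite-dimensional algebras for linear equations is superfluous rather than wrong.
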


\begin{note}
The simultaneous usage of two gauges $g=1$ and $g=h$ also allows us to explain singularity
of some values of parameters with respect to Lie symmetry properties.
For instance, in case~3.\ref{AumB0f1h}e the singular values of $\mu$ are $-2$, $-4/3$ and~$-1$.
Singularity of $\mu=-1$ is obvious since $\mu+1$ is in the denominator of the power of $|x|$ in~$f$.
Singularity of $\mu=-4/3$ can be explained by the fact that the power in~$f$ vanishes for this
value of~$\mu$.
At the same time, singularity of $\mu=-2$ becomes apparent only after change of the gauge $g=1$
to $g=h$ where the power of $|x|$ in~$f$ equals $(\mu+1)/(\mu+2)$.
\end{note}

\newpage

\setcounter{tbn}{0}

\begin{center}\footnotesize\renewcommand{\arraystretch}{1.1}
Table~1. Case of $\forall A(u)$ (gauge $g=1$)\\[1ex]
\begin{tabular}{|l|c|c|c|l|}
\hline
N & $B(u)$ & $f(x)$ & $h(x)$ & \hfil Basis of A$^{\rm max}$ \\
\hline
\refstepcounter{tbn}\label{gcAaBafaha}\thetbn & $\forall$ & $\forall$ & $\forall$ & $\p_t$ \\
\refstepcounter{tbn}\label{gcAaBafexh1}
\thetbn a&$\forall$ & $e^{p x}$ & 1 &$\p_t,\, p t\p_t+\p_x$ \\
\thetbn a$'$ & $\forall$ & $|x|^p$ & $x^{-1}$ & $\p_t,\, (p+2)t\p_t+x\p_x$ \\
\thetbn b & 1 & $e^x$ & $e^x+\beta$ & $\p_t,\, e^{-t}(\p_t-\p_x)$ \\
\thetbn c & 1 & $|x|^p$ & $x|x|^p+\beta x^{-1}$ & $\p_t,\, e^{-(p+2)t}(\p_t-x\p_x)$ \\
\refstepcounter{tbn}\label{gcAaB1fx-2hlnxx}
\thetbn & 1 & $x^{-2}$ & $x^{-1}\ln|x|$ & $\p_t,\, e^{-t}x\p_x$ \\
\refstepcounter{tbn}\label{gcAaB0f1ha}
\thetbn a& 0 & 1 & 1 & $\p_t,\, \p_x,\, 2t\p_t+x\p_x$ \\
\thetbn b& 1 & 1 & $x$ & $\p_t,\, e^{-t}\p_x,\, e^{-2t}(\p_t-x\p_x)$ \\
\thetbn c& 1 & 1 & 1 & $\p_t,\, \p_x,\, 2t\p_t+(x-t)\p_x$ \\
\hline
\end{tabular}
\end{center}
{\footnotesize
\noindent
Here $p\in\{0,1\}$ in case~\ref{gcAaBafexh1}a;
$p\ne-2$ in case \ref{gcAaBafexh1}c;
$\beta\in\{0,\pm 1\}$ in cases~\ref{gcAaBafexh1}b and~\ref{gcAaBafexh1}c.
Case \ref{gcAaBafexh1}a$'$ is equivalent to case \ref{gcAaBafexh1}a with respect to the transformation
$
\tilde t=t,\ \tilde x=\ln|x|,\ \tilde u=u,\ \tilde A=A,\ \tilde B=B-A,\ \tilde p=p+2
$
from $\hat G^\sim_1$. It is given for the convenience of presentation of results only.

\smallskip

\setcounter{casetran}{0}\noindent
Additional equivalence transformations:\\[1ex]
\refstepcounter{casetran}\thecasetran. \ref{gcAaBafexh1}a($p=0$, $B=1$)
$\to$ \ref{gcAaBafexh1}a($p=0$, $B=0$):\quad
$\tilde t=t$, $\tilde x=x+t$, $\tilde u=u$;
\\[.7ex]
\thecasetran$'$. \ref{gcAaBafexh1}a$'$($p=-2$, $B=1$) $\to$
\ref{gcAaBafexh1}a$'$($p=-2$, $B=0$):\quad $\tilde t=t$, $\tilde x=xe^t$, $\tilde u=u$;
\\[.7ex]
\refstepcounter{casetran}\thecasetran. \ref{gcAaBafexh1}b $\to$ \ref{gcAaBafexh1}a($B=\beta$, $p=1$):\quad
$\tilde t=e^t$, $\tilde x=x+t$, $\tilde u=u$;
\\[.7ex]
\refstepcounter{casetran}\thecasetran. \ref{gcAaBafexh1}c($p\ne-2$) $\to$ \ref{gcAaBafexh1}a$'$($p\ne-2$):\quad
$\tilde t=(e^{(p+2)t}-1)/(p+2)$, $\tilde x=xe^t$, $\tilde u=u$;
\\[.7ex]
\refstepcounter{casetran}\thecasetran. \ref{gcAaB0f1ha}b $\to$ \ref{gcAaB0f1ha}a:\quad
$\tilde t=e^{2t}/2$, $\tilde x=xe^t$, $\tilde u=u$;
\\[.7ex]
\refstepcounter{casetran}\thecasetran. \ref{gcAaB0f1ha}c $\to$ \ref{gcAaB0f1ha}a:\quad
$\tilde t=t$, $\tilde x=x+t$, $\tilde u=u$.
\par}

\setcounter{tbn}{0}

\begin{center}\footnotesize\renewcommand{\arraystretch}{1.1}
Table~2. Case of $A(u)=e^{\mu u}$ (gauge $g=1$)\\[1ex]
\begin{tabular}{|l|c|c|c|l|}
\hline
N & $B(u)$ & $f(x)$ & $h(x)$ & \hfil Basis of A$^{\rm max}$ \\
\hline
\refstepcounter{tbn}\label{AeuB0faha}\thetbn & 0 & $\forall$ & 1 & $\p_t,\, t\p_t-\p_u$ \\
\refstepcounter{tbn}\label{AeuBenufemxhex}\thetbn & $e^{\nu u}$ & $|x|^p$ & $|x|^q$ &
$\p_t,\, (p\mu-p\nu-2\nu-q\mu+\mu)t\p_t+(\mu-\nu)x\p_x+(q+1)\p_u$ \\
\thetbn${}^*$ & $e^{\nu u}$ & $e^{p x}$ & $\varepsilon e^{x}$ &
$\p_t,\, (p\mu-p\nu-\mu)t\p_t+(\mu-\nu)\p_x+\p_u$ \\
\refstepcounter{tbn}\label{AeuBueufh}\thetbn & $ue^u$ & $h^2e^{q\int\!h}$ & $(h^{-1})''=-2p h$
& $\p_t,\, (2p+q) t\p_t+h^{-1}\p_x-2p\p_u$ \\
\refstepcounter{tbn}\label{AeuBenukf1h1}\thetbn & $e^u+\varkappa$ & 1 & 1 &
$\p_t,\, \p_x,\, (\mu-2)t\p_t+((\mu-1)x+\varkappa t)\p_x+\p_u$ \\
\refstepcounter{tbn}\label{AeuBuf1h1}\thetbn & $u$ & 1 & 1 &$\p_t,\, \p_x,\, t\p_t+(x-t)\p_x+\p_u$ \\
\refstepcounter{tbn}\label{AeuB0ff1h}\thetbn a &
0 & $f^1(x)$ & 1 & $\p_t,\, t\p_t-\p_u,\, (\beta x^2+\gamma_1x+\gamma_0)\p_x+(\beta x+\alpha)\p_u$ \\
\thetbn b &
1 & $|x|^p$ & $\varepsilon x|x|^{p}$ & $\p_t,\, x\p_x+(p+2)\p_u,\, e^{-\varepsilon(p+2)t}(\p_t-\varepsilon x\p_x)$ \\
\thetbn b${}^*$ &
1 & $e^{x}$ & $\varepsilon e^{x}$ & $\p_t,\, \p_x+\p_u,\, e^{-\varepsilon t}(\p_t-\varepsilon\p_x)$ \\
\thetbn c & 1 & $x^{-2}$ & $\varepsilon x^{-1}$ & $\p_t,\, x\p_x,\, t\p_t-\varepsilon tx\p_x-\p_u$ \\
\refstepcounter{tbn}\label{AeuB0f1h}\thetbn a & 0 & 1& 1 & $\p_t,\, t\p_t-\p_u,\, 2t\p_t+x\p_x,\, \p_x$ \\
\thetbn b & 1 & 1 & 1 & $\p_t, \,\p_x,\, t\p_t-t\p_x-\p_u,\, 2t\p_t+(x-t)\p_x$ \\
\thetbn c & 1 & 1 & $\varepsilon x$ & $\p_t,\, x\p_x+2\p_u,\, e^{-\varepsilon t}\p_x,\,
e^{-2\varepsilon t}(\p_t-\varepsilon x\p_x)$ \\
\thetbn d & 0 & $x^{-3}$ & 1 & $ \p_t,\, t\p_t-\p_u,\, x\p_x-\p_u,\, x^2\p_x+x\p_u$ \\
\thetbn e & 1 & $x^{-3}$ & $x^{-2}$ & $\p_t,\, x\p_x-\p_u,\, e^t(\p_t-x\p_x),\, e^t(x^2\p_x+x\p_u)$ \\
\hline
\end{tabular}
\end{center}
{\footnotesize\looseness=-1
Here $(\mu,\,\nu)\in\{(0,\,1),\, (1,\,\nu)\}$, $\nu\ne\mu$ in cases~\ref{AeuBenufemxhex} and \ref{AeuBenufemxhex}${}^*$;
$\mu\ne1$ and $\varkappa\in\{-1,0,1\}$ in case~\ref{AeuBenukf1h1};
$\mu=1$ in cases~\ref{AeuB0faha}, \ref{AeuBuf1h1}--\ref{AeuB0f1h}e;
$q\ne-1$ in case~\ref{AeuBenufemxhex} (otherwise it is a subcase of case~1.\ref{gcAaBafexh1}a$'$);
$\varepsilon=\pm1$ in cases \ref{AeuBenufemxhex}${}^*$, \ref{AeuB0ff1h}b, \ref{AeuB0ff1h}b${}^*$, \ref{AeuB0ff1h}c and \ref{AeuB0f1h}c;
$p\not\in\{-3,-2,0\}$ in case~\ref{AeuB0ff1h}b;
$(\beta,\gamma_1,\gamma_0,\alpha)\in\{(1,0,\pm1,\hat\alpha),(1,0,0,1),(0,1,0,\check\alpha),(0,0,1,1)\}$, $\hat\alpha,\check\alpha=\const$, $\hat\alpha\geqslant0$
and
\[
f^1(x)=\exp\left\{\int\frac{-3\beta x-2\gamma_1+\alpha}{\beta x^2+\gamma_1x+\gamma_0}\,dx\right\}.
\]

\setcounter{casetran}{0}\noindent
Additional equivalence transformations:\\[1ex]
\refstepcounter{casetran}\thecasetran. \ref{AeuBenukf1h1}($\varkappa\ne0$) $\to$ \ref{AeuBenukf1h1}($\varkappa=0$):\quad
$\tilde t=t$, $\tilde x=x+\varkappa t$, $\tilde u=u$;
\\[.7ex]
\refstepcounter{casetran}\thecasetran. \ref{AeuB0ff1h}b $\to$ \ref{AeuB0ff1h}a
($\beta=\gamma_0=0$, $\alpha=(p+2)\gamma_1$):\quad $\tilde t=(e^{\varepsilon(p+2)t}-1)/(\varepsilon(p+2))$, $\tilde x=xe^{\varepsilon t}$, $\tilde u=u$;
\\[.7ex]
\refstepcounter{casetran}\thecasetran. \ref{AeuB0ff1h}b${}^*$ $\to$ \ref{AeuB0ff1h}a
($\beta=\gamma_1=0$, $\alpha=\gamma_0$):\quad $\tilde t=e^{\varepsilon t}/\varepsilon$, $\tilde x=x+\varepsilon t$, $\tilde u=u$;
\\[.7ex]
\refstepcounter{casetran}\thecasetran. \ref{AeuB0ff1h}c $\to$ \ref{AeuB0ff1h}a
($\beta=\gamma_0=\alpha=0$):\quad $\tilde t=t$, $\tilde x=xe^{\varepsilon t}$, $\tilde u=u$;
\\[.7ex]
\refstepcounter{casetran}\thecasetran.
\ref{AeuB0f1h}b$\to$\ref{AeuB0f1h}a:\quad $\tilde t= t$, $\tilde x=x+t$, $\tilde u=u$;
\\[.7ex]
\refstepcounter{casetran}\thecasetran. \ref{AeuB0f1h}c$\to$\ref{AeuB0f1h}a:\quad
$\tilde t=e^{2\varepsilon t}/(2\varepsilon)$, $\tilde x=xe^{\varepsilon t}$, $\tilde u=u$;
\\[.7ex]
\refstepcounter{casetran}\thecasetran.
\ref{AeuB0f1h}d$\to$\ref{AeuB0f1h}a:\quad $\tilde t= t\sign x$, $\tilde x=1/x,$ $\tilde u=u-\ln |x|$;
\\[.7ex]
\refstepcounter{casetran}\thecasetran. \ref{AeuB0f1h}e$\to$\ref{AeuB0f1h}a:\quad $\tilde t=(e^{2t}t\sign x)/2$,
$\tilde x=e^{-t}/x,$ $\tilde u=u-t-\ln |x|$.
\par}

\newpage
\setcounter{tbn}{0}

\begin{center}\footnotesize\renewcommand{\arraystretch}{1.35}
Table~3. Case of $A(u)=|u|^\mu$ (gauge $g=1$)\\[1ex]
\begin{tabular}{|l|l|c|c|c|l|}
\hline
\hfil N & $\hfil \mu$ & $B(u)$ & $f(x)$ & $h(x)$ & \hfil Basis of A$^{\rm max}$ \\
\hline
\refstepcounter{tbn}\label{AumB0fh}\thetbn & $\forall$ & 0 & $\forall$ & 1 & $\p_t,\, \mu t\p_t-u\p_u$ \\
\refstepcounter{tbn}\label{AumBunfxlhxg}\thetbn&
$\forall$ & $|u|^\nu$ & $|x|^p$ & $|x|^q$ & $\p_t,\, (\mu+p\mu-q\mu-p\nu-2\nu)t\p_t$ \\
&&&&&${}+(\mu-\nu)x\p_x+(q+1)u\p_u$ \\
\thetbn${}^*$ &
$\forall$ & $|u|^\nu$ & $e^{px}$ & $\varepsilon e^{x}$ & $\p_t,\, (p\mu-p\nu-\mu)t\p_t+(\mu-\nu)\p_x+u\p_u$ \\
\refstepcounter{tbn}\label{AumBumlnuffhh}\thetbn & $\forall$ & $|u|^\mu\ln|u|$ & $h^2e^{q\int\!h}$
& $(h^{-1})''=-2p h$ & $\p_t,\,(2p\mu+q)t\p_t+h^{-1}\p_x-2pu\p_u$ \\
\refstepcounter{tbn}\label{Au-1B1fx-3ebxhx-2ebx}\thetbn & $\forall$ & 1 & $f^2(x)$ & $\varepsilon xf^2(x)$ & $\p_t,$ \\
&&&&& $e^{\varepsilon t}(\p_t-\varepsilon((\mu+1)\beta x^2+x)\p_x-\varepsilon\beta xu\p_u)$ \\
\refstepcounter{tbn}\label{A1Bafh2hh}\thetbn & 0 & $\forall$ & $h^2$ &
$(h^{-1})''=-2p h$ & $\p_t,\, e^{-2p t}h^{-1}\p_x$ \\
\refstepcounter{tbn}\label{A1Bffhh}\thetbn & 0 & $\forall$ & $x^{-2}/\ln|x|$ &
$x^{-1}/\ln|x|$ & $\p_t,\, e^{t}(\p_t+x\ln|x|\p_x)$ \\
\refstepcounter{tbn}\label{A1Buffhh}\thetbn & 0 & $u$ & $h^2e^{\int\!h}$ &
$\left( h^{-1}\right)''=-2p h$ & $\p_t,\, t\p_t+h^{-1} \p_x-2p\p_u$ \\
\refstepcounter{tbn}\label{AumBunkf1h1}\thetbn & $\forall$ & $|u|^\nu+\varkappa$ & 1 & 1 & $\p_t,\, \p_x,$ \\
&&&&& $(\mu-2\nu)t\p_t+((\mu-\nu)x+\nu\varkappa t)\p_x+u\p_u$ \\
\refstepcounter{tbn}\label{AumBlnuf1h1}\thetbn & $\forall$ & $\ln|u|$ & 1 & 1 &
$\p_t,\, \p_x,\, \mu t\p_t+(\mu x-t)\p_x+u\p_u$ \\
\refstepcounter{tbn}\label{A1Bufh2hh}\thetbn & 0 & $u$ & $h^2$ & $\left(h^{-1}\right)''=-2p h$ &
$\p_t,\, {e^{-2p t}}h^{-1}\p_x,\, h^{-1}\p_x-2p\p_u$ \\
\refstepcounter{tbn}\label{A1Blnufh2hh}\thetbn & 0 & $\ln|u|$ & $h^2$ & $\left(h^{-1}\right)''=-2p h$ &
$\p_t,\, {e^{-2p t}}h^{-1}\p_x,\, h^{-1}\p_x-2p u\p_u$ \\
\refstepcounter{tbn}\label{AumB0ff3hh}\thetbn a & $\forall$ & 0 & $f^3(x)$ & 1 &$\p_t,\, \mu t\p_t-u\p_u,$ \\
&&&&& $\alpha t\p_t+((\mu+1)\beta x^2+\gamma_1x+\gamma_0)\p_x+\beta xu\p_u$ \\
%
%
\thetbn b & $\forall$ & 1 & $|x|^p$ & $\varepsilon x|x|^{p}$ &
$\p_t,\, \mu x\p_x+(p+2)u\p_u,\, e^{-\varepsilon(p+2)t}(\p_t-\varepsilon x\p_x)$ \\
%
\thetbn b${}^*\!\!\!$ &
$\ne-1$ & 1 & $e^{x}$ & $\varepsilon e^{x}$ & $\p_t,\, \mu\p_x+u\p_u,\, e^{-\varepsilon t}(\p_t-\varepsilon\p_x)$ \\
\thetbn c & $\ne-2$ & 1& $x^{-2}$ & $\varepsilon x^{-1}$ & $\p_t,\, x\p_x,\, \mu t\p_t-\varepsilon\mu tx\p_x-u\p_u$ \\
\refstepcounter{tbn}\label{A-65B1fx2hx2}\thetbn & $-6/5$ & 1 & $x^2$ & $x^2$ & $\p_t,\, 2t\p_t+2x\p_x-5u\p_u,$ \\
&&&&& $t^2\p_t+(2tx+x^2)\p_x-5(t+x)u\p_u$ \\
\refstepcounter{tbn}\label{AumB0f1h}\thetbn a & $\ne -4/3$ & 0 & 1 & 1 &
$\p_t,\, \mu t\p_t-u\p_u,\, \p_x,\, 2t\p_t+x\p_x$\\
\thetbn b & $\ne -4/3$ & 1 & 1 & 1 & $\p_t,\, \mu t\p_t-\mu t\p_x-u\p_u,\, \p_x,\, 2t\p_t+(x-t)\p_x$ \\
\thetbn c & $\ne -4/3$ & 1 & 1 & $\varepsilon x$ &
$\p_t,\, \mu x\p_x+2u\p_u,\, e^{-\varepsilon t}\p_x,\, e^{-2\varepsilon t}(\p_t-\varepsilon x\p_x)$ \\
\thetbn d & $\ne -4/3, -1$ & 0 & $|x|^{-\frac{3\mu+4}{\mu+1}}$ & 1 &
$\p_t,\, \mu t\p_t-u\p_u,\, (\mu+2)t\p_t-(\mu+1)x\p_x,$ \\ &&&&& $(\mu+1)x^2\p_x+xu\p_u $ \\
\thetbn e&$\ne -2,$  & 1 & $|x|^{-\frac{3\mu+4}{\mu+1}}$ & $\varepsilon x|x|^{-\frac{3\mu+4}{\mu+1}}$ &
$\p_t,\, \mu(\mu+1)x\p_x-(\mu+2)u\p_u,$ \\
&$-4/3,-1$&&&& $e^{\varepsilon\frac{\mu+2}{\mu+1}t}(\p_t-\varepsilon x\p_x),\,e^{\varepsilon t}((\mu+1)x^2\p_x+xu\p_u)$\\
\thetbn f & $-1$ & 0 & $e^x$ & 1 & $ \p_t,\, t\p_t+u\p_u,\, \p_x-u\p_u,\,2t\p_t+x\p_x-xu\p_u$ \\
\thetbn g & $-1$ & 1 & $e^{x}$ & $\varepsilon e^{x}$ &
$\p_t,\, \p_x-u\p_u,\, (x+\varepsilon t-2)\p_x-(x+\varepsilon t)u\p_u,$ \\
&&&&& $e^{-\varepsilon t}(\p_t-\varepsilon\p_x)$ \\
\thetbn h & $-2$ & 1 & $x^{-2}$ & $\varepsilon x^{-1}$ & $\p_t,\, x\p_x,\, 2t\p_t-2\varepsilon tx\p_x+u\p_u,$ \\
&&&&& $e^{\varepsilon t}(x^2\p_x-xu\p_u)$\\
\refstepcounter{tbn}\label{A-43B0f1h}\thetbn a & $-4/3$ & 0 & 1 & 1 &
$\p_t,\, 4t\p_t+3u\p_u,\, \p_x,\, 2t\p_t+x\p_x,$ \\ &&&&& $x^2\p_x-3xu\p_u$ \\
\thetbn b & $-4/3$ & 1 & 1 & 1 & $\p_t,\, 4t\p_t+4x\p_x-3u\p_u,\, 2t\p_t+(x-t)\p_x,$ \\
 &&&&&$\p_x,\, (x+t)^2\p_x-3(x+t)u\p_u$ \\
\thetbn c & $-4/3$ & 1 & 1 & $\varepsilon x$ & $\p_t,\, 2x\p_x-3u\p_u,\, e^{-\varepsilon t}\p_x,$ \\
&&&&& $e^{-2\varepsilon t}(\p_t-\varepsilon x\p_x),\, e^{\varepsilon t}(x^2\p_x-3xu\p_u)$ \\
\refstepcounter{tbn}\label{A1Buf1h1}\thetbn & 0 & $u$ & 1 & 1 & $ \p_t,\, \p_x,\, t\p_x-\p_u,\, 2t\p_t+x\p_x-u\p_u,$ \\
&&&&& $t^2\p_t+tx\p_x-(tu+x)\p_u$ \\
%
\hline
\end{tabular}
\end{center}
{\footnotesize
Here $\nu\ne\mu$; $\varepsilon=\pm1$;
$\varkappa\in\{-1,0,1\}$ in case~\ref{AumBunkf1h1};
$q\ne-1$ in case~\ref{AumBunfxlhxg} (otherwise it is a subcase of case~1.\ref{gcAaBafexh1}a$'$);
$p\ne0$ in cases~\ref{AumBumlnuffhh} and~\ref{A1Buffhh} (otherwise they are subcases of case~1.\ref{gcAaBafexh1}a);
$p=\pm1$ in cases~\ref{A1Bafh2hh}, \ref{A1Bufh2hh} and~\ref{A1Blnufh2hh};
$p\ne-2,-(3\mu+4)/(\mu+1)$ in case~\ref{AumB0ff3hh}b;
$\alpha$, $\beta$, $\gamma_1$, $\gamma_0=\const$;
$\beta\ne0$ in case~\ref{Au-1B1fx-3ebxhx-2ebx} (otherwise it is a subcase of case~\ref{AumB0ff3hh}.b);
in case~\ref{AumB0ff3hh}a
$(\beta,\gamma_1,\gamma_0,\alpha)\in\{(\pm1,0,1,\hat\alpha),(1,1,0,\check\alpha),(0,1,0,\check\alpha)\}$
if $\mu=-1$ and
$(\beta,\gamma_1,\gamma_0,\alpha)\in\{(1,0,\pm1,\hat\alpha),(1,0,0,1),(0,1,0,\check\alpha),(0,0,1,1)\}$
if $\mu\ne-1$, where $\hat\alpha,\check\alpha=\const$, $\hat\alpha\geqslant0$;
\[
f^2(x)=\exp\left\{\int\frac{-(3\mu+4)\beta x-3}{(\mu+1)\beta x^2+x}\,dx\right\},\quad
f^3(x)=\exp\left\{\int\frac{-(3\mu+4)\beta x-2\gamma_1+\alpha}{(\mu+1)\beta x^2+\gamma_1x+\gamma_0}\,dx\right\}.
\]
Additional equivalence transformations:\\[1ex]
\setcounter{casetran}{0}
\refstepcounter{casetran}\thecasetran.
\ref{AumBunkf1h1}($\varkappa\ne0$) $\to$ \ref{AumBunkf1h1}($\varkappa=0$):\quad
$\tilde t=t$, $\tilde x=x+\varkappa t$, $\tilde u=u$;
\\[1ex]
\refstepcounter{casetran}\thecasetran.
\ref{AumB0ff3hh}b $\to$ \ref{AumB0ff3hh}a($\beta=\gamma_0=0$, $\alpha=(p+2)\gamma_1$), 
\ref{AumB0f1h}e $\to$ \ref{AumB0f1h}a($p=-\frac{3\mu+4}{\mu+1}$):\quad
$\tilde t=(e^{\varepsilon(p+2)t}-1)/(\varepsilon(p+2))$, $\tilde x=xe^{\varepsilon t}$, $\tilde u=u$;
\\[1ex]
\refstepcounter{casetran}\thecasetran. \ref{AumB0ff3hh}b${}^*$ $\to$ \ref{AumB0ff3hh}a($\beta=\gamma_1=0$, $\alpha=\gamma_0$):\quad
$\tilde t=e^{\varepsilon t}/\varepsilon$, $\tilde x=x+\varepsilon t$, $\tilde u=u$;
\\[1ex]
\refstepcounter{casetran}\thecasetran. \ref{AumB0ff3hh}c $\to$ \ref{AumB0ff3hh}a($\beta=\gamma_0=\alpha=0$),  
\ref{AumB0f1h}h $\to$ \ref{AumB0f1h}a:\quad
$\tilde t=t$, $\tilde x=xe^{\varepsilon t}$, $\tilde u=u$;
\\[1ex]
%
%
\refstepcounter{casetran}\thecasetran.
\ref{AumB0f1h}b $\to$ \ref{AumB0f1h}a, \ref{A-43B0f1h}b $\to$ \ref{A-43B0f1h}a:\quad
$\tilde t=t$, $\tilde x=x-t$, $\tilde u=u$;
\\[1ex]
\refstepcounter{casetran}\thecasetran.
\ref{AumB0f1h}c $\to$ \ref{AumB0f1h}a, \ref{A-43B0f1h}c $\to$ \ref{A-43B0f1h}a:\quad
$\tilde t=e^{2\varepsilon t}/(2\varepsilon)$, $\tilde x=xe^{\varepsilon t}$, $\tilde u=u$;
\\[1ex]
%
\refstepcounter{casetran}\thecasetran.
\ref{AumB0f1h}d $\to$ \ref{AumB0f1h}a: $\tilde t=t$, $\tilde x=-1/x$, $\tilde u=|x|^{-\frac1{1+\mu}}u$;
\\[1ex]
\refstepcounter{casetran}\thecasetran.
\ref{AumB0f1h}f $\to$ \ref{AumB0f1h}a($\mu=-1$): $\tilde t=t$, $\tilde x=x$, $\tilde u=e^{x}u$;
\\[1ex]
\refstepcounter{casetran}\thecasetran.
\ref{AumB0f1h}g $\to$ \ref{AumB0f1h}a($\mu=-1$):
  $\tilde t=e^{\varepsilon t}/\varepsilon$, $\tilde x=x+\varepsilon t$, $\tilde u=e^{x+\varepsilon t}u$.
%
%
\par
}

\vspace{3ex}

{\footnotesize\noindent
Additional equivalence transformations between cases from different tables:\\[1ex]
\setcounter{casetran}{0}
\refstepcounter{casetran}\thecasetran.
3.\ref{A1Bffhh} $\to$ 1.\ref{gcAaBafexh1}a$'$($A=1$, $p=-1$):\quad
$\tilde t=-e^{-t}\sign\ln|x|$, $\tilde x=e^{-t}\ln|x|$, $\tilde u=u$.
}

\vspace{2.5ex}

\setcounter{tbn}{0}
\begin{center}\footnotesize\renewcommand{\arraystretch}{1.2}
Table~1$'$. Case of $\forall A(u)$ (gauge $g=h$)\\[1ex]
\begin{tabular}{|l|c|c|c|l|}
\hline
N & $B(u)$ & $f(x)$ & $h(x)$ &\hfil Basis of A$^{\rm max}$ \\
\hline
\refstepcounter{tbn}\label{AaBafaghha}\thetbn & $\forall$ & $\forall$ & $\forall$ & $\p_t$ \\
\refstepcounter{tbn}\label{AaBafepxghh1}\thetbn a& $\forall$ & $e^{px}$ & 1 & $\p_t,\, pt\p_t+\p_x$ \\
\thetbn b& 1 & $h_x$ & $(hh_x)_x=h_x$ &
$\p_t,\, e^{-t}(\p_t-h\p_x)$ \\
\thetbn c& 1 & $h^{-1}|\hat G(h)|^p$ & $hh_x=G'(\hat G(h))$ &
$\p_t,\, e^{-(p+2)t}(\p_t-h\hat G(h)\p_x)$\\
\refstepcounter{tbn}\label{AaB1ffghhh}\thetbn & 1 & $|x|^{-1/2}e^{-|x|^{1/2}}$ &
$|x|^{-1/2}e^{|x|^{1/2}}$ & $\p_t,\, e^{-t/2}|x|^{1/2}\p_x$ \\
\refstepcounter{tbn}\label{AaB0f1ghh1}
\thetbn a& 0 & 1 & 1 & $\p_t,\, \p_x,\, 2t\p_t+x\p_x$ \\
\thetbn b& 1 & $|x|^{-1/2}$ & $|x|^{1/2}$ & $\p_t,\, e^{-t/2}|x|^{1/2}\p_x,\, e^{-t}(\p_t-x\p_x)$ \\
\thetbn c& 1 & 1 & 1 & $\p_t,\, \p_x,\, 2t\p_t+(x-t)\p_x$ \\
\hline
\end{tabular}
\end{center}
{\footnotesize
Here $p\ne-2$ in case~\ref{AaBafepxghh1}c, $G(z)=z|z|^p+\beta z$, $\beta=\const$,
$\hat G$ is the inverse function of~$G$.
}

\vspace{3ex}

\setcounter{tbn}{0}

\begin{center}\footnotesize\renewcommand{\arraystretch}{1.2}
Table~2$'$. Case of $A(u)=e^{\mu u}$ (gauge $g=h$)\\[1ex]
\begin{tabular}{|l|c|c|c|l|}
\hline
N & $B(u)$ & $f(x)$ & $h(x)$ &\hfil Basis of A$^{\rm max}$ \\
\hline
\refstepcounter{tbn}\label{AeuB0faghh1}\thetbn & 0 & $\forall$ & 1 & $\p_t,\, t\p_t-\p_u$ \\
\refstepcounter{tbn}\label{AeuBenufxpghhxq}\thetbn & $e^{\nu u}$ & $|x|^p$ & $|x|^q$ &
$\p_t,\, ((p-q+1)\mu-(p-q+2)\nu)t\p_t+(\mu-\nu)x\p_x+\p_u$ \\
\refstepcounter{tbn}\label{AeuBueufex2+xghhex2}\thetbn & $ue^u$ & $e^{p x^2+q x}$ & $e^{p x^2}$ &
$\p_t,\, (2p+q)t\p_t+\p_x-2p\p_u$ \\
\refstepcounter{tbn}\label{AeuBenu+kf1ghh1}\thetbn & $e^u+\varkappa$ & 1 & 1 &
$\p_t,\, \p_x,\, (\mu-2)t\p_t+((\mu-1)x+\varkappa t)\p_x+\p_u$ \\
\refstepcounter{tbn}\label{AeuBuf1ghh1}\thetbn & $u$ & 1 & 1 & $\p_t,\, \p_x,\, t\p_t+(x-t)\p_x+\p_u$ \\
\refstepcounter{tbn}\label{AeuB0ff1ghh1}\thetbn a & 0 & $f^1(x)$ & 1
&$\p_t,\, t\p_t-\p_u,\, (\beta x^2+\gamma_1x+\gamma_0)\p_x+(\beta x+\alpha)\p_u$ \\
\thetbn b& 1 & $|x|^p$ & $x|x|^p$
&$\p_t,\, x\p_x+\p_u,\, e^{-t}(\p_t-x\p_x)$ \\
\thetbn c& 1 & $e^{-x}$ & $e^{-x}$
&$\p_t,\, \p_x,\, t\p_t-t\p_x-\p_u$ \\
\refstepcounter{tbn}\label{AeuB0f1ghh1}\thetbn a & 0 & 1 & 1 & $\p_t,\, t\p_t-\p_u,\, 2t\p_t+x\p_x,\, \p_x$ \\
\thetbn b & 1 & 1 & 1 & $\p_t,\, \p_x,\, t\p_t-t\p_x-\p_u,\, 2t\p_t+(x-t)\p_x$ \\
\thetbn c& 1 & $|x|^{-1/2}$ & $|x|^{1/2}$
&$\p_t,\, x\p_x+\p_u,\, e^{-t/2}|x|^{1/2}\p_x,\, e^{-t}(\p_t-x\p_x)$ \\
\thetbn d & 0 & $x^{-3}$ & 1 & $\p_t,\, t\p_t-\p_u,\, x\p_x-\p_u,\, x^2\p_x+x\p_u$ \\
\thetbn e& 1 & $x$ & $x^2$
&$\p_t,\, x\p_x+\p_u,\, e^{-t}(\p_t-x\p_x),\, e^{-t}(\p_x-x^{-1}\p_u)$ \\
\hline
\end{tabular}
\end{center}
{\footnotesize
Here $(\mu,\,\nu)\in\{(0,\,1),\, (1,\,\nu)\}$, $\nu\ne\mu$ in case~\ref{AeuBenufxpghhxq};
$\mu\ne1$ in case~\ref{AeuBenu+kf1ghh1};
$\mu=1$ in cases~\ref{AeuB0faghh1}, \ref{AeuBuf1ghh1}--\ref{AeuB0f1ghh1}e;
$p\ne -\frac12,\, 1$ in case~\ref{AeuB0ff1ghh1}b.
\par}

\newpage

\setcounter{tbn}{0}

\begin{center}\footnotesize\renewcommand{\arraystretch}{1.35}
Table~3$'$. Case of $A(u)=|u|^\mu$ (gauge $g=h$)\\[1ex]
\begin{tabular}{|l|l|c|c|c|l|}
\hline
N & $\mu$ & $B(u)$ & $f(x)$ & $h(x)$ & \hfil Basis of A$^{\rm max}$ \\
\hline
\refstepcounter{tbn}\label{aumb0faghh1}\thetbn & $\forall$ & 0 & $\forall$ & 1 & $\p_t,\, \mu t\p_t-u\p_u$ \\
\refstepcounter{tbn}\label{aumbunfxpghhxq}\thetbn & $\forall$ & $|u|^\nu$ & $|x|^p$ & $|x|^q$ &
$\p_t,\, ((p-q+1)\mu-(p-q+2)\nu)t\p_t$\\ &&&&& ${}+(\mu-\nu)x\p_x+u\p_u$ \\
\refstepcounter{tbn}\label{aumbumlnufeax2+xghheax}\thetbn & $\forall$ & $|u|^\mu\ln|u|$ & $e^{p x^2+q x}$ &
$e^{p x^2}$ & $\p_t,\, (2\mu p+q)t\p_t+\p_x-2p u\p_u$ \\
\refstepcounter{tbn}\label{aumb1fghh}\thetbn & $\forall$ & 1 & $(\hat G(h_x))^{-1}$ & $h$ &
$\p_t,\, e^t(\p_t+{}$\\ &&&&& ${}+((2\mu+3)\beta \hat G(h_x)+2)h_x^{-1}\p_x-\beta xu\p_u)$ \\
\refstepcounter{tbn}\label{a1bafeax2ghheax2}\thetbn & 0 & $\forall$ & $e^{p x^2}$ &
$e^{p x^2}$ & $\p_t,\, e^{-2p t}\p_x$ \\
\refstepcounter{tbn}\label{a1bafexhghhgex}\thetbn & 0 & $\forall$ & $ e^{x+\gamma e^x}$ & $e^{\gamma e^x}$ &
$\p_t,\, e^{-\gamma t}(\p_t-\gamma \p_x)$ \\
\refstepcounter{tbn}\label{a1bufeax2+xghheax}\thetbn & 0 & $u$ & $e^{p x^2+x}$ &
$e^{p x^2}$ &$\p_t,\, t\p_t+\p_x-2p\p_u$ \\
\refstepcounter{tbn}\label{aumbun+kf1ghh1}\thetbn & $\forall$ & $|u|^\nu+\varkappa$ & 1 & 1 & $\p_t,\, \p_x,$ \\
&&&&& $(\mu-2\nu)t\p_t+((\mu-\nu)x+\nu\varkappa t)\p_x+u\p_u$ \\
\refstepcounter{tbn}\label{aumblnuf1ghh1}\thetbn & $\forall$ & $\ln|u|$ & 1 & 1 &
$\p_t,\, \p_x,\, \mu t\p_t+(\mu x-t)\p_x+u\p_u$ \\
\refstepcounter{tbn}\label{a1bufeax2ghheax2}\thetbn & 0 & $u$ & $e^{p x^2}$ & $e^{p x^2}$ &
$\p_t,\, e^{-2p t}\p_x,\, \p_x-2p\p_u$ \\
\refstepcounter{tbn}\label{a1blnufeax2ghheax2}\thetbn & 0 & $\ln|u|$ & $e^{p x^2}$ & $e^{p x^2}$ &
$\p_t,\, e^{-2p t}\p_x,\, \p_x-2p u\p_u$ \\
\refstepcounter{tbn}\label{aumb0ff3ghh1}\thetbn a & $\forall$ & 0 & $f^3(x)$ & 1 & $\p_t,\, \mu t\p_t-u\p_u,$ \\
&&&&& $\alpha t\p_t+((\mu+1)\beta x^2+\gamma_1x+\gamma_0)\p_x+\beta xu\p_u$\\
\thetbn b& $\forall$ & 1 & $|x|^p$ & $x|x|^p$ &
$\p_t,\, \mu x\p_x+u\p_u,\, e^{-t}(\p_t-x\p_x)$ \\
\thetbn c& $\ne-2$ & 1 & $e^{-x}$ & $e^{-x}$ &
$\p_t,\, \p_x,\, \mu t\p_t-\mu t\p_x-u\p_u$ \\
\refstepcounter{tbn}\label{au-65b1f1ghhx2/3}\thetbn & $-6/5$ & 1 & 1 & $x^{2/3}$ &
$\p_t,\, 2t\p_t+6x\p_x-5u\p_u,$ \\ &&&&& $t^2\p_t+(9x^{4/3}+6tx)\p_x-5(t+3x^{1/3})u\p_u$ \\
\refstepcounter{tbn}\label{aumb0f1ghh1}\thetbn a & $\ne -4/3$ & 0 & 1 & 1 &
$\p_t,\, \mu t\p_t-u\p_u,\, \p_x,\, 2t\p_t+x\p_x$ \\
\thetbn b & $\ne -4/3$ & 1 & 1& 1 & $\p_t,\, \mu t\p_t-\mu t\p_x-u\p_u,\, \p_x,\, 2t\p_t+(x-t)\p_x$ \\
\thetbn c& $\ne-4/3$&1 & $|x|^{-1/2}$ & $|x|^{1/2}$
&$\p_t,\, \mu x\p_x+\p_u,\, e^{-t/2}|x|^{1/2}\p_x,\, e^{-t}(\p_t-x\p_x)$ \\
\thetbn d & $\ne -4/3,-1$ & 0 & $|x|^{-\frac{3\mu+4}{\mu+1}}$ & 1 &
$\p_t,\, \mu t\p_t-u\p_u,\, (\mu+2)t\p_t-(\mu+1)x\p_x,$ \\ &&&&& $(\mu+1)x^2\p_x+xu\p_u$ \\
\thetbn e& $\ne-2,-4/3,-1$&1 & $|x|^{\frac{\mu+1}{\mu+2}}$ & $x|x|^{\frac{\mu+1}{\mu+2}}$
&$\p_t,\, \mu x\p_x+u\p_u,\, e^{-t}(\p_t-x\p_x), $ \\ &&&&& $
e^{-\frac{\mu+1}{\mu+2}}((\mu+2)x^{\frac1{\mu+2}}\p_x-x^{-\frac{\mu+1}{\mu+2}}u\p_u)$ \\
\thetbn f & $-1$ & 0 & $e^x$ & 1 & $\p_t,\, t\p_t+u\p_u,\, \p_x-u\p_u,\, 2t\p_t+x\p_x-xu\p_u$ \\
\thetbn g& $-1$ & 1 & 1 & $x$ &
$\p_t,\, x\p_x-u\p_u,\, e^{-t}(\p_t-x\p_x),$ \\ &&&&& $x(\ln x+t-2)\p_x-(\ln x+t)u\p_u$ \\
\thetbn h& $-2$ & 1 & $e^{-x}$ & $e^{-x}$ &
$\p_t,\,\p_x,\, 2t\p_t-2t\p_x+u\p_u,\, e^{t+x}(\p_x-u\p_u)$ \\
\refstepcounter{tbn}\label{au-43b0f1ghh1}\thetbn a & $-4/3$ & 0 & 1 & 1 &
$\p_t,\, 4t\p_t+3u\p_u,\, \p_x,\, 2t\p_t+x\p_x,$\\ &&&&& $x^2\p_x-3xu\p_u$ \\
\thetbn b & $-4/3$ & 1 & 1 & 1 & $\p_t,\, 4t\p_t+4x\p_x-3u\p_u,\, 2t\p_t+(x-t)\p_x,$ \\
&&&&& $\p_x,\, (x+t)^2\p_x-3(x+t)u\p_u$ \\
\thetbn c& $-4/3$&1 & $|x|^{-1/2}$ & $|x|^{1/2}$
&$\p_t,\, 4x\p_x-3u\p_u,\, e^{-t}(\p_t-x\p_x),$ \\
&&&&& $e^{-t/2}|x|^{1/2}\p_x,\,e^{t/2}(2x^{3/2}\p_x-3x^{1/2}u\p_u)$ \\
\refstepcounter{tbn}\label{a1buf1ghh1}\thetbn & 0 & $u$ & 1 & 1 &
$\p_t,\, \p_x,\, t\p_x-\p_u,\, 2t\p_t+x\p_x-u\p_u,$ \\ &&&&&$t^2\p_t+tx\p_x-(tu+x)\p_u$ \\
\hline
\end{tabular}
\end{center}
{\footnotesize
Here $\nu\ne\mu$.
In case~\ref{aumb1fghh} $G(z)=-\dfrac{(2\mu+3)\beta z+2}{(\mu+1)\beta z^2+z}$, $\beta=\const$,
$\hat G$ is the inverse function of~$G$ and $h$ satisfies the equation $h_{xx}h=G'(\hat G(h_x))$.
$p=\pm1$ in cases~\ref{a1bafeax2ghheax2}, \ref{a1bufeax2ghheax2} and~\ref{a1blnufeax2ghheax2};
$\gamma=\pm1$ in case~\ref{a1bafexhghhgex};
$\varkappa\in\{-1,0,1\}$ in case~\ref{aumbun+kf1ghh1}.
}

\newpage

\section{Proof of results of group classification}\label{SectionProofOfGrCl}

It seems impossible to formulate complete results of group classification of class~\eqref{eqDKfgh}
with respect to the usual equivalence group~$G^\sim$ in a closed form~\cite{Ivanova&Sophocleous2006}.
At the same time, it is quite easy to solve the problem of group classification with respect to
the extended equivalence group~$\hat G^\sim$.

Our method is based on the fact that the substitution of the coefficients of any operator from
$A^{\max} \backslash A^\cap$ into the classifying equations~\eqref{deteq2}--\eqref{deteq4}
results in nonidentity equations for arbitrary elements.

In the problem under consideration, the procedure of looking for the possible cases of extensions mostly
depends on equation~\eqref{deteq2}.
For any operator $Q\in A^{\max}$ the substitution of its coefficients into
equation~\eqref{deteq2} gives some equations on~$A$ of the general form
\begin{equation} \label{deq}
(\alpha u+\beta)A_u=\gamma A,
\end{equation}
where $\alpha$, $\beta$ and $\gamma$ are constants.
The set of coefficient triples $(\alpha,\beta,\gamma)$ collected for all operators from $A^{\max}$ is a linear space.
The dimension $k=k(A^{\max})$ of this space is not greater than 2
otherwise the corresponding equations form an incompatible system on $A$.
The value of $k$ is an invariant of the transformations from $\hat G^{\sim}$.
Therefore, there exist three $\hat G^{\sim}$-inequivalent cases for the value of $k$: $k=0$, $k=1$ and $k=2$.
We consider these possibilities separately (\emph{furcate split}), omitting cumbersome technical calculations.

\begin{note}
The choice of a gauge for the arbitrary elements is very important for solving and for the final presentation of results.
It is more convenient to constrain the parameter-function~$g$ instead of~$f$ in class~\eqref{eqDKfgh}.
The next problem is the choice between different gauges of~$g$.
The case $B\not\in\langle 1,A\rangle$ and $k\geqslant1$ is easier to be investigated in the gauge $g=h$.
In the other cases we obtain results in a simpler explicit form and in an easier way using the gauge $g=1$.
\end{note}

\noindent{\mathversion{bold}$k=0$} (the gauge $g=1$, table~1).
Then the coefficients of any operator from $A^{\max}$ are to satisfy the system
\begin{equation}\label{dse_dall}
\eta=0,\quad 2\xi_x-\tau_t+\frac{f_x}f\xi=0,\quad f\xi_t=A\xi_{xx}-B(h\xi)_x.
\end{equation}

Let us suppose that $B\notin \langle 1,\, A \rangle$.
It follows from the last equation of the system~(\ref{dse_dall}) that up to $\hat G^{\sim}$-equivalence $\xi_x=\xi_t=0$.
Therefore, the second equation is a nonidentity equation for $f$ of the form $f_x=\mu f$ without fail.
Solving this equation yields case~2a.

Now let $B\in \langle 1,\, A \rangle$, i.e., $B=\delta\bmod \hat G^{\sim}_1$, where
$\delta\in\{0,1\}$.
Then the last equation of~(\ref{dse_dall}) can be decomposed into the equations
$\xi_{xx}=0$, $\delta(h\xi)_x+f\xi_t=0$.
Integrating the latter equations up to $\hat G^{\sim}_1$-equivalence results in cases~2b--4c of table~1.

\medskip

\noindent{\mathversion{bold}$k=1$} (the gauges $g=1$ and $g=h$, tables~2,\,3 and 2$'$,\,3$'$).
Then $A\in\{u^{\mu},\mu\ne0,e^u\}\bmod G^{\Equiv}_1$,
and we can assume that there exists $Q\in A^{\max}$ with $\eta\ne 0$,
otherwise there is no additional extension of the maximal Lie invariance algebra in comparison with the case $k=0$.
Below we consider the exponential and power cases of~$A$ simultaneously and
write down the differences of the case $A=e^u$ with the one $A=u^\mu$ in brackets.
If $A=e^u$, we assume $\mu=1$.

Equations~\eqref{deteq1} and~\eqref{deteq2} imply that $\eta=\zeta(t,x)u$ ($\eta=\zeta(t,x)$).
Then equation~\eqref{deteq4} with respect to $B$ looks like
\[uB_u=\nu B+\lambda A+\varkappa\quad (B_u=\nu B+\lambda A+\varkappa)\] where $\nu$, $\lambda$ and $\varkappa$ are constants, otherwise $\eta\equiv0$.

Consider first the case $B\not\in\langle 1,A\rangle$ using the gauge $g=h$.
Under the above suppositions, equations~\eqref{deteq1}--\eqref{deteq4} can be rewritten as
\begin{gather*}
\frac{\varphi_x}\varphi\xi=(2\nu-\mu)\zeta+\tau_t,\quad \zeta_t=\zeta_x=0,\quad
\xi_{xx}=\xi_{tx}=\xi_{tt}=\tau_{tt}=0,\\
\xi_x=(\mu-\nu)\zeta,\quad \left(\xi\frac{h_x}h\right)_x=-\lambda\zeta,\quad \varphi\xi_t=-\varkappa\zeta.
\end{gather*}
Here and below $\varphi=f/h$.
(Note, that the gauge $g=1$ leads to the determining equations that cannot be integrated explicitly. See, e.g., case~2.\ref{AeuBueufh}.)

If $\varkappa\ne0$ then there exists $Q\in A^{\max}$ such that $\xi_t\ne0$.
Therefore, $\varphi_x=0$, i.e., $\varphi=1\bmod G^{\sim}_h$ and $\tau=(\mu-2\nu)\zeta t+c_0$, $\xi=(\mu-\nu)\zeta x-\varkappa\zeta t+c_1$.
Hence $(h_x/h)_x=0$, i.e., $h=h_0e^{h_1x}=1\bmod \hat G^{\sim}_h$ and then $f=g=h=1$, $uB_u=\nu B+\varkappa$ ($B_u=\nu B+\varkappa$).
Solving this equation up to $G^{\sim}_h$ yields $B=|u|^{\nu}-\varkappa/\nu$ ($B=e^{\nu u}-\varkappa/\nu$) if $\nu\ne0$ and
$B=\ln|u|$ ($B=u$) if $\nu=0$.
Scaling the value of an arbitrary constant $\varkappa$,
we obtain cases~3$'$.\ref{aumbun+kf1ghh1} and~3$'$.\ref{aumblnuf1ghh1}
(2$'$.\ref{AeuBenu+kf1ghh1} and~2$'$.\ref{AeuBuf1ghh1}).

Now let $\varkappa=0$. Then $\xi_t=0$ and $\varphi\in\{e^x,\,|x|^r, r\ne0,\,1\}\bmod G^{\sim}_h$.
For $\varphi=e^x$ the determining equations implies that $\xi_x=0$, $\nu=\mu$, $\lambda\ne0$, $(h_x/h)_x=2\alpha$.
Therefore, $h=h_0e^{\alpha x^2+h_1x}=e^{\alpha x^2}\bmod G^{\sim}_h$, $\alpha\ne0$, $f=e^{\alpha x^2+x}$ and
$B=\lambda|u|^\mu\ln|u|\bmod G^{\sim}_h$ ($B=\lambda ue^u\bmod G^{\sim}_h$)
that falls, after rescaling $x$, precisely into case~3$'$.\ref{aumbumlnufeax2+xghheax} (2$'$.\ref{AeuBueufex2+xghhex2}).

If $\varphi=|x|^r$ with $r\ne 0$ then $r\xi/x=(2\nu-\mu)\zeta+\tau_t$. Hence, $\xi=(\mu-\nu)\zeta x$,
$\tau_t=((r+1)\mu-(r+2)\nu)\zeta$ and $(\mu-\nu)(xh_x/h)_x=-\lambda$. Since $\mu\ne\nu$ (otherwise, $B\in\langle1,A\rangle$)
we have $\lambda=0\bmod \hat G^{\sim}_h$. Therefore, $h=|x|^q\bmod \hat G^{\sim}$. Then $f=|x|^p$, $p\ne q$,
and we obtain case~3$'$.\ref{AumBunfxlhxg} (2$'$.\ref{AeuBenufxpghhxq}).

The value $\varphi=1$ results in $\tau_t=(\mu-2\nu)\zeta$, $\xi=(\mu-\nu)\zeta x+\xi_0$.
If $\nu=\mu$ then $\lambda\ne0$ (otherwise, $B\in\langle1,A\rangle$), $\lambda=1\bmod G^{\sim}_h$,
$(h_x/h)_x=2\alpha$. Therefore, $h=h_0e^{\alpha x^2+h_1x}=e^{\alpha x^2}\bmod G^{\sim}_h$,
$B=|u|^\mu\ln|u|$ ($B=ue^u$) that follows to case~3$'$.\ref{aumbumlnufeax2+xghheax} (2$'$.\ref{AeuBueufex2+xghhex2}).
If $\nu\ne\mu$, then $\lambda=0\bmod G^{\sim}_h$. Therefore, $h\in\{|x|^q,\, 1,\, e^x\}\bmod G^{\sim}_h$
that yields special subcases of~3$'$.\ref{aumbunfxpghhxq}, 3$'$.\ref{aumbun+kf1ghh1} and~1.\ref{gcAaBafexh1}a.
(2$'$.\ref{AeuBenufxpghhxq},~2$'$.\ref{AeuBenu+kf1ghh1} and~1.\ref{gcAaBafexh1}a), respectively.

All the remaining cases are investigated similarly to the above one. We present only the main steps of the integration procedure.

In contrast to the previous case, it is more convenient to study the case $B\in\langle 1,\, A\rangle$ and $A\ne$constant using the gauge $g=1$.
In this case $B\in\{0,1\}\bmod\hat G^{\sim}_1$ and  the determining equations are reduced to the system
\begin{gather*}
2\xi_x+\dfrac{f_x}f\xi=\mu\zeta+\tau_t,\quad \zeta_{xx}=0,\quad B\zeta_x=\varphi\zeta_t,\\
\left(\xi_x+\dfrac{\varphi_x}\varphi\xi-\tau_t\right)B=\varphi\xi_t,\quad \xi_{xx}=2(\zeta_x\mu+\eta^1_x),
\end{gather*}
where $\eta^1=\zeta$ ($\eta^1=0$). Therefore, $\zeta=\zeta^1(t)x+\zeta^0(t)$ and
$\xi=\xi^2(t)x^2+\xi^1(t)x+\xi^0(t)$, where $\xi^2(t)=\mu\zeta^1+\eta^{11}$ and $\eta^{11}=\zeta^1(t)$ ($\eta^{11}=0$).
Plugging these values to the first determining equation, we obtain
\[
(\xi^2x^2+\xi^1x+\xi^0)\dfrac{f_x}f=-(3\mu\zeta^1+4\eta^{11})x+\mu\zeta^0+\tau_t-2\xi^1.
\]
This condition gives equations of the form $(\alpha_2x^2+\alpha_1x+\alpha_0)f_x=(\beta_1x+\beta_0)f$ for $f$,
whose coefficient tuples collected for all operators from $A^{\max}$ form a linear space.
Denote by $l$ the dimension of the space.
If $l=0$ then  $\xi=0$, $\zeta^1=0$ and $\tau_t=-\mu\zeta_0$. Considering the case $l=1$,
we get $(\alpha_2,\beta_1)\ne(0,0)$ and $(\alpha_0,\beta_0)\ne(0,0)$ otherwise $l>1$.
At last, the condition $l\geqslant2$ implies $f\in\{1,e^{px},|x|^p, p\ne0\}\bmod G^{\sim}_1$.

The direct substitution of the above values into the determining equations for $B=0$ and obvious integration
lead to the cases~3$'$.\ref{aumb0faghh1} (2$'$.\ref{AeuB0faghh1}) (case $l=0$), 3$'$.\ref{aumb0ff3ghh1}a (2$'$.\ref{AeuB0ff1ghh1}a) (case $l=1$)
and~3$'$.\ref{aumb0f1ghh1}a (2$'$.\ref{AeuB0f1ghh1}a), 3$'$.\ref{au-43b0f1ghh1}, 3$'$.\ref{aumb0f1ghh1}f,
3$'$.\ref{aumb0f1ghh1}d (2$'$.\ref{AeuB0f1ghh1}d) (case $l=2$).

The classification in the case $B=1$ is more cumbersome. First we show that if $l=0$ then $\zeta=0$ and therefore,
$\eta=0$ and $A^{\max}=A^\cap$. To complete the consideration of this case,
one should classify separately three essentially different cases: $\varphi$ is arbitrary ($\varphi\ne1,\varepsilon/x\bmod G^{\sim}_1$),
$\varphi=1$ and $\varphi=\varepsilon/x$. Each of these cases can be studied in a way that is similar to the above consideration.

\medskip
\noindent\mathversion{bold}\mbox{$k=2$}\mathversion{normal} (the gauge $g=h$, tables~2$'$ and $3'$).
The assumption on two independent equations of form~\eqref{deteq2} on~$A$ yields
$A=\const$, i.e., $A=1\bmod G^{\Equiv}_h$. Consider the case $B_u\not=0$ (otherwise, equation~\eqref{eqDKfh} is linear).
The most suitable gauge here is $g=h$.
Equations~\eqref{deteq1}--\eqref{deteq4} can be rewritten as
\begin{gather*}
2\xi_x-\tau_t+\left(\dfrac{f_x}f-\dfrac{h_x}h\right)\xi=0, \quad (h\eta_{x})_x+Bh\eta_x-f\eta_t=0,\\
\eta B_u+\xi_xB+\varphi \xi_t+\xi\dfrac{h_{xx}}{h}-\xi_{xx}+2\eta^1_x+\left(\dfrac\xi h\right)_xh_x=0.
\end{gather*}
The latter equation looks as $(\alpha u+\beta)B_u=\gamma B+\delta$ with respect to $B$,
where $\alpha,\beta,\gamma,\delta=\const$. Therefore, up to $\hat G^{\Equiv}_h$-equivalence
$B$ is to take one of the values:
\[
B - \forall;\quad
B=u^\nu,\;\nu\ne0,1;\quad
B=\ln u;\quad
B=e^u;\quad
B=u.
\]
Classification for these values is carried out in the way similar to the above.
The derived extensions are entered in either table~2$'$ or table~3$'$.

\section{Admissible transformations}\label{SectionAdmisTransf}

The presence of the nontrivial extended equivalence group and many additional equivalence transformations
indicates that the set of all admissible transformations of class~\eqref{eqDKfgh} has a complicated structure.
In this section we describe only basic properties of admissible transformations of class~\eqref{eqDKfgh},
which are useful for finding additional equivalence transformations.
In fact, the problems of finding of all possible admissible transformations are very difficult to solve even for
classes of simpler structure.
See, e.g.,~\cite{Ivanova&Popovych&Eshraghi2005GammaNormalizedSerbia,
Kingston&Sophocleous1991,Kingston&Sophocleous1998,Kingston&Sophocleous2001,
Pallikaros&Sophocleous1995,Popovych2006,Popovych&Kunzinger&Eshraghi2006,
Vaneeva&Johnpillai&Popovych&Sophocleous2006,Vaneeva&Popovych&Sophocleous2009}.

Any point transformation in the space of the variables $(t,x,u)$ has the form
\[
\tilde t=T(t,x,u),\quad \tilde x=X(t,x,u),\quad \tilde u=U(t,x,u)
\]
where the nonsingularity condition $J=\partial(T,X,U)/\partial(t,x,u)\ne 0$ is satisfied.
In what follows tilde (resp.\ non-tilde) arbitrary elements depend on tilde (resp.\ non-tilde) variables.

It is well known (see, e.g., \cite{Kingston&Sophocleous1998}) that
for any point transformation between two evolutionary equations of order~$n$ greater than 1
(i.e., equations of the form $u_t=H(t,x,u,u_1,\ldots,u_n)$ where $u_k=\p^ku/\p x^k$, $k=1,2,\dots$, $H_{u_n}\ne 0$)
the component corresponding to the variable~$t$ depends only on $t$. That is, $\tilde t=T(t)$.
The right hand sides $H$ and~$\tilde H$ of the initial and transformed equations are related by the formula
\begin{equation}\label{(i)}
(X_xU_u-X_uU_x)H=(X_x+u_xX_u)T_t\tilde H+X_t(U_x+u_xU_u)-(X_x+u_xX_u)U_t.
\end{equation}

\begin{lemma}\label{LemmaOnAdmTransOfQuasiLin1ndOrderEvolEqs}
Any point transformation between two evolutionary second-order quasi-linear equations having the form
$u_t=F(t,x,u)u_{xx}+G(t,x,u,u_x)$  where $F\ne 0$ is projectible, i.e.,
$\tilde t =T(t),$ $\tilde x =X(t,x),$ $\tilde u =U(t,x,u).$
\end{lemma}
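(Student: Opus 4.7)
The plan is to use formula~(i) in conjunction with the general fact, quoted from~\cite{Kingston&Sophocleous1998} just above the lemma, that for any point transformation between two evolutionary equations of order at least two the time component depends only on~$t$. Hence I may take $\tilde t=T(t)$ with $T_t\ne0$ from the outset, so that $T_x=T_u=0$ in formula~(i), and the only remaining task is to show $X_u=0$.

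First I would compute the dependence of $\tilde u_{\tilde x\tilde x}$ on $u_{xx}$. Differentiating $U(t,x,u)=\tilde u(T(t),X(t,x,u))$ once in~$x$ gives $\tilde u_{\tilde x}=(U_x+u_xU_u)/(X_x+u_xX_u)$, and a further total-differentiation step followed by division by $X_x+u_xX_u$ shows that $\tilde u_{\tilde x\tilde x}$ is affine in $u_{xx}$, with the coefficient of $u_{xx}$ equal to $(X_xU_u-X_uU_x)/(X_x+u_xX_u)^3$.

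Next I would substitute $H=Fu_{xx}+G$ and $\tilde H=\tilde F\tilde u_{\tilde x\tilde x}+\tilde G$ into~(i) and collect the coefficients of $u_{xx}$. Since $G$ is free of $u_{xx}$ and $\tilde G$ is free of $\tilde u_{\tilde x\tilde x}$, only the $F$- and $\tilde F$-terms contribute; after dividing by the common factor $X_xU_u-X_uU_x$, which equals $J/T_t$ and is therefore nonzero by nondegeneracy of the point transformation, one obtains the identity
\[
F(X_x+u_xX_u)^2=\tilde F\,T_t.
\]

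The conclusion then follows by a splitting argument. The right-hand side is a function of $(t,x,u)$ alone, since $\tilde F$ depends on its arguments through $(T(t),X(t,x,u),U(t,x,u))$, whereas the left-hand side expands as $F(X_x^2+2X_xX_u\,u_x+X_u^2u_x^2)$, a quadratic polynomial in the independent jet variable~$u_x$. Because $F\ne0$, matching powers of~$u_x$ forces $X_u=0$, which is the desired projectibility of~$\tilde x$, and then $U=U(t,x,u)$ is automatic from the general form of a point transformation. The main technical point to verify carefully will be the coefficient of $u_{xx}$ in $\tilde u_{\tilde x\tilde x}$; once that calculation is in place, the rest reduces to an algebraic splitting with respect to~$u_x$.
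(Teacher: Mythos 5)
Your proposal is correct and follows essentially the same route as the paper: substitute the quasi-linear forms of $H$ and $\tilde H$ into formula~\eqref{(i)}, extract the coefficient of $u_{xx}$ (using $\tilde u_{\tilde x\tilde x}=(D_xX)^{-1}D_x\bigl((D_xX)^{-1}D_xU\bigr)$) to obtain $(X_x+u_xX_u)^2F=T_t\tilde F$, and then split with respect to $u_x$ to conclude $X_u=0$. Your version merely makes explicit the computation of the $u_{xx}$-coefficient of $\tilde u_{\tilde x\tilde x}$ and the nonvanishing of $X_xU_u-X_uU_x=J/T_t$, which the paper leaves implicit.
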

\begin{proof}
We set in \eqref{(i)} $H=F(t,x,u)u_{xx}+G(t,x,u,u_x)$ and
$\tilde H=\tilde F(\tilde t,\tilde x,\tilde u)\tilde u_{\tilde x\tilde x}
+\tilde G(\tilde t,\tilde x,\tilde u,\tilde u_{\tilde x})$.
Note that
$u_{\tilde x}=V:=(D_xX)^{-1}D_xU$, $\tilde u_{\tilde x\tilde x}=(D_xX)^{-1}D_xV$,
where $D_x$ stands for the total derivative with respect to the variable~$x$,
$D_x=\p_x+u_x\p_u+u_{xx}\p_{u_x}+\cdots$.
Collecting coefficients of $u_{xx}$ in the simplified \eqref{(i)} gives
$(X_x+u_xX_u)^2F=T_t\tilde F$.
Splitting the last equations with respect to~$u_x$ implies that $X_u=0$.
\end{proof}

We prove the following lemmas for the particular case of class~\eqref{eqDKfgh}
although there exist similar statements for more general classes of evolutionary equations.
Using Lemma~\ref{LemmaOnAdmTransOfQuasiLin1ndOrderEvolEqs} and the representations of~$H$ and $\tilde H$
for equations from class~\eqref{eqDKfgh}, we have that $J=T_tX_xU_u\ne0$ and
\begin{gather}
(gAu_{xx}+gA_uu_x^2+g_xAu_x+hBu_x)\frac{U_u}{f}=
\nonumber\\
(\tilde g\tilde A\tilde u_{\tilde x\tilde x}+\tilde g\tilde A_{\tilde u}\tilde u_{\tilde x}^2+
\tilde g_{\tilde x}\tilde A\tilde u_{\tilde x}+\tilde h\tilde B\tilde u_{\tilde x})
\frac{T_t}{\tilde f}
+\frac{X_t}{X_x}(U_x+u_xU_u)-U_t.\label{(iii)}
\end{gather}

\begin{lemma}
Any point transformation between two equations from class~\eqref{eqDKfgh} is linear with respect to $u$:
$\tilde t =T(t),$ $\tilde x =X(t,x),$ $\tilde u =U^1(t,x)u+U^0(t,x)$, where $T_tX_xU^1\ne0$.
\end{lemma}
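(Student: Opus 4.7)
Proof plan. By the previous lemma we already know that any admissible transformation satisfies $\tilde t=T(t)$, $\tilde x=X(t,x)$, $\tilde u=U(t,x,u)$ with $T_tX_xU_u\ne 0$, and that identity \eqref{(iii)} holds. The remaining task is to show $U_{uu}=0$. The natural route is to split \eqref{(iii)} with respect to the unconstrained derivatives $u_{xx}$, $u_x^2$, $u_x$, and the derivative-free part, using the chain-rule expressions
\[
\tilde u_{\tilde x}=\frac{U_x+u_xU_u}{X_x},\qquad
\tilde u_{\tilde x\tilde x}=\frac{U_{xx}+2u_xU_{xu}+u_x^2U_{uu}+u_{xx}U_u}{X_x^2}-\frac{(U_x+u_xU_u)X_{xx}}{X_x^3}.
\]

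First I collect the coefficient of $u_{xx}$ in \eqref{(iii)}. On the left-hand side this is $gAU_u/f$, and on the right-hand side only the term $\tilde g\tilde A\tilde u_{\tilde x\tilde x}$ contributes, giving $\tilde g\tilde A T_tU_u/(X_x^2\tilde f)$. Since $U_u\ne 0$, this yields the basic relation
\begin{equation}\label{eqplan-cuxx}
\tilde A(U)=\frac{g(x)\,A(u)\,X_x^2\,\tilde f(X)}{f(x)\,\tilde g(X)\,T_t(t)}.
\end{equation}
Relation \eqref{eqplan-cuxx} is an equality between a function of $U=U(t,x,u)$ and a concrete expression in $(t,x,u)$; differentiating it with respect to $u$ (via the chain rule on the left) produces
\begin{equation}\label{eqplan-dAu}
\tilde A_{\tilde u}(U)\,U_u=\frac{g\,A_u\,X_x^2\,\tilde f}{f\,\tilde g\,T_t}.
\end{equation}

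Next I collect the coefficient of $u_x^2$ in \eqref{(iii)}. The left-hand side contributes $gA_uU_u/f$, while the right-hand side contributes $(\tilde g T_t/(X_x^2\tilde f))\bigl(\tilde A\,U_{uu}+\tilde A_{\tilde u}\,U_u^2\bigr)$ (the term from $\tilde u_{\tilde x\tilde x}$ plus the one from $\tilde g\tilde A_{\tilde u}\tilde u_{\tilde x}^2$). Substituting \eqref{eqplan-cuxx} and \eqref{eqplan-dAu} into this identity, the two contributions on the right become $(gA/f)U_{uu}$ and $(gA_u/f)U_u$ respectively, so after cancellation
\[
\frac{g(x)\,A(u)}{f(x)}\,U_{uu}=0.
\]
Since $gA\ne 0$ throughout the class, we conclude $U_{uu}=0$, i.e., $U=U^1(t,x)u+U^0(t,x)$; the nondegeneracy $U_u\ne 0$ then forces $U^1\ne 0$, which together with $T_tX_x\ne 0$ gives the stated form.

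No serious obstacle is anticipated: the only delicate point is making sure that the coefficient of $u_{xx}$ really determines $\tilde A(U)$ up to the prefactor in \eqref{eqplan-cuxx} (which requires $U_u\ne 0$, guaranteed by $J\ne 0$), and that the substitution into the $u_x^2$ equation produces the precise cancellation that isolates $U_{uu}$. All other derivative-coefficients ($u_x$ and the derivative-free part) are not needed for the present claim; they only come into play later when describing the actual form of admissible transformations.
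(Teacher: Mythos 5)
Your argument is correct and coincides with the paper's own proof: both collect the coefficients of $u_{xx}$ and $u_x^2$ in \eqref{(iii)} to obtain $A=K\tilde A$ and $A_u=\tfrac{K}{U_u}(U_{uu}\tilde A+U_u^2\tilde A_{\tilde u})$ with $K=\tfrac{T_t}{X_x^2}\tfrac{f}{g}\tfrac{\tilde g}{\tilde f}$, then differentiate the first relation with respect to $u$ and subtract to isolate $KU_{uu}\tilde A/U_u=0$, whence $U_{uu}=0$. No gaps; the explicit prolongation formulas you use are the same ones the paper invokes via $\tilde u_{\tilde x}=(D_xX)^{-1}D_xU$.
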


\begin{proof}
Collecting coefficients of $u_{xx}$ and $u_x^2$ in~\eqref{(iii)} respectively gives
\begin{gather}\label{(iv)}
A=K\tilde A, \quad
A_u=\frac K{U_u}(U_{uu}\tilde A+U_u^2 \tilde A_{\tilde u}), \quad
K:=\frac{T_t}{X_x^2}\frac{f}{g}\frac{\tilde g}{\tilde f}.
\end{gather}
We differentiate the first equation of~\eqref{(iv)} with respect to $u$ and subtract from the second one.
As a result, we obtain that $KU_{uu}\tilde A/U_u=0$.
Hence $U_{uu}=0$.
\end{proof}

\begin{lemma}
Modulo $G^{\Equiv}$, there exist no point transformations changing the coefficient~$A$.
\end{lemma}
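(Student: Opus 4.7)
The plan is to analyze the two relations~\eqref{(iv)} as a functional equation expressing $\tilde A$ in terms of $A$, and to show that the only freedom left in this equation is precisely the one provided by~$G^{\sim}$. By the previous two lemmas I write $\tilde t=T(t)$, $\tilde x=X(t,x)$, $\tilde u=U^1(t,x)u+U^0(t,x)$, so that \eqref{(iv)} reads
\begin{gather*}
A(u)=K(t,x)\,\tilde A\bigl(U^1(t,x)u+U^0(t,x)\bigr),\qquad
A_u(u)=K(t,x)\,U^1(t,x)\,\tilde A_{\tilde u}.
\end{gather*}

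First I would differentiate the left identity with respect to $x$ and to $t$ separately at fixed $u$ (legitimate because $A$ depends on $u$ only) and then use the right identity to eliminate $\tilde A$ and $\tilde A_{\tilde u}$. This produces the two relations
\begin{gather*}
U^1(K_x/K)\,A(u)+(U^1_x u+U^0_x)\,A_u(u)=0,\\
U^1(K_t/K)\,A(u)+(U^1_t u+U^0_t)\,A_u(u)=0,
\end{gather*}
each of which is a linear combination of the three functions $A$, $u A_u$, $A_u$ of~$u$ with coefficients depending only on $(t,x)$.

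Next I would perform a \emph{furcate split} with respect to the linear dependence of $A$, $uA_u$, $A_u$ over $\mathbb{R}$, in the same spirit as in Section~\ref{SectionProofOfGrCl}. In the generic case linear independence of these three functions forces every coefficient above to vanish, so $K_x=K_t=U^1_x=U^1_t=U^0_x=U^0_t=0$. Then $K$, $U^1$, $U^0$ are all constants, $\tilde u=U^1 u+U^0$ and $\tilde A(\tilde u)=A(u)/K$; but this is exactly the action on $(u,A)$ of the element of $G^{\sim}$ with $\delta_3=U^1$, $\delta_4=U^0$, $\varepsilon_2=1/K$, so $A$ is not changed modulo~$G^{\sim}$.

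The main obstacle is the three degenerate families in which $A$, $uA_u$, $A_u$ are linearly dependent, namely $A$ constant, $A=c_1 e^{\mu u}$, and $A=c_1(u-c)^\mu$. Here the two identities above permit non-trivial $(t,x)$-dependence of either $U^0$ or $U^1$, and the task is to verify that this dependence is always compensated by the corresponding dependence of $K$. For instance, in the exponential case the identities give $U^1_x=U^1_t=0$ together with $K=K_0\,e^{-\mu U^0/U^1}$, so that $\tilde A(\tilde u)=(c_1/K_0)\exp(\mu\tilde u/U^1)$ is again an exponential; in the power case they give $U^0_x=U^0_t=0$ and $K=K_0(U^1)^{-\mu}$, so that $\tilde A(\tilde u)=(c_1/K_0)\tilde u^\mu$ is again a power of the same exponent; and when $A$ is constant the identities force $K$ to be constant as well. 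In each subcase $(A,\tilde A)$ therefore lies in a single orbit of the $G^{\sim}$-action on the arbitrary element $A$, which is the content of the lemma.
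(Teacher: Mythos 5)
Your argument is correct, but it takes a longer route than the paper, which disposes of this lemma in one line: since the left-hand side $A(u)$ of the first relation in~\eqref{(iv)} does not depend on $(t,x)$, one may simply evaluate that relation at any fixed point $(t_0,x_0)$ to obtain $\varepsilon_2A(u)=\tilde A(\delta_3u+\delta_4)$ with the \emph{constants} $\varepsilon_2=1/K(t_0,x_0)$, $\delta_3=U^1(t_0,x_0)$, $\delta_4=U^0(t_0,x_0)$, and this is exactly how $G^{\sim}$ acts on $A$ by Theorem~\ref{TheoremOnUsualEquivGroupOfEqDKfgh}; no differentiation in $(t,x)$ and no case split are needed. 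What you do instead --- differentiating with respect to $t$ and $x$ and splitting on the linear (in)dependence of $A$, $uA_u$, $A_u$ --- is essentially the content of the \emph{next} lemma in the paper (that $(U_t,U_x)\ne(0,0)$ forces $A\in\{u^\mu,e^u\}$ modulo $G^{\Equiv}$), so your proof buys extra information (the precise $(t,x)$-dependence permitted for $K$, $U^0$, $U^1$ in each degenerate family) at the cost of extra work. One small inaccuracy in your power case: for $A=c_1|u-c|^\mu$ with $c\ne0$ the identities give $U^0_x=-cU^1_x$ and $U^0_t=-cU^1_t$, i.e.\ $U^0+cU^1=\const$ rather than $U^0=\const$; the conclusion that $\tilde A$ is again a power of the same exponent, hence $G^{\sim}$-equivalent to $A$, is unaffected (and the slip disappears if you first normalize $c=0$ by $G^{\sim}$).
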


\begin{proof}
Since $T_u=X_u=U_{uu}=0$, the first equation of~\eqref{(iv)} implies that
the arbitrary elements~$A$ and~$\tilde A$ are related by the formula
$\varepsilon_2A(u)=\tilde A(\delta_3 u+\delta_4)$,
where $\varepsilon_2$, $\delta_3$ and  $\delta_4$ are constants, $\varepsilon_2\delta_3\ne0$.
Such transformation of~$A$ can be realized via a usual equivalence transformation
(cf.\ Theorem~\ref{TheoremOnUsualEquivGroupOfEqDKfgh}).
\end{proof}

\begin{lemma}
$(U_t,U_x)\ne(0,0)$ for a point transformation between two equations from class~\eqref{eqDKfgh}
only if $A\in\{u^\mu,\;e^u\}\bmod G^{\Equiv}$.
\end{lemma}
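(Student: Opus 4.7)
The starting point I would use is the coefficient-matching identity $A(u) = K\tilde A$ from~\eqref{(iv)} of the proof of the preceding lemmas, which in view of the linearity of~$U$ in~$u$ reads
\[
A(u) \,=\, K(t,x)\,\tilde A\bigl(U^1(t,x)\,u + U^0(t,x)\bigr),
\]
with $K:=(T_t/X_x^2)(f/g)(\tilde g/\tilde f)$ and $\tilde u=U^1u+U^0$. My plan is to take the logarithmic derivative in~$u$ at once in order to eliminate~$K$ and separate the $u$-dependence from the $(t,x)$-dependence, yielding
\[
\phi(u)\,=\, U^1(t,x)\,\psi(\tilde u), \qquad \phi:=A_u/A,\quad \psi:=\tilde A'/\tilde A.
\]
The left-hand side depends only on~$u$; any $(t,x)$-dependence on the right must therefore cancel. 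The assumption $(U_t,U_x)\ne(0,0)$ translates into at least one of $U^1_t,U^1_x,U^0_t,U^0_x$ being not identically zero, and I would split the argument according to whether $U^1$ is constant.

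\emph{Case 1: $U^1$ depends nontrivially on $(t,x)$.} Fix a point where, say, $U^1_x\ne0$ (the case $U^1_t\ne0$ is parallel). Differentiating $\phi(u)=U^1\psi(\tilde u)$ in~$x$ at fixed~$u$ gives
\[
U^1_x\,\psi(\tilde u) + U^1\,\psi'(\tilde u)\bigl(U^1_x u+U^0_x\bigr)\,=\,0.
\]
Substituting $u=(\tilde u-U^0)/U^1$ and regrouping reduces this to $\psi'(\tilde u)/\psi(\tilde u)=-1/(\tilde u-c)$, where $c:=U^0-U^1U^0_x/U^1_x$. Since $\psi$ is a fixed function of one variable, $c$ is forced to be a constant. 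Integration produces $\tilde A(\tilde u)=\beta(\tilde u-c)^\alpha$, and a translation of $\tilde u$ together with a rescaling of $\tilde A$—both available inside $G^\sim$ by Theorem~\ref{TheoremOnUsualEquivGroupOfEqDKfgh}—bring $\tilde A$ to $|\tilde u|^\mu$; hence $A\in\{u^\mu\}\bmod G^\sim$.

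\emph{Case 2: $U^1$ is constant, $U^0$ nontrivial.} Then, say, $U^0_x\ne0$ somewhere. The $x$-derivative of $\phi(u)=U^1\psi(\tilde u)$ at fixed~$u$ collapses to $U^1U^0_x\,\psi'(\tilde u)=0$, whence $\psi'\equiv0$. So $\psi$ is constant, $\tilde A(\tilde u)=\beta e^{c\tilde u}$, and after an affine shift of~$\tilde u$ and rescaling (again in $G^\sim$) we reach $\tilde A=e^u$, i.e., $A\in\{e^u\}\bmod G^\sim$. Combining the two cases finishes the argument.

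The main obstacle, as I see it, is the separation-of-variables step in Case~1: one must justify rigorously that $c=U^0-U^1U^0_x/U^1_x$ is genuinely constant in $(t,x)$. This follows by noticing that as $(t,x)$ varies, $\tilde u$ sweeps an open set in its range (since $U^1\ne0$), so an identity $\psi'(\tilde u)/\psi(\tilde u)=-1/(\tilde u-c(t,x))$ for a single function $\psi$ pins down the pole location and forces $c$ to be constant. The degenerate sub-cases $\alpha=0$ in Case~1 and $c=0$ in Case~2, where $\tilde A$ collapses to a constant, are harmless: they correspond to $\mu=0$ in the power branch and do not require separate treatment.
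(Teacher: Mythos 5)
Your proof is correct and follows essentially the same route as the paper: both differentiate the relation $A=K\tilde A$ with respect to $x$ (and $t$) at fixed $u$ to obtain a first-order ODE for $\tilde A$ with coefficient affine in $\tilde u$, whose solutions are powers or exponentials according to whether $U^1$ is non-constant or constant (the paper's dichotomy $\lambda_1\ne0$ versus $\lambda_1=0$). The only cosmetic difference is that you first take the logarithmic $u$-derivative to eliminate $K$, so you work with $\tilde A_{\tilde u}/\tilde A$ rather than with $\tilde A$ itself.
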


\begin{proof}
Differentiating the first equation of~\eqref{(iv)} with respect to $t$ and $x$, we obtain
\[
K_t\tilde A+K(U^1_tu+U^0_t)\tilde A_{\tilde u}=0,\quad
K_x\tilde A+K(U^1_xu+U^0_x)\tilde A_{\tilde u}=0.
\]
If $(U_t,U_x)\ne(0,0)$, this differential consequences implies that
the arbitrary element $\tilde A$ necessarily satisfies an ordinary differential equation of the form
$(\lambda_1\tilde u+\lambda_2)\tilde A_{\tilde u}+\lambda_3\tilde A=0$,
where $(\lambda_1,\lambda_2)\ne0$.
Solving this ordinary differential equation and using the $G^{\Equiv}$-equivalence,
we conclude that either $\tilde A=\tilde u^\mu$ or $\tilde A=e^{\tilde u}$ modulo $G^{\Equiv}$.
Therefore, also $A\in\{u^\mu,\;e^u\}\bmod G^{\Equiv}$.
\end{proof}

\begin{lemma}
$U^0=0$ if $A=u^\mu$ and $\tilde A=\tilde u^\mu$, where $\mu\ne0$.
$U^1=1$ if $A=e^u$ and $\tilde A=e^{\tilde u}$.
\end{lemma}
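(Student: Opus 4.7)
The plan is to exploit the first relation in~\eqref{(iv)}, namely $A(u)=K\,\tilde A(U^1u+U^0)$, where $K=(T_t/X_x^2)(f/g)(\tilde g/\tilde f)$ is a function of $(t,x)$ alone. Since $T_u=X_u=0$ and $\tilde u=U^1u+U^0$, the relation must hold identically in $u$, and this identity can be split with respect to $u$ to extract the desired constraints on $U^0$ and $U^1$.

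First I would treat the power case $A=u^\mu$, $\tilde A=\tilde u^\mu$ with $\mu\ne0$. The relation becomes
\[
u^\mu=K\,(U^1u+U^0)^\mu.
\]
Differentiating with respect to $u$ yields $u^{\mu-1}=KU^1(U^1u+U^0)^{\mu-1}$, and dividing this by the original relation gives $U^1u+U^0=U^1u$, hence $U^0=0$ (using $\mu\ne0$ to justify division). Equivalently, one can observe that $K=\bigl(u/(U^1u+U^0)\bigr)^\mu$ must be independent of $u$, which forces $U^0=0$.

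Next I would handle the exponential case $A=e^u$, $\tilde A=e^{\tilde u}$, where the relation becomes
\[
e^u=K\,e^{U^1u+U^0},\qquad\text{i.e.,}\qquad u=\ln K+U^1u+U^0.
\]
Splitting this with respect to $u$ gives $U^1=1$ (and as a by-product $U^0=-\ln K$, so $U^0$ is determined by~$K$ but not further constrained on its own).

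There is no real obstacle: the only point to be careful about is that both sides of $A(u)=K\tilde A(U^1u+U^0)$ are genuine identities in the independent variable $u$ (since $U_{uu}=0$ and $K_u=0$), so the splitting is legitimate. The restriction $\mu\ne0$ in the power case is used exactly once, to ensure that the power $(\,\cdot\,)^\mu$ is injective in its argument.
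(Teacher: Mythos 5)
Your proof is correct and follows exactly the route the paper intends: the paper states only that the lemma ``directly follows from the first equation of~\eqref{(iv)}'', and your splitting of $A(u)=K\,\tilde A(U^1u+U^0)$ with respect to $u$ in the power and exponential cases is precisely that argument, carried out explicitly.
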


The proof directly follows from the first equation of~\eqref{(iv)}.
The proof of the next lemma is more complicated and will be presented in the second part of the series.

\begin{lemma}
The arbitrary elements~$B$ and~$\tilde B$ of similar equations from class~\eqref{eqDKfgh} are related by the formula
$\varepsilon_3(B(u)+\varepsilon_4A(u)+\varepsilon_5)=\tilde B(\delta_3 u+\delta_4)$,

\end{lemma}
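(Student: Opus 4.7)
The plan is to exploit the coefficient of $u_x$ in identity~\eqref{(iii)}, which is the piece of information not yet consumed by the earlier lemmas (whose proofs matched the coefficients of $u_{xx}$ and $u_x^2$ via~\eqref{(iv)}). Using the already established forms $\tilde t=T(t)$, $\tilde x=X(t,x)$, $\tilde u=U^1(t,x)u+U^0(t,x)$ together with the standard prolongation
\[
\tilde u_{\tilde x}=\frac{U_x+u_xU^1}{X_x}, \qquad \tilde u_{\tilde x\tilde x}=\frac{U_{xx}+u_{xx}U^1+u_xU^1_x}{X_x^2}-\frac{(U_x+u_xU^1)X_{xx}}{X_x^3},
\]
where $U_x=U^1_xu+U^0_x$, I would extract the coefficient of $u_x$ from both sides of~\eqref{(iii)}. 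Substituting $\tilde A(\tilde u)=K^{-1}A(u)$ from~\eqref{(iv)} together with the $u$-derivative $\tilde A_{\tilde u}(\tilde u)=A_u(u)/(KU^1)$ eliminates $\tilde A$ and $\tilde A_{\tilde u}$ from the right-hand side; after clearing denominators the identity takes the shape
\[
\tilde B(\tilde u)=\alpha(t,x)B(u)+\beta(t,x)A(u)+\gamma_1(t,x)\,uA_u(u)+\gamma_2(t,x)A_u(u)+\delta(t,x),
\]
where $\alpha=\tilde fX_xU^1h/(T_tf\tilde h)$, $\gamma_1\propto U^1_x$, $\gamma_2\propto U^0_x$, and $\beta,\delta$ are further explicit $(t,x)$-functions.

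The second step is to dispose of the $A_u$-terms. By the preceding lemma, $(U_t,U_x)\ne(0,0)$ can occur only in the two special branches $A=u^\mu$ or $A=e^u$ (mod $G^{\sim}$). In the generic case $U^1$ and $U^0$ are constants, so $\gamma_1=\gamma_2=0$ outright. In the power case $A=u^\mu$ we have $U^0\equiv 0$, hence $\gamma_2=0$, and the relation $uA_u=\mu A$ absorbs the $\gamma_1$-term into the $A$-coefficient. In the exponential case $A=e^u$ we have $U^1\equiv 1$, hence $\gamma_1=0$, and the relation $A_u=A$ absorbs the $\gamma_2$-term into the $A$-coefficient. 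In each case the identity collapses to
\[
\tilde B(\tilde u)=\alpha(t,x)B(u)+\beta'(t,x)A(u)+\delta(t,x).
\]

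For the third step one argues that $\alpha,\beta',\delta$ are in fact constants. Differentiating the reduced relation in $x$ at fixed $u$ yields $\tilde B_{\tilde u}(\tilde u)(U^1_xu+U^0_x)=\alpha_xB+\beta'_xA+\delta_x$, while a $u$-derivative of the same relation yields $\tilde B_{\tilde u}(\tilde u)U^1=\alpha B_u+\beta'A_u$. Eliminating $\tilde B_{\tilde u}$ between these and splitting the resulting polynomial identity in $u$ over the linearly independent functions $1,A(u),B(u),uA_u(u),A_u(u),uB_u(u),B_u(u)$ (for generic arbitrary elements) forces the spatial derivatives of $\alpha,\beta',\delta$ to vanish, and an analogous argument in $t$ does the rest. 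Writing the resulting constants as $\varepsilon_3$, $\varepsilon_3\varepsilon_4$, $\varepsilon_3\varepsilon_5$ with $\varepsilon_3\ne 0$ (necessary so that $\tilde B$ really does change if $B$ does), and reading the affine form $\tilde u=\delta_3u+\delta_4$ off the established expressions for $U^1$ and $U^0$ (composing, where needed, with the gauge transformation~\eqref{GaugeEquivTransformationsDKfgh} to absorb residual $(t,x)$-dependence into $A$), one obtains the asserted relation.

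The main obstacle will be the non-generic branches $A\in\{u^\mu,e^u\}$, where $U^1$ or $U^0$ may legitimately depend on $(t,x)$: at fixed $u$ the argument $\tilde u$ is no longer constant and the linear-independence splitting becomes delicate, because several of the $u$-functions listed above degenerate (e.g.\ $uA_u$ and $A$ coincide up to a factor for $A=u^\mu$). One must either track carefully how this residual $(t,x)$-dependence of $\tilde u$ enters the split, or first normalize away $U^1_x$ or $U^0_x$ by a preliminary $\hat G^{\sim}$-equivalence transformation so that the constant-coefficient argument of the previous paragraph applies unchanged.
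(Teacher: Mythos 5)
The paper itself contains no proof of this lemma: the authors explicitly state that it ``is more complicated and will be presented in the second part of the series.'' So your proposal can only be measured against the problem, not against a reference argument. Your overall strategy is the natural one and is correct in outline: the coefficient of $u_x$ in~\eqref{(iii)} is indeed the unexploited relation, and after inserting $\tilde A=K^{-1}A$ and $\tilde A_{\tilde u}=A_u/(KU^1)$ from~\eqref{(iv)} it does collapse to $\tilde B(\tilde u)=\alpha B+\beta A+\gamma_1\,uA_u+\gamma_2A_u+\delta$ with $\gamma_1\propto U^1_x$, $\gamma_2\propto U^0_x$ (modulo minor bookkeeping slips, e.g.\ the numerator of $\tilde u_{\tilde x\tilde x}$ carries $2u_xU^1_x$ rather than $u_xU^1_x$, and $U^1$ cancels out of $\alpha$). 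The three-way absorption of the $A_u$-terms via the preceding lemmas ($U$ constant in the generic branch; $uA_u=\mu A$ with $U^0=0$ for $A=u^\mu$; $A_u=A$ with $U^1=1$ for $A=e^u$) is also sound.

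The genuine gap is the one you flag in your last paragraph, and the repair you propose there cannot work. Every transformation of $\hat G^{\sim}$ --- and of the gauge subgroup~\eqref{GaugeEquivTransformationsDKfgh}, which does not touch $u$ at all --- acts on $u$ only by a \emph{constant} affine map $\tilde u=\delta_3u+\delta_4$, so no preliminary equivalence transformation can normalize away an $x$- or $t$-dependent $U^1$ or $U^0$. Such dependence genuinely occurs among admissible transformations: see the additional equivalences listed after table~3, e.g.\ $\tilde u=e^xu$ mapping case~3.\ref{AumB0f1h}f to case~3.\ref{AumB0f1h}a, or $\tilde u=|x|^{-1/(1+\mu)}u$ mapping case~3.\ref{AumB0f1h}d to case~3.\ref{AumB0f1h}a. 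These live precisely in the branches $A\in\{u^\mu,e^u\}$ where the functions $1$, $A$, $B$, $uB_u$, $B_u$ degenerate for the relevant values of $B$, so your linear-independence splitting does not force $\alpha_x=\beta'_x=\delta_x=0$, and constancy of the coefficients does not follow from the $u_x$-coefficient of~\eqref{(iii)} alone. The missing ingredient is the remaining, $u_x$-free part of~\eqref{(iii)} (equivalently, the requirement that the image equation again have the separated form $h(\tilde x)B(\tilde u)\tilde u_{\tilde x}$ demanded by class~\eqref{eqDKfgh}): it yields a second functional equation tying $\tilde B$ to $U^1_x$ and $U^0_x$, and it is the interplay of the two relations that pins down the exceptional branches. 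Until that is carried out, the lemma is established only for the generic branch with constant $U$, which is presumably why the authors defer the full proof to Part~II.
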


\section{Conclusion}\label{SectionConclusion}

The present paper is the first part of a series of works on investigation
of variable-coefficient diffusion--convection equations~\eqref{eqDKfgh}
in the framework of modern group analysis of differential equations.
After discussing precise definitions, the algorithm of group classification and its modifications
in the general case, we have carried out the extended group classification of class~\eqref{eqDKfgh}.
The success in the classification and the clear presentation of the final result
have been achieved by the methodical applications of four original tools.
Namely, involving the equivalence relation with respect to the extended equivalence group instead of the usual one,
the choice of appropriate gauges, furcate split and
systematic usage of additional equivalences, based on preliminary description of admissible transformations.
Two of them (the extended equivalence group and appropriate gauges) are of crucial importance for
obtaining a closed and explicit classification list.
The extended equivalence group of class~\eqref{eqDKfgh} is the extension of the usual one
with the non-trivial group of gauge equivalence transformations including transformations which
are nonlocal in arbitrary elements.
Neglecting these transformations leads to critical swelling and complication of both calculations and results.
Moreover, under the presence of nonlocal gauge transformations, the group classification with respect to
the usual equivalence group cannot be done in the best way, since the corresponding classification list
necessarily contains different cases which are in fact associated with the same equation.
Class~\eqref{eqDKfgh} has been exhaustively classified for the two ``best'' gauges $g=1$ and $g=h$
and then the comparative analysis of them has been comprehensively made.
However, for future applications it is more suitable to use the variable gauge
when the value of $g$ (1 or $h$) depends on values of other arbitrary elements.
The classification under the variable gauge will be presented
in the second part~\cite{Ivanova&Popovych&Sophocleous2006Part2}.

Results adduced above can be developed to several directions.
In particular, the set of admissible transformations in class~\eqref{eqDKfgh} could be
studied more profoundly or even exhaustively described.
The group classification can be used for the construction of exact invariant solutions
of equations from class~\eqref{eqDKfgh} in different ways.
One of them involves additional equivalences.
In~\cite{Popovych&Ivanova2004NVCDCEs} known exact solutions of ``constant coefficient''
diffusion--convection equations were transformed by additional equivalence transformations
to new solutions of equations~\eqref{eqDKfh} with $h=1$ and complicated values of the parameter--function~$f$.
The same trick can be used for the entire class~\eqref{eqDKfgh} since
in view of Theorem~\ref{TheoremOnReducibilityOfEqsWith4DimAlgs}
all equations from this class, possessing four-dimensional Lie invariance algebras,
are reduced by point transformations to constant coefficient ones.
Another way is the standard method of Lie reduction.
In the second part of the series we will investigate
the unique ${\rm sl}(2,\mathbb{R})$-invariant equation~3.\ref{A-65B1fx2hx2}
which is ``essentially variable coefficient'' in the sense that it is not reducible to equations of
form~\eqref{eqDKfgh} with constant values of $f$, $g$ and $h$.

Analyzing Lie symmetries of class~\eqref{eqDKfgh} implies a number of interesting conjectures.
For example, the cases of exponential nonlinearities collected in table~2 can be regarded as limits of
cases with power nonlinearities from table~3.
This observation leads to the important and useful notion of contractions of equations and symmetries,
which also will be introduced, developed and applied in~\cite{Ivanova&Popovych&Sophocleous2006Part2}.
This notion will be naturally generalized to the notion of contractions of conservation laws
in the third part~\cite{Ivanova&Popovych&Sophocleous2006Part3} of the series,
where we will study the local and potential conservation laws of equations~\eqref{eqDKfh}.

\subsection*{Acknowledgements}

Research of NMI was supported by
the Cyprus Research Promotion Foundation (project number $\Pi$PO$\Sigma$E$\Lambda$KY$\Sigma$H/$\Pi$PONE/0308/01),
the Erwin Schr\"odinger Institute for Mathematical Physics (Vienna, Austria) in form of Junior Fellowship
and by the grant of the President of Ukraine for young scientists (project number GP/F11/0061).
Research of ROP was supported by Austrian Science Fund (FWF), Lise Meitner project M923-N13 and project~P20632.
ROP expresses his gratitude to the hospitality shown by University of Cyprus
during his visits to the University.

\newpage


\begin{thebibliography}{99}\itemsep=0ex
\footnotesize

\bibitem{Akhatov&Gazizov&Ibragimov1987}
Akhatov~I.Sh., Gazizov~R.K. and Ibragimov~N.Kh.,
Group classification of equation of nonlinear filtration,
{\it Dokl. AN SSSR}, 1987, {\bf 293}, 1033--1035.

\bibitem{Akhatov&Gazizov&Ibragimov1989}
Akhatov~I.Sh., Gazizov~R.K. and Ibragimov~N.Kh.,
Nonlocal symmetries. A heuristic approach,
{\it Itogi Nauki i Tekhniki, Current problems in mathematics. Newest results}, 1989, {\bf 34}, 3--83
(Russian, translated in {\it J. Soviet Math.}, 1991, {\bf 55}, 1401--1450).

\bibitem{Ibragimov1994V1}
Ames~W.F.; Anderson~R.L., Dorodnitsyn~V.A., Ferapontov~E.V., Gazizov~R.K., Ibragimov~N.H. and Svirshchevskiy~S.R.,
{\it CRC handbook of Lie group analysis of differential equations}, vol.~1. Symmetries, exact solutions and conservation laws,
edited by N.H.~Ibragimov, CRC Press, Boca Raton, FL, 1994.

\bibitem{Barenblatt1952}
Barenblatt~G.I.
On automodel motions of compressible fluid in a porous medium,
{\it Prikl. Mat. Mekh.}, 1952, {\bf 16}, 679--698.

\bibitem{Basarab-Horwath&Lahno&Zhdanov2001}
Basarab-Horwath~P., Lahno~V. and Zhdanov~R.,
The structure of Lie algebras and the classification problem for partial differential equation,
{\it Acta Appl. Math.}, 2001, {\bf 69}, 43--94.

\bibitem{Bluman&Cole1969}
Bluman~G.W. and Cole~J.D.,
The general similarity solution of the heat equation,
{\it J. Math. Mech.}, 1969, {\bf 18}, 1025--1042.

\bibitem{Bluman&Kumei1989}
Bluman~G.W. and Kumei~S., Symmetries and differential equations, Springer, New York, 1989.

\bibitem{Bluman&Reid&Kumei1988}
Bluman~G.W., Reid~G.J. and Kumei~S.,
New classes of symmetries for partial differential equations,
{\it J. Math. Phys.}, 1988, {\bf 29}, 806--811.

\bibitem{Bluman&Yan2005}
Bluman~G.W. and Yan~Z., Nonclassical potential solutions of partial differential equations,
{\it Euro. J. Appl. Math.}, 2005, {\bf 16}, 239--261.

\bibitem{Borovskikh2004}
Borovskikh A.V.,
Group classification of the eikonal equations for a three-dimensional nonhomogeneous medium,
{\it Mat. Sb.}, 2004, {\bf 195}, N~4, 23--64 (in Russian);
translation in {\it Sb. Math.}, 2004, {\bf 195}, N~3--4, 479--520.

\bibitem{Borovskikh2006}
Borovskikh A.V.,
The two-dimensional eikonal equation,
{\it Sibirsk. Mat. Zh.}, 2006, {\bf 47}, 993--1018.

\bibitem{Boyko&Popovych2001}
Boyko~V.M. and Popovych~V.O.,
Group classification of Galilei-invariant higher-orders equations,
{\it Proceedings of Institute of Mathematics of NAS of Ukraine}, 2001, {\bf 36}, 45--50.

\bibitem{Chayes&Osher&Ralston1993}
Chayes J.T., Osher S.J. and Ralston J.V.,
On singular diffusion equations with applications to self-organized criticality,
{\it Comm. Pure Appl. Math.}, 1993, {\bf 46}, 1363--1377.

\bibitem{Cherniga&Serov1998}
Cherniha~R. and Serov~M.
Symmetries, ans\"atze and exact solutions of nonlinear second-order evolution equations with convection terms,
{\it European J. Appl. Math.}, 1998, {\bf 9}, 527--542.

\bibitem{Cheviakov2006}
Cheviakov A.F.,
GeM software package for computation of symmetries and conservation laws of differential equations,
{\it  Comp. Phys. Comm.}, 2007, {\bf 176}, 48--61 (2007)

\bibitem{Dimas&Tsubelis2004}
Dimas S. and Tsoubelis D.,
SYM: A new symmetry -- finding package for Mathematica,
{\it Proceedings of Tenth International Conference in Modern Group Analysis} (Larnaca, Cyprus, 2004), 64--70.

\bibitem{Dorodnitsyn1982}
Dorodnitsyn~V.A.
On invariant solutions of non-linear heat equation with a sourse,
{\it Zhurn. Vych. Matemat. Matemat. Fiziki}, 1982, {\bf 22}, 1393--1400 (in Russian).

\bibitem{Edwards1994}
Edwards~M.P.
Classical symmetry reductions of nonlinear diffusion--convection equations,
{\it Phys. Lett.~A.}, 1994, {\bf 190}, 149--154.

\bibitem{Fushchych&Tsyfra1987}
Fushchych W.I. and Tsyfra I.M.,
On a reduction and solutions of the nonlinear wave equations with broken symmetry,
{\it J. Phys. A: Math. Gen.}, 1987, {\bf 20}, L45--L48.

\bibitem{Gandarias1996}
Gandarias M.L.,
Classical point symmetries of a porous medium equation,
{\it J. Phys. A: Math. Gen.}, 1996, {\bf 29}, 607--633.

\bibitem{Gandarias2001}
Gandarias M.L.,
New symmetries for a model of fast diffusion,
{\it Phys. Let. A}, 2001, {\bf 286},  153--160.

\bibitem{Gagnon1989c}Gagnon~L. and Winternitz~P.,
Exact solutions of the cubic and quintic non-linear Schrodinger equation for a cylindrical geometry,
{\it Phys. Rev. A}, 1989, {\bf 39}, 296--306.

\bibitem{Gazizov1987}
Gazizov~R.K.,
Contact transformations of equations of the type of nonlinear filtration,
in {\it Physicochemical Hydrodynamics} (Inter-University scientific collection),
Bashkir State University, Ufa, 1987, 38--41.

\bibitem{Gennes1983}
De Gennes P.G.,
Wetting: statics and dynamics,
{\it Rev. Mod. Phys.}, 1985, {\bf 57}, 827--863.

\bibitem{Head1993}
Head A.K.,
LIE, a PC program for Lie analysis of differential equations,
{\it Comput. Phys. Comm.}, 1993, {\bf 77}, 241--248 (see also http://www.cmst.csiro.au/LIE/LIE.htm).

\bibitem{Hereman1994}
Hereman W.,
Review of symbolic software for the computation of Lie symmetries of differential equations,
{\it Euromath Bull.}, 1994, {\bf 1}, 45--82.

\bibitem{Hereman1997}
Hereman W.
Review of symbolic software for Lie symmetry analysis.
Algorithms and software for symbolic analysis of nonlinear systems,
{\it Math. Comput. Modelling}, 1997, {\bf 25}, no.~8--9, 115--132.

\bibitem{Ivanova&Popovych&Eshraghi2005GammaNormalizedSerbia}
Ivanova N.M., Popovych R.O. and Eshraghi H.,
On symmetry properties of nonlinear Schr\"odinger equations with potentials,
{\it Proc. of Third Summer School on Mathematical Physics} (Zlatibor, Serbia and Montenegro, 2004),
in {\it Sveske Fiz. Nauka}, 2005, {\bf 18}(A1), 451--456.

\bibitem{Ivanova&Popovych&Sophocleous2004}
Ivanova~N.M., Popovych~R.O. and Sophocleous~C.,
Conservation laws of variable coefficient diffusion--convection equations,
{\it Proceedings of Tenth International Conference in Modern Group Analysis} (Larnaca, Cyprus, 2004), 107--113.

\bibitem{Ivanova&Popovych&Sophocleous2006Part2}
Ivanova~N.M., Popovych~R.O. and Sophocleous~C.,
Group analysis of variable coefficient diffusion--convection equations. II. Contractions and exact solutions,
2007, arXiv:0710.3049, 19~pp.

\bibitem{Ivanova&Popovych&Sophocleous2006Part3}
Ivanova~N.M., Popovych~R.O. and Sophocleous~C.,
Group analysis of variable coefficient diffusion--convection equations. III. Conservation laws, 
2007, arXiv:0710.3053, 26~pp.

\bibitem{Ivanova&Popovych&Sophocleous2006Part4}
Ivanova~N.M., Popovych~R.O. and Sophocleous~C.,
Group analysis of variable coefficient diffusion--convection equations. IV. Potential symmetries,
2007, arXiv:0710.4251, 14pp.

\bibitem{Ivanova&Popovych&Sophocleous&Vaneeva2009}
Ivanova N.M., Popovych R.O., Sophocleous C. and Vaneeva O.O.,
Conservation laws and potential symmetries for certain evolution equations,
{\it  Physica A}, 2009, {\bf 388}, 343--356, arXiv:0806.1698.

\bibitem{Ivanova&Sophocleous2006}
Ivanova~N.M. and Sophocleous~C.,
On the group classification of variable coefficient nonlinear diffusion--convection equations,
{\it  J. Comput. Appl. Math.}, 2006, {\bf 197}, 322--344.

\bibitem{Katkov1965}
Katkov V.L., Group classification of solutions of Hopf's equations
(Russian), {\it Zh. Prikl. Mekh. Tech. Fiz.}, 1965, {\bf 6}, 105--106.

\bibitem{Kingston&Sophocleous1991}
Kingston J.G. and Sophocleous C.,
On point transformations of a generalised Burgers equation,
{\it Phys. Lett.~A}, 1991, {\bf 155},  15--19.

\bibitem{Kingston&Sophocleous1998}
Kingston J.G. and Sophocleous C.,
On form-preserving point transformations of partial differential equations,
{\it J. Phys. A: Math. Gen.}, 1998, {\bf 31}, 1597--1619.

\bibitem{Kingston&Sophocleous2001}
Kingston J.G. and Sophocleous C.,
Symmetries and form-preserving transformations of one-dimensional wave equations with dissipation,
{\it Int. J. Non-Lin. Mech.}, 2001, {\bf 36}, 987--997.

\bibitem{Kurdyumov&Posashkov&Sinilo1989}
Kurdyumov S.P., Posashkov S.A. and Sinilo A.V., On the invariant
solutions of the heat equation with the coefficient of heat
conduction allowing the widest group of transformations, Preprint
N~110, Moscow, Keldysh Institute of Applied Mathematics of Academy
of Sciences USSR, 1989.

\bibitem{Lahno&Spichak&Stognii2002}
Lahno V.I., Spichak S.V. and Stognii V.I.,
{\it Symmetry analysis of evolution type equations},
Kyiv: Institute of Mathematics of NAS of Ukraine, 2002.

\bibitem{Lie1881}
Lie S., On integration of a Class of Linear Partial Differential Equations by Means of
Definite Integrals, {\it CRC Handbook of Lie Group Analysis of Differential Equations}, vol.~2,
473--508. (Translation by N.H. Ibragimov of Arch. for Math., Bd. VI, Heft 3, 328--368,
Kristiania 1881).

\bibitem{LisleDissertation}
Lisle~I.G.,
Equivalence transformations for classes of differential equations, Thesis,
University of British Columbia, 1992
(http://www.ise.canberra.edu.au/mathstat/StaffPages/LisleDissertation.pdf).
(See also Lisle~I.G. and Reid~G.J., Symmetry classification using invariant moving frames,
{\it ORCCA Technical Report TR-00-08} (University of Western Ontario),
http://www.orcca.on.ca/TechReports/2000/TR-00-08.html.)

\bibitem{Meleshko1994}
Meleshko S.V.,
Group classification of equations of two-dimensional gas motions,
{\it Prikl. Mat. Mekh.}, {\bf 58}, 1994, 56--62 (in Russian); translation in {\it J.~Appl. Math. Mech.}, {\bf 58}, 1994, 629--635.

\bibitem{Nikitin&Popovych2001}
Nikitin~A.G. and Popovych~R.O.,
Group classification of nonlinear Schr\"odinger equations,
{\it Ukr.~Math.~J.}, 2001 {\bf 53}, 1053--1060.

\bibitem{Olver1986}
Olver P.,
{\it Applications of Lie groups to differential equations,}
Springer-Verlag, New York, 1986.

\bibitem{Oron&Rosenau1986}
Oron~A. and Rosenau~P.,
Some symmetries of the nonlinear heat and wave equations,
{\it Phys. Lett.~A}, 1986, {\bf 118}, 172--176.

\bibitem{Ovsiannikov1959}
Ovsiannikov~L.V.,
Group properties of nonlinear heat equation,
{\it Dokl. AN SSSR}, 1959, {\bf 125}, 492--495 (in Russian).

\bibitem{Ovsiannikov1982}
Ovsiannikov~L.V.,
{\it Group analysis of differential equations},
Academic Press, New York, 1982.

\bibitem{Pallikaros&Sophocleous1995}
Pallikaros C. and Sophocleous C.,
On point transformations of generalized nonlinear diffusion equations,
{\it J.~Phys. A: Math. Gen.}, 1995, {\bf 28}, 6459--6465.

\bibitem{Popovych2006}
Popovych R.O.,
Classification of admissible transformations of differential equations,
{\it Collection of Works of Institute of Mathematics}, Kyiv, 2006, {\bf 3}, N 2, 239--254.

\bibitem{Popovych&Cherniha2001}
Popovych~R.O. and Cherniha~R.M.,
Complete classification of Lie symmetries of systems of two-dimensional Laplace equations,
{\it Proceedings of Institute of Mathematics of NAS of Ukraine}, 2001, {\bf 36}, 212--221.

\bibitem{Popovych&Eshraghi2004Norm}
Popovych R.O. and Eshraghi H.,
Admissible point transformations of nonlinear Schr\"odinger equations,
{\it Proceedings of Tenth International Conference in Modern Group Analysis} (Larnaca, Cyprus, 2004),
\mbox{167--174}.

\bibitem{Popovych&Ivanova2004NVCDCEs}
Popovych R.O. and Ivanova N.M.,
New results on group classification of nonlinear diffusion--convection equations,
{\it J. Phys. A: Math. Gen.}, 2004, {\bf 37}, 7547--7565, arXiv:math-ph/0306035.

\bibitem{Popovych&Ivanova&Eshraghi2004Cubic}
Popovych R.O., Ivanova N.M. and Eshraghi H.
Lie Symmetries of (1+1)-dimensional cubic Schr\"odinger equation with potential,
{\it Proc. of Inst. of Math. of NAS of Ukraine}, 2004, {\bf 50}, 219--224, arXiv:math-ph/0310039.

\bibitem{Popovych&Ivanova&Eshraghi2004Gamma}
Popovych R.O., Ivanova N.M. and Eshraghi H.
Group classification of (1+1)-dimensional Schr\"odinger equations with potentials and power nonlinearities,
{\it J. Math.Phys.}, 2004, {\bf 45}, 3049--3057, arXiv:math-ph/0311039.

\bibitem{Popovych&Ivanova2003PETs}
Popovych R.O. and Ivanova N.M.,
Potential equivalence transformations for nonlinear diffusion--convection equations,
{\it J. Phys. A: Math. Gen.}, 2005, {\bf 38}, 3145--3155, arXiv:math-ph/0402066.

\bibitem{Popovych&Kunzinger&Eshraghi2006}
Popovych R.O., Kunzinger M. and Eshraghi H.,
Admissible point transformations and normalized classes of nonlinear Schr\"odinger equations,
{\it Acta Appl. Math.}, 2010, {\bf 109}, 315--359.

\bibitem{Popovych&Vaneeva&Ivanova2005}
Popovych R.O., Vaneeva O.O. and Ivanova N.M.,
Potential nonclassical symmetries and solutions of fast diffusion equation,
{\it Phys. Lett. A}, 2007, {\bf 362}, 166--173, arXiv:math-ph/0506067.

\bibitem{Richard's1931}
Richard's L.A.,
Capillary conduction of liquids through porous mediums,
{\it Physics}, 1931, {\bf 1}, 318--333.

\bibitem{Sophocleous1996}
Sophocleous~C.,
Potential symmetries of nonlinear diffusion-convection equations,
{\it J. Phys. A: Math. Gen.}, 1996, {\bf 29}, 6951--6959.

\bibitem{Sophocleous2000}
Sophocleous~C.,
Potential symmetries of inhomogeneous nonlinear diffusion equations,
{\it Bull. Austral. Math. Soc.}, 2000, {\bf 61}, 507--521.

\bibitem{Sophocleous2003}
Sophocleous~C.,
Classification of potential symmetries of generalised inhomogeneous nonlinear diffusion equations,
{\it Physica A}, 2003 {\bf 320}, 169--183.

\bibitem{Vaneeva&Johnpillai&Popovych&Sophocleous2006}
Vaneeva O.O., Johnpillai A.G., Popovych R.O. and Sophocleous~C.,
Enhanced group analysis and conservation laws of variable coefficient reaction-diffusion equations with power nonlinearities,
{\it J. Math. Anal. Appl.}, 2007, {\bf 330}, 1363--1386, arXiv:math-ph/0605081.

\bibitem{Vaneeva&Popovych&Sophocleous2009}
Vaneeva O.O., Popovych R.O. and Sophocleous C.,
Enhanced group analysis and exact solutions of variable coefficient semilinear diffusion equations with a power source,
{\it Acta Appl. Math.}, 2009, {\bf 106}, 1--46, arXiv:0708.3457.

\bibitem{Vasilenko&Yehorchenko2001}
Vasilenko~O.F. and Yehorchenko~I.A.,
Group classification of multidimensional nonlinear wave equations,
{\it Proceedings of Institute of Mathematics of NAS of Ukraine}, 2001, V.36 63--66.

\bibitem{Wittkopf2004PhD}
Wittkopf A.D.,
{\it Algorithms and Implementations for Differential Elimination}, Ph.D. thesis, Simon Fraser University, 2004.

\bibitem{Wolf1992}
Wolf T.,
An efficiency improved program LIEPDE for determining Lie-symmetries of PDEs,
{\it Proc. of Modern Group Analysis}, Catania, Italy, Oct. 1992, Kluwer Acad.Publ., 1993, 377--385.

\bibitem{Yung&Verburg&Baveye1994}
Yung C.M., Verburg K. and Baveye P.,
Group classification and symmetry reductions of the non-linear diffusion--convection equation $u_t = (D(u) u_x)_x - K'(u) u_x$,
{\it Int. J. Non-Lin. Mech.}, {\bf 29}, 1994, 273--278.

\bibitem{Zhdanov&Lahno1999}
Zhdanov~R.Z. and Lahno~V.I.,
Group classification of heat conductivity equations with a nonlinear  source,
{\it J.~Phys. A: Math. Gen.}, 1999, {\bf 32}, 7405--7418.

\bibitem{Zhdanov&Tsyfra&Popovych1999}
Zhdanov R.Z., Tsyfra I.M. and Popovych R.O.,
A precise definition of reduction of partial differential equations,
{\it J. Math. Anal. Appl.}, 1999, {\bf 238}, 101--123, arXiv:math-ph/0207023.

\end{thebibliography}
\end{document}